\keywords{Architecture modelling. Parametric component-based systems. First-order extended interaction logic. Ordered interactions. Recursive interactions. Weighted extended interaction logics.}
\theoremstyle{plain} 
\setlist{itemsep=0.2cm}
\newcommand{\B}{{\mathcal{B}}}
\newcommand{\e}{{\tilde{\varphi}}}
\newcommand{\x}{{\tilde{\xi}}}
\newcommand{\ps}{{\tilde{\psi}}}
\begin{document}

\title[Modelling of parametric architectures]{Architectures in parametric component-based systems: \\ Qualitative and quantitative modelling\rsuper*}
\titlecomment{{\lsuper*}A preliminary version of this paper appeared in \cite{Pi:Ar}.}

\author[M.~Pittou]{Maria Pittou$^1$}	
\thanks{$^1$\protect\includegraphics[height=0.3cm]{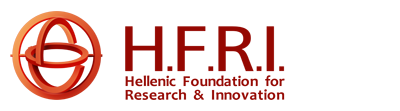}The research work was supported by the Hellenic Foundation for Research and Innovation (HFRI) under the HFRI PhD Fellowship grant (Fellowship Number: 1471).}	

\author[G.~Rahonis]{George Rahonis}	
\address{Department of Mathematics\\Aristotle University of Thessaloniki\\54124 Thessaloniki, Greece}	
\email{\{mpittou,grahonis\}@math.auth.gr}  





\begin{abstract}
  \noindent One of the key aspects in component-based design is specifying the software architecture
that characterizes the topology and the permissible interactions of the components of a system. To achieve well-founded design there is need to address both the qualitative and non-functional aspects of architectures. In this paper we study the qualitative and quantitative formal modelling of architectures applied on parametric component-based systems, that consist of an unknown number of instances
of each component. Specifically, we introduce an extended propositional interaction logic and
investigate its first-order level which serves as a formal language for the interactions of parametric systems. Our logics achieve to encode the execution order of interactions, which is a main feature in several important architectures, as well as to model recursive interactions. Moreover, we prove the decidability of equivalence, satisfiability, and validity of first-order extended interaction logic formulas, and provide several examples of formulas describing well-known architectures. We show the robustness of our theory by effectively extending our results for parametric weighted architectures. 
 For this, we study the weighted counterparts of our logics over a commutative semiring, and we apply them for modelling the quantitative aspects of concrete architectures. Finally, we prove that the equivalence problem of weighted first-order extended interaction logic formulas is decidable in a large class of semirings, namely the class (of subsemirings) of skew fields.
\end{abstract}

\maketitle


\section{Introduction}
Well-founded design is a key principle for complex systems in order to guarantee correctness and performance.
Rigorous formalisms in systems engineering are mainly component-based that allow reconfigurability and validation \cite{Bl:Al}.
Component-based design lies in constructing multiple
components which coordinate through their interfaces in order to generate the global model for a system \cite{Gi:Re}. 
In such a setting, one of the main issues in the modelling process is defining the communication patterns of systems.
Coordination principles among components can be specified by means of architectures, which 
characterize the permissible interactions and their implementation order, as well as the underlying topology of the systems \cite{Ma:Co,Ma:Sp}. 
Architectures have been proved important in the modelling of systems since they enforce design rules on the components, and hence
ensure correctness by construction with respect to basic properties such as mutual exclusion \cite{Bl:Ve,Bo:Ch,Ki:An,Ma:Co}. 

The formal modelling of architectures is a well-known problem of component-based systems and has been investigated with versatile approaches.
Some existing frameworks focus on modelling only the components' connections \cite{Ma:Co,Pa:On}, while other integrate the specification of components' behavior and communication in order  to express requirements \cite{Bo:Ch,Ma:Sp}. In some cases the proposed formal approaches are further supported by graphical languages for characterizing architectures or tools for the verification of component-based systems and their architectures (cf. \cite{Ki:An,Ko:Pa,Ma:Ar,Me:Cl}).  
Although, there is several work on the qualitative modelling of architectures, this has not been the case for their quantitative aspects. 
In this paper we propose a formal approach for the characterization of architectures 
where components' communication is expressed without incorporating their behavior. 
Moreover, our framework addresses not only the qualitative but also the weighted setting for specifying architectures, which is the main novelty of our contribution. The consideration of the quantitative properties of architectures is crucial for the efficient design of component-based systems.
 Cross-cutting concerns involve timing constraints, available resources, energy consumption, probabilities, etc., for implementing the
communication among the systems' components. Consequently, in order to model such optimization issues
there is need to study architectures in the weighted setup \cite{An:Pa,Be:Pa,Pa:On}.

The current paper introduces a generic framework for the qualitative and quantitative modelling of architectures applied to a wide class of large systems,
and specifically to parametric systems. Parametric systems are component-based systems 
found in several applications including communication protocols, concurrent processes, and distributed algorithms \cite{Ab:Pa,Bl:De,De:Pa}. 
Parametric systems are constructed by a finite number of component types whose number of 
instances is unknown, i.e., is a parameter for the system. Hence, specifying the architecture of parametric systems
is challenging since their components' number is not known in advance, 
affecting in turn the interactions that define their communication. On the other hand, efficient 
characterization of architectures is important in parametric systems' design for identifying decidable classes of their verification problem, that
is undecidable in general if unbounded data is exchanged \cite{Am:Pa,Bl:De,Es:Po}. 

The formal characterization of parametric architectures proposed in this work is logic-based.
In particular, we firstly introduce and investigate a first-order extended interaction logic, namely FOEIL, that models efficiently the qualitative characteristics of parametric architectures. In turn we consider its weighted counterpart, weighted FOEIL over a commutative semiring $K$, in order to address the corresponding non-functional aspects.
In contrast to existing logics for modelling parametric architectures \cite{Bo:Ch,Bo:St,Ko:Pa,Ma:Co}, our FOEIL achieves to encode the execution order of the interactions imposed by the corresponding architecture as well as to describe recursive interactions.
Specifically, several parametric architectures found in applications,
including Publish/Subscribe and Request/Response, impose restrictions on the order of
the permissible interactions \cite{Da:Se,Zh:To}. 
For instance, in a Publish/Subscribe architecture a subscriber cannot receive a message from a topic of its interest if beforehand 
a publisher has not transferred the message to the particular topic. On the other hand, distinct
subscribers may express their interest to the same topic in any order. Our logic is proved sufficient enough to 
model such order restrictions. Moreover, we allow recursion in the components' interactions which in turn implies that
we can characterize the subsequent implementation of an architecture during a parametric system's operation. Then, our weighted FOEIL maintains the qualitative attributes of FOEIL, and also models the quantitative properties of parametric architectures, such as the total cost of the interactions or the probability of the implementation of concrete interactions.

In our setting we model 
components with the standard formalism
of labelled transitions systems (cf. \cite{Al:Th,Am:RE,Bl:Al}). 
Then, communication of components is performed by their associated set of labels, called ports. In turn, architectures are modelled by FOEIL
formulas encoding the order and recursion of the respective allowed interactions, which are defined by nonempty sets of ports.
The presented logic-based modelling framework of architectures does not require the knowledge of
the actual transitions of the components since it is constructed on the corresponding set of ports of the given system. 
As a result our theory does not consider the components' behavior and can be applied to every
component-based framework where the system's interface is described by a set of ports. For the weighted setup
we associate each port with a weight that represents the `cost' of its participation
in an interaction. The weights of the ports range over a
commutative semiring $K$ and we formalize the quantitative aspects of parametric architectures by weighted
FOEIL formulas over $K$. In particular, the contributions of the current paper are the following:

(1) We introduce \emph{Extended Propositional Interaction Logic} (EPIL for short) over a finite set of ports,
which augments PIL from \cite{Ma:Co} with three operators namely the concatenation operator $*$, the shuffle operator $\shuffle$, and the 
iteration operator $^{+}$. In contrast to classical PIL,
where the satisfaction relation is checked against interactions (nonempty sets of ports),
the formulas of EPIL are interpreted over finite words whose letters are interactions over the given set of ports. Intuitively, the semantics of concatenation
operator specifies consecutive interactions while the semantics of shuffle operator encodes interleaving, i.e., all possible orders for the execution of permissible interactions in
the system. Moreover, the iteration operator serves for modelling recursive interactions in the architecture. We apply EPIL formulas for formalizing the architectures of component-based systems with ordered interactions,
and specifically we present three examples with 
the architectures \emph{Blackboard} \cite{Co:Bl}, \emph{Request/Response} \cite{Da:Se}, and \emph{Publish/Subscribe} \cite{Eg:Pu}.

(2) We introduce  the first-order level of EPIL, namely \emph{First-Order Extended Interaction Logic} (FOEIL for short),
 as a modelling language for the architectures of parametric systems. The syntax of FOEIL is equipped with the syntax of EPIL, the common existential and universal quantifiers,
 and four new quantifiers, namely existential and universal concatenation
and shuffle quantifiers. The new quantifiers achieve to
encode the partial and whole participation of component instances in sequential and interleaved interactions of parametric architectures. 
For the semantics of FOEIL we consider triples consisting of 
a mapping defining the number of component instances in the parametric system,
an assignment that attributes a
unique identifier to the ports of each component instance, and a finite word of
interactions.

(3) We show the expressiveness of FOEIL by providing several examples
for architectures of parametric component-based systems.
 In particular we consider the architectures \emph{Blackboard}, \emph{Request/Response} and \emph{Publish/Subscribe}
 which impose orders on the implementation of their interactions, as well as \emph{Master/Slave} \cite{Ma:Co}, \emph{Star} \cite{Ma:Co}, \emph{Repository}
 \cite{Cl:Do}, and \emph{Pipes/Filters} \cite{Ga:An}  whose interactions may be executed in arbitrary order. 

(4) We study decidability results related with FOEIL. For this, we firstly 
state an effective translation of FOEIL formulas to  finite automata. The best case run time of our translation
 algorithm is polynomial and the worst case is exponential. In turn, we obtain the decidability of equivalence and validity
 for FOEIL sentences in doubly exponential time, and the decidability of satisfiability for FOEIL sentences in exponential time.

(5) We introduce \emph{weighted Extended Propositional Interaction Logic} (wEPIL for short) over a finite set of ports and a commutative semiring $K$ for representing the weights. wEPIL extends \emph{weighted Propositional Interaction Logic}
 (wPIL for short) from \cite{Pa:On,Pa:We} with three new weighted operators, namely the weighted concatenation
 operator $\odot$, the weighted shuffle operator $ \varpi$, and the weighted iteration operator $^{\oplus}$.
 Intuitively these operators allow to encode the weight of consecutive, interleaving, and recursive
 interactions in weighted component-based systems, respectively. We interpret wEPIL formulas as series defined over
 finite words and $K$. In turn, we study the first-order level of wEPIL, namely \emph{weighted First-Order Extended Interaction Logic} (wFOEIL for short) 
over $K$ for modelling parametric weighted architectures. wFOEIL is equipped with the corresponding operators and quantifiers of FOEIL in the weighted setup, and is interpreted against series over $K$.

(6) We provide examples which show that wEPIL and wFOEIL serve sufficiently for modelling the quantitative aspects of architectures with 
ordered and recursive interactions. Then, for different instantiations of the semiring $K$ in our examples we derive alternative interpretations
for the resulting cost of the allowed interactions, that corresponds to some quantitative characteristic. Similarly to the unweighted setting, we show that we can also apply wFOEIL, and hence wEPIL, for expressing weighted architectures whose interactions may be executed in arbitrary order. 

(7) Finally, we establish a translation of wFOEIL formulas to weighted automata. We show that the worst case run time for our translation algorithm is doubly exponential and the best case run time is exponential. Then, we prove
 the decidability of equivalence of wFOEIL sentences in doubly exponential
time over a large class of semirings, namely (subsemirings of) skew fields. Therefore, the complexity remains
 the same with the one for the decidability of equivalence of FOEIL sentences, which depicts the robustness of our theory.

A preliminary version of this paper appeared in \cite{Pi:Ar}. The present version extends the work of \cite{Pi:Ar} as follows:

- We equip EPIL and FOEIL with the iteration operator $^{+}$, and hence we model the repetition of the interactions in parametric architectures. Therefore, the overall results in the paper have been modified accordingly to incorporate recursion. 

- We present extensive descriptions for the examples considered in \cite{Pi:Ar} and we
provide several new examples of architectures modelled by FOEIL sentences. Moreover, we present the detailed proofs for the decidability results of FOEIL.

-We introduce a whole new section that actually comprises half of the current contribution, and addresses the quantitative formal modelling of parametric architectures. 
In particular, we provide the weighted counterparts of EPIL and FOEIL, we apply them on concrete weighted architectures, and we investigate the corresponding decidability results. 

The structure of the paper is as follows. In Section $2$ we discuss related work and in Section $3$ 
we recall the basic notions for component-based systems and interactions. Then, in Section $4$ 
we introduce the syntax and semantics of EPIL and present examples of architectures defined
 by EPIL formulas. In Section $5$ we introduce the syntax and semantics of our FOEIL and provide
examples of FOEIL sentences describing concrete parametric architectures. 
Section $6$ deals with the decidability results for FOEIL sentences. Then, in Section $7$, we introduce and study wEPIL and wFOEIL over a commutative semiring. Finally, in Conclusion, we present open problems and future work.

\section{Related work}
\label{sec_rel}
Existing work in parametric architectures has investigated their formal modelling mainly in the qualitative setting. Among the several instantiations of the problem is
included the study of static \cite{Bo:Ch, Ko:Pa} or dynamic architectures \cite{Bo:Mo,Ci:Fo,Ma:Sp}, of architectures with data \cite{Bl:Ve,Hu:Fo} as well as of the architectures' composition problem \cite{At:Ge,Bl:Ve}. On the other hand, some work focus on the modelling of architectures without considering the underlying behavior of the components \cite{Ko:Pa,Ma:Co,Pa:On}, while other deal with the specification both of components' behavior and communication \cite{Bo:Ch,Ma:Sp}. In some cases the proposed formal approaches are supported by graphical languages or tools for characterizing and verifying architectures (cf. \cite{Ki:An,Ko:Pa,Ma:Ar,Me:Cl}).  
In the sequel we present an extensive description of some work, in the qualitative and in the weighted setup, that is closely related with our framework. Our modelling approach differs from each of the work discussed below in at least one of following directions: 1) it is logic-based, 2) it 
deals with ordering constraints and recursion in the parametric communication, 3) it addresses the quantitative modelling of parametric architectures that 
has not been considered in most of the existing frameworks. 

In \cite{Ma:Co} the authors introduced a Propositional Configuration Logic (PCL for short) as a modelling language for the description of architectures. A configuration was defined as a set of interactions over a given set of ports. PCL formulas were obtained by the formulas of PIL using the operations of union, intersection and complementation, as well as a new operator, called coalescing operator, for expressing combination of configuration sets. PCL was interpreted over configurations and the authors proved the decidability of equivalence of its formulas in an automated way using the Maude language. In \cite{Ma:Co} it was also studied the first- and second-order configuration logic for describing parametric architectures (called styles of architectures in that paper). The first-order level of PCL was applied for modelling Master/Slave, Star, Pipes/Filters, Repository, Blackboard, and Request/Response architectures, while the second-order level of PCL encoded Ring, Grid, and Linear architectures. In the subsequent work \cite{Ma:Ar} the authors supported their framework with a graphical language, based on architecture diagrams, for describing architecture styles. Both our FOEIL and the first-order of PCL can express sufficiently the same parametric architectures, while the study of second-order level of EPIL is work in progress. Our contribution with respect to the work of \cite{Ma:Co} is that our logics achieve to model the order of implementing components' interactions, that cannot be captured by PCL. We further clarify this by Remark \ref{remark} presented in Section \ref{examples_EPIL} that discusses an example on a Blackboard architecture. Another difference is that we also consider recursion in architectures, and mainly that we investigate the quantitative modelling of parametric architectures that has not been addressed in \cite{Ma:Co}. 

In \cite{Ko:Pa} the first-order level of
propositional interaction logic with arithmetics, namely FOIL, was introduced to describe finitely many interactions for parametric systems
in BIP (Behavior-Interaction-Priority) framework (cf. \cite{Bl:Al}). The syntax of FOIL was built on a set of ports (of the parametric components), the common disjunction and conjunction operators, the existential and universal quantifiers, 
and formulas in Presburger arithmetic. FOIL formulas modelled the permissible interactions of parametric architectures as sets. In \cite{Ko:Pa} the authors provided also a framework and a prototype with FOIL templates for identifying the architecture of a random parametric BIP model, which was
then associated  with a known verification method  for the particular type of communication. 
FOIL served sufficiently for expressing several classical architectures including token-passing rings, rendezvous cliques, broadcast
cliques, and rendezvous stars. However, in contrast to our logic, FOIL cannot encode order restrictions imposed on the execution of interactions since its formulas return sets of interactions. For this, FOIL was not applied for modelling complex architectures such as Request/Response or Publish/Subscribe. On the other hand, the use of arithmetic constraints in FOIL formulas allowed the logic to describe token-passing ring architectures, that in our setting could be expressed by the second-order level of EPIL.  

In \cite{Bo:Ch} the authors introduced Monadic Interaction Logic (MIL for short)
as an alternative logic for 
interactions of parametric systems.
MIL syntax was equipped with first-order variables referring to instances of the parametric components, equality checking of variables, 
 state and port symbols, the disjunction and negation operations, as well as with the common existential quantifier of first order variables. Although, MIL achieved to encode rendezvous and broadcast communication in parametric systems, it was not considered for modelling
architectures since it did not include cardinality constraints. The authors applied in turn MIL for the development of an automated method for detecting deadlocks in parametric systems. In the same line,  in \cite{Bo:St} the authors and introduced Interaction Logic with One Successor (IL1S for short) 
for describing rendezvous and broadcast communication as well as several architectures
in parametric systems. In particular, IL1S preserved the syntax and semantics of MIL with the difference that it also involved a cyclic modulo-type successor function, and hence achieved to model
 architectures of ring, linear, and pipeline
type, as well as architectures with tree-like structures. 
IL1S was proved to be decidable and was used to define parametric invariants for checking correctness of safety properties in parametric systems. 

Our work is positioned with respect to the logics presented in \cite{Bo:Ch,Bo:St} as follows. FOEIL achieves to model architectures in contrast to MIL that described 
only the communication type among interactions. On the other hand, IL1S described architectures that in our framework can be expressed with the second-order level of
EPIL, which we study in a forthcoming paper.  However, the work of \cite{Bo:St} did not include the modelling of more involved architectures such as Request/Response or Publish/Subscribe. Moreover, the semantics of MIL and IL1S formulas did not consider recursive interactions and their execution order imposed by each architecture, which is one of our main contributions. Extending our results
for the verification of parametric systems against safety properties, following a methodology similar to the work of \cite{Bo:Ch,Bo:St}, is left as future work.

BIP is a modelling framework that supports the rigorous design of behavior and coordination of component-based systems. One of the main features in BIP framework is `priorities among interactions' in a component-based system (cf. for instance \cite{Bl:Al}). A priority system is determined by a strict partial order $\prec$ among the set of permitted interactions. Hence  if $a \prec a'$ for two interactions $a$ and $a'$, then  $a'$ must be implemented before $a$ since it has higher priority. It should be clear that the priority system of BIP cannot describe the required orders of interactions in complex architectures. Specifically, the set of strings of interactions satisfying an EPIL sentence containing a shuffle operator, cannot be obtained by any strict partial order among the set of interactions.

Hennessy and Milner introduced in 1985 (cf. \cite{He:Al}) a logic, called HML, as a calculus for the specification of concurrent programs and their properties. In \cite{Fr:Mo} the authors investigated $\mu$HML, i.e., HML with recursive formulas expressing least and greatest fixpoints  and focused on a fragment of that logic which can be monitored for runtime verification of programs' execution. That logic succeeded to describe simple client/server type processes but it is far from describing complex architectures. Specifically our shuffle operator cannot be described by means of $\mu$HML.

Coordination of distributed systems was also investigated in \cite{Gu:Re} (cf. also \cite{Tu:Se}). Specifically, the authors
developed a theoretical framework and a prototype for studying the realizability problem of coordination in terms of pomsets. 
In that setting the repetition and order of input and output components' actions were considered. Though, due to the imposed orders of pomsets,
 our shuffle operator cannot be sufficiently described in this framework. For instance, the subfomula $\varphi_1 \shuffle \varphi_2$ of the EPIL
 formula $\varphi$ describing  the Publish/Subscribe architecture (cf. Example \ref{b-pu-su}) cannot be described by means of pomsets.

Alternative modelling frameworks for the communication patterns of software components include session types and behavioral contracts (cf. \cite{Hu:Fo}). In \cite{De:Pa}
 the authors introduced a 
type theory for multiparty sessions to globally specify parametric
communication protocols whose interactions carry data and their topology is specified by the Ring, Star, or Grid architectures.
The authors showed that their type-checking algorithm ensured type-safety and deadlock-freedom
of parameterised multiparty communication
protocols.  Moreover, in \cite{Ch:Pa} the authors introduced a novel session type
system and a language to describe multi-actor communication in parameterized protocols.
The proposed type theory included operators for describing exclusive, sequential, concurrent events as well as 
their arbitrary reorderings using a shuffling operator. 
Then the authors applied their type
system for static verification of asynchronous communication protocols. In contrast to our setting, 
 the parametric framework of \cite{De:Pa} did not address the sequential and interleaving interactions of processes, while the work of \cite{Ch:Pa}
did not consider the modelling problem of architectures in
parameterized systems.

Although there has been an abundance of work for parametric
systems in the qualitative setting this has not been the case for the quantitative one. 
Some work for quantitative parametric systems was considered in \cite{An:Pa, Be:Pa, Es:Po, Fo:Th, Ka:We,Pa:On, Pa:We}. 
Specifically, in \cite{Es:Po} the authors studied the model checking problem of population protocols against linear-time specifications. Population protocols form a specific class of parametric systems, 
in which a set of identical and anonymous
finite-state processes interact pairwise through rendezvous synchronization. The authors obtained a decidability result for the qualitative problem i.e., deciding
if a linear temporal logic formula holds with probability $1$, while undecidability was proved for the quantitative problem that is deciding if the property holds
with at least a given probability. Although the work of \cite{Es:Po} studied the quantitative model checking problem in the parameterized setting, it did not actually modelled the quantitative aspects of parametric communication that we address in our framework. Moreover, in contrast to our approach the systems' architecture was not considered 
in the design process of \cite{Es:Po}. As result, the order restrictions and recursion in parametric communication was not addressed in the modelling of the protocols' topologies.

In \cite{Be:Pa, Fo:Th} the authors considered broadcast communication and clique's topology for networks
of many identical probabilistic timed processes, where the number of processes was a parameter.
Then, they investigated the decidability results of several qualitative parameterized verification problems
for probabilistic timed protocols.
In the subsequent work of \cite{An:Pa} the authors extended broadcast protocols
and parametric timed automata and introduced a model of 
parametric timed broadcast
protocols with two different types of parameters,
namely the number of identical processes and the timing features. Parametric communication in \cite{An:Pa} was defined by clique's
semantics where every
message reached every process, and by reconfigurable semantics where
the set of receivers was chosen non-deterministically. Then the
decidability of reachability problems were studied for parametric timed broadcast
protocols in \cite{An:Pa}. The main difference of our contribution and the work of \cite{An:Pa,Be:Pa, Fo:Th}
is that we introduce a logic-based approach for the quantitative modelling of arbitrary parametric architectures. Also, the topologies of the protocols studied in 
the aforementioned work did not consider recursion. Moreover, we investigate the modelling problem of more complicated parametric
 architectures such as Publish/Subscribe or Request/Response. On the other hand, the authors of  \cite{An:Pa,Be:Pa, Fo:Th} obtained several nice verification results
for concrete parameterized protocols. Applying our framework for the study of parametric verification is left as a future work direction.

Finally, according to our best knowledge the only weighted logic-based approach for the modelling of architectures is found in the recent work of \cite{Ka:We,Pa:On,Pa:We}.
Specifically, in \cite{Pa:On,Pa:We} the authors studied PCL in the weighted setup for modelling the quantitative aspects of architectures. The formulas of the resulting logic, namely weighted PCL (wPCL for short), were interpreted as polynomials with values over a commutative semiring.
 Soundness was proved for that logic and the authors obtained the decidability for the equivalence problem of its formulas.
In \cite{Ka:We} the authors extended the work of \cite{Pa:On,Pa:We} and studied weighted PCL over a
product valuation monoid. That algebraic structure allowed to compute the average cost as well as the maximum
cost among all costs occurring most frequently for executing the interactions within architectures.
The authors applied that logic to model
several weighted software architectures and proved that the equivalence problem of its formulas is decidable. In contrast to our setting, the work of \cite{Ka:We,Pa:On, Pa:We} described architectures without addressing ordered or recursive interactions. Parametric weighted architectures were considered only in \cite{Pa:On} using the weighted first-order level of weighted PCL, namely weighted FOCL. Nevertheless, weighted FOCL considered no execution order of the interactions of parametric weighted architectures.

\section{Preliminaries}

\subsection{Notations}
For every natural number $n \geq 1$ we denote by $[n]$ the set $\{1, \ldots, n \}$. Hence, in the sequel, whenever we use the notation $[n]$ we always assume that $n \geq 1$. For every set $S$ we denote by $\mathcal{P}(S)$ the powerset of $S$. Let $A$ be an alphabet, i.e., a finite nonempty set. As usual we denote by $A^*$ the set of all finite words over $A$ and we let $A^+ = A^* \setminus \{\varepsilon\}$ where $\varepsilon$ denotes the empty word. For every word $w\in A^{\ast}$ we let $\vert w\vert$ denote the length of $w$, i.e., the number of letters comprising $w$. Given two words $w, u \in A^*$, the shuffle product $w \shuffle u$ of $w$ and $u$ is a language over $A$ defined by
$$w \shuffle u= \{w_1u_1 \ldots w_mu_m \mid w_1, \ldots, w_m, u_1, \ldots , u_m \in A^* \text{ and } w=w_1 \ldots w_m, u=u_1\ldots u_m \}.$$

 Let $L_1, L_2, L \subseteq A^*$ be languages over $A$. Then, the Cauchy product $L_1 * L_2$ of $L_1$ and $L_2$, the shuffle product $L_1 \shuffle L_2$ of $L_1$ and $L_2$, and the iteration $L^+$ of $L$ are defined respectively, by:
\begin{itemize}
\item[-] $L_1 * L_2 = \{w_1w_2 \mid w_1 \in L_1, w_2 \in L_2 \}$,
  
\item[-] $L_1 \shuffle L_2 = \bigcup_{w_1 \in L_1, w_2 \in L_2}w_1\shuffle w_2$,

\item[-] $L^+= \bigcup_{\nu\geq 1}L^{\nu}$, \ \ where $L^1=L$, $L^{\nu+1}=L^{\nu} * L$, and $\nu$ is a natural number.
\end{itemize}

\noindent Shuffle operation has been widely studied in formal languages and in concurrency theory
in order to model parallel composition semantics of concurrent processes \cite{Ac:Sa,Re:Sh}. 
According to the algebraic definition, the shuffle operator returns the
set of all possible interleavings of the corresponding objects. Similarly to the work of \cite{Ch:Pa,Hu:Fo}, the motivation to introduce a shuffle type of operator
and quantifier in our logics lies in encoding any possible order (interleaving) of sequences of interactions within architectures. 
The study of parametric systems with concurrent semantics is not considered here, though it could be investigated in future work.

\subsection{Semirings}
A \emph{semiring} $(K,+,\cdot,0,1)$\emph{ }consists of a set
$K,$ two binary operations $+$ and $\cdot$ and two constant elements $0$
and $1$ such that $(K,+,0)$ is a commutative monoid, $(K,\cdot,1)$ is a
monoid, multiplication $\cdot$ distributes over addition $+$, and $0 \cdot k=k\cdot 0=0$
for every $k\in K$. If the monoid $(K,\cdot,1)$ is commutative, then the
semiring is called commutative.\ The semiring is denoted simply by $K$ if the
operations and the constant elements are understood. The result of the empty
product as usual equals to $1$. If no confusion arises, we denote sometimes in the sequel the multiplication operation $\cdot$ just by juxtaposition. The following algebraic structures are well-known commutative semirings. \begin{itemize}
\item The semiring $(\mathbb{N},+,\cdot,0,1)$  of natural numbers,
\item the semiring  $(\mathbb{Q},+,\cdot,0,1)$ of rational numbers,
\item the semiring  $(\mathbb{R},+,\cdot,0,1)$ of real numbers,
\item the Boolean semiring $B=(\{0,1\},+,\cdot,0,1)$, 
\item the arctical or max-plus semiring $\mathbb{R}_{\max}=(\mathbb{R}_{+}\cup\{-\infty\},\max,+,-\infty,0)$ where $\mathbb{R}_{+}=\{r\in\mathbb{R}\mid r\geq0\}$,
\item the tropical or min-plus semiring $\mathbb{R}_{\min}=(\mathbb{R}_{+}\cup\{\infty\},\min,+,\infty,0)$,
\item the Viterbi semiring $(\left[  0,1\right]  ,\max,\cdot,0,1)$ used in probability theory, 
\item every bounded distributive lattice with the operations $\sup$ and $\inf$, in particular the fuzzy semiring $F=([0,1],\max,\min,0,1)$.
\end{itemize}

\noindent A semiring $(K,+,\cdot , 0,1)$ is called a \emph{skew field} if the monoids $(K,+,0)$ and $(K\setminus \{0\}, \cdot, 1)$ are groups. For instance, $(\mathbb{Q}, +, \cdot,0, 1)$ and  $(\mathbb{R}, +, \cdot, 0, 1)$ are skew fields, and more generally every field is a skew field.

 A  \emph{formal series} (or simply \emph{series}) \emph{over}
$A^*$ \emph{and} $K$ is a mapping $s:A^*\rightarrow K$. The \emph{support of} $s$ is the set $\mathrm{supp}(s)=\{w \in A^* \mid s(w) \neq 0  \}$. A series with finite support is called  a \emph{polynomial}. The \emph{constant series}
$\widetilde{k}$ ($k\in K$) is defined, for every $w\in A^*$,$\ $by
$\widetilde{k}(w)=k$. We denote by $K\left\langle \left\langle A^*\right\rangle
\right\rangle $ the class of all series over $A^*$ and $K$, and by $K\left\langle  A^*
\right\rangle $ the class of all polynomials over $A^*$ and $K$. 
Let $s,r\in K\left\langle \left\langle A^*\right\rangle \right\rangle $ and
$k\in K$. The \emph{sum} $s\oplus r$, the \emph{products with scalars} $ks$
and $sk$, and the\emph{ Hadamard product} $s \otimes r$\ are series in $K\left\langle \left\langle A^*\right\rangle \right\rangle $ and are defined
elementwise, respectively  by $ s \oplus r(w)=s(w) + r(w), (ks)(w)=k \cdot s(w), (sk)(w)=s(w)\cdot k, s\otimes r(w)=s(w)\cdot r(w)$  for every
$w\in A^*$. It is a folklore result that the structure $\left(  K\left\langle
\left\langle A^*\right\rangle \right\rangle ,\oplus,\otimes,\widetilde
{0},\widetilde{1}\right)$  is a semiring. Moreover, if $K$ is commutative,
then $\left(  K\left\langle \left\langle A^*\right\rangle \right\rangle
\oplus,\otimes,\widetilde{0},\widetilde{1}\right)  $ is also commutative. The \emph{Cauchy product }$s\odot
r\in K\left\langle \left\langle A^{\ast}\right\rangle \right\rangle $ is determined by $s\odot r(w)=\sum\nolimits_{w=w_1w_2}s(w_1)r(w_2)$ for every $w\in A^{\ast}$,  whereas the \emph{shuffle product} $s\varpi r \in K\left\langle \left\langle A^{\ast}\right\rangle \right\rangle $ is defined by $s\varpi r(w)=\sum\nolimits_{w\in w_1\shuffle w_2}s(w_1)r(w_2)$ for every $w\in A^{\ast}$. The $\nu$-th iteration $s^{\nu}\in  K\left\langle \left\langle A^*\right\rangle \right\rangle $ ($\nu\geq 1$) is defined inductively by $s^{1}=s$ and $s^{\nu+1}=s^{\nu} \odot s$, for every natural number $\nu\geq 1$. A series $s$ is called proper if $s(\varepsilon)=0$. Then, the iteration $s^{\oplus}$ of $s$ is defined by $s^{\oplus}=\sum\limits_{\nu\geq 1} s^{\nu}$.  If $s$ is proper, then for every $w\in A^{+}$ and $\nu> \vert w\vert$ we have $s^{\nu}(w)=0$. Hence, $s^{\oplus}(w)=\sum\limits_{1\leq \nu \leq \vert w\vert} s^{\nu}(w)$ for every $w\in A^{+}$.

\begin{quote}
\emph{Throughout the paper }$(K, +, \cdot, 0,1)$\emph{ will denote a commutative semiring.}
\end{quote}

\subsection{Component-based systems}
\label{comp_based_system}
In this subsection we deal with component-based systems comprised of a finite number of components of the same or different type. In our setup, components  are defined by labelled transition systems (LTS for short) like in several well-known component-based modelling frameworks including BIP \cite{Bl:Al}, REO \cite{Am:RE}, and B \cite{Al:Th}. Next  we use the terminology of BIP framework for the basic notions and definitions, though we focus only on the communication patterns of components building a component-based system. 
This is justified by the fact that the presented logic-based modelling framework for architectures does not require the knowledge of the actual transitions of the components. Indeed, communication among components is achieved through their corresponding interfaces. For an LTS, its interface is the associated set of labels, called ports. Then, communication of components is defined by interactions, i.e., nonempty sets of ports. Interactions can be represented by formulas of a propositional logic, namely 
 \emph{propositional interaction logic} (PIL for short), which has been used for developing first- and second-order logics applied for the modelling of parametric architectures \cite{Bo:St,Ko:Pa,Ma:Co}.

Let us firstly define the notion of atomic components. 

\begin{defi}\label{at_comp}
 An atomic component is an LTS $B=(Q,P,q_0,R)$ where $Q$ is a finite
set of states, $P$ is a finite set of ports, $q_0$ is the initial state and $R\subseteq Q \times P \times Q$ is the set of transitions.
\end{defi}

For simplicity, we assume in the above definition, that every port $p \in P$ occurs at most in one transition. Then, we use this condition in the weighted setup in Section \ref{weighted_part} in order to associate a unique weight with each port. 

In the sequel, we call an atomic component $B$ a \emph{component}, whenever we deal with several atomic components. For every set $\mathcal{B} = \{B(i) \mid   i \in [n] \} $ of components, with $B(i)=(Q(i),P(i),q_{0}(i),R(i))$, $i \in [n]$, we consider in the paper, we assume that  $(Q(i) \cup P(i))\cap (Q(i') \cup P(i')) = \emptyset $ 
for every $1 \leq i\neq i' \leq n$.

We now recall PIL. Let $P$ be a finite nonempty set of \emph{ports}. We let $I(P)=\mathcal{P}%
(P)\setminus\{\emptyset\}$ for the set of interactions over $P$ and  $\Gamma(P)=\mathcal{P}(I(P))\setminus \{\emptyset\}$. Then, the syntax of PIL formulas $\phi$ over $P$\ is given by the grammar
$$
\phi::=\mathrm{true}\mid p\mid\neg\phi\mid\phi\vee\phi 
$$
where $p\in P$. 

We set $\neg(\neg\phi)=\phi$ for every
PIL formula $\phi$ and $\mathrm{false}=\neg\mathrm{true}$. As usual the conjunction and the implication 
of two PIL formulas $\phi,\phi^{\prime}$ over $P$ are defined respectively, by $\phi\wedge
\phi^{\prime}:=\neg{\left(  \neg\phi\vee\neg\phi^{\prime}\right)  }$ and $\phi \rightarrow \phi': = \neg \phi \vee \phi'$.  
PIL formulas are interpreted over interactions in $I(P)$. More precisely, for every PIL formula  $\phi$ and $a \in I(P)$ we define the satisfaction relation 
$a\models_{\mathrm{PIL}}\phi$ by induction on the structure of $\phi$ as follows:\hfill
\begin{itemize}
\item[-]  $a\models_{\mathrm{PIL}} \mathrm{true}$,

\item[-]  $a\models_{\mathrm{PIL}} p$  \ \ iff \ \  $p \in a$,

\item[-]  $a\models_{\mathrm{PIL}} \neg\phi$ \ \ iff \ \ $a \not \models_{\mathrm{PIL}} \phi$,

\item[-]   $a\models_{\mathrm{PIL}} \phi_1 \vee \phi_2$ \ \ iff \ \ $a \models_{\mathrm{PIL}} \phi_1$ or $a \models_{\mathrm{PIL}} \phi_2$.
\end{itemize}

\noindent Note that PIL differs from propositional logic,   since it is interpreted over 
interactions, and thus the name ``interaction" is assigned to it. 

Two PIL formulas $\phi, \phi'$ are called equivalent, and we denote it by $\phi \equiv \phi'$, whenever $a \models_{\mathrm{PIL}} \phi$ iff $a \models_{\mathrm{PIL}} \phi'$ for every $a \in I(P)$. A PIL formula $\phi$ is called a \emph{monomial over} $P$ if it is of the form $p_1\wedge \ldots \wedge  p_l$, where $l\geq 1$ and $p_{\lambda}\in P$ or $p_{\lambda}=\neg p'_{\lambda}$ with $p'_{\lambda}\in P$, for every $\lambda\in [l]$. 
For every interaction $a=\lbrace p_1,\ldots,p_l\rbrace \in I(P)$ we consider the monomial $\phi_a =p_1\wedge \ldots \wedge  p_l$. Then, it trivially holds $a\models_{\mathrm{PIL}} \phi_a$, and  for every $a, a' \in I(P)$ we get $a=a'$ iff $\phi_{a} \equiv \phi_{a'} $. We can describe a set of interactions as a disjunction of PIL formulas. More precisely, let $\gamma=\lbrace a_1,\ldots, a_m\rbrace\in \Gamma(P)$, where $a_{\mu}=\left \{ p_1^{(\mu)},\ldots,p^{(\mu)}_{l_{\mu}}\right \} \in I(P)$ for every $\mu\in [m]$. Then, the PIL formula $\phi_{\gamma} $ of $\gamma$ is $\phi_{\gamma}=\phi_{a_1}\vee \ldots \vee \phi_{a_m}$, i.e., 
$\phi_{\gamma}=\bigvee\limits_{\mu\in [m]}\bigwedge\limits_{\lambda\in [l_{\mu}]}p_{\lambda}^{(\mu)}$.

We can now define component-based systems. For this, let $\mathcal{B} = \{B(i) \mid i \in [n]   \} $ be a set of components where $B(i)=(Q(i),P(i),q_{0}(i),R(i))$, for $i \in [n]$. We denote with $P_{\mathcal{B}} =\bigcup_{i \in [n]}  P(i)$ the set comprising all ports of the elements of $\mathcal{B}$. Then an \emph{interaction of} $\B$ is an  interaction $a \in I(P_{\mathcal{B}})$ such that  $\vert a \cap P(i)\vert \leq 1$, for every $i \in [n]$. We denote by $I_{\mathcal{B}}$ the set of all interactions of $\mathcal{B}$, i.e., 
$$I_{\mathcal{B}} = \left \{ a \in I(P_{\mathcal{B}}) \mid   \vert a \cap P(i)\vert \leq 1 \text{ for every } i \in [n] \right\},$$ 
and let $\Gamma_{\mathcal{B}}= \mathcal{P}(I_{\mathcal{B}})\setminus \{ \emptyset\} $.

\begin{defi} \label{BIP-def}A component-based system is a pair $(\B, \gamma)$ where $\B=\{B(i) \mid i \in [n]  \}$ is a set of components, with $B(i)=(Q(i),P(i),q_{0}(i),R(i))$ for every $i \in [n]$, and $\gamma$  is a set of interactions in $I_{\B}$.
\end{defi}

The set $\gamma$ of interactions of a component-based system $(\B,\gamma)$ specifies the topology with which the components are connected in the system, i.e., the architecture of the system. Due to discussion before Definition \ref{BIP-def} we can replace the set of interactions $\gamma$ by its corresponding PIL formula $\phi_{\gamma}$, i.e., in a logical directed notation. Expression of software architectures by logics has been considered in several work and gave nice results (cf. for instance \cite{Bo:St,Ko:Pa,Ma:Co}).

\section{Extended propositional interaction logic}
One of the most important characteristics of component-based systems is the architecture which specifies the coordination primitives of the connected components. Although PIL can describe nicely several architectures, it does not capture an important feature of more complicated ones: the specified order required for the execution of the interactions. Such architectures, with an increased interest in applications, are for instance the Request/Response and Publish/Subscribe. In \cite{Pi:Ar} we introduced a propositional logic that extends 
PIL by equipping it with two operators, namely  the concatenation $*$ and the shuffle operator $\shuffle$. 
With that logic we succeeded to represent architectures of component-based systems where the order of the interactions is involved.
Specifically, 
the concatenation operator $*$ modelled sequential interactions and the shuffle operator $\shuffle$ interactions executed with interleaving. 
Here, we further augment PIL with
an iteration operator $^{+}$ that serves to encode sequential (at least one) repetition of interactions. This in turn implies, that the resulting logic
allows to describe the ongoing implementation of an architecture during the system's operation. Moreover, our propositional logic
with its first-order level is proved to be a sufficient modelling language
for the symbolic representation of architectures of parametric component-based systems.

\begin{defi}\label{syn-epil}
Let $P$ be a finite set of ports. The syntax of extended propositional interaction logic
(EPIL for short) formulas $\varphi$ over $P$\ is given by the grammar
\begin{align*}
\zeta & ::=\phi \mid \zeta * \zeta \mid \zeta \shuffle \zeta\\
\varphi & ::=\zeta \mid \neg \zeta \mid \varphi\vee\varphi \mid \varphi \wedge \varphi \mid \varphi * \varphi
\mid \varphi \shuffle \varphi \mid \varphi^+
\end{align*}
where $\phi $ is a PIL formula over $P$, $*$ is the concatenation operator,  $\shuffle$ is the shuffle operator, and $^{+}$ is the iteration operator. 
\end{defi}

The binding strength, in decreasing order, of the operators in EPIL is the following: negation, iteration, shuffle, concatenation, conjunction, and disjunction.  The reader should notice that we consider a restricted  use of  negation in the syntax of EPIL formulas. Specifically, negation is permitted in PIL formulas and EPIL formulas of type $\zeta$. The latter will ensure exclusion of erroneous interactions in architectures.  The restricted use of negation has no impact to description of models characterized by EPIL formulas since most of  known architectures can be described by formulas in our EPIL. Furthermore, it contributes to a reasonable complexity of translation of first-order extended interaction logic formulas to finite automata. This in turn implies the decidability of equivalence, satisfiability, and validity of first-order extended interaction logic sentences (cf. Section \ref{sec_dec}).
  
For the satisfaction of EPIL formulas we consider finite words $w$ over $I(P)$. Intuitively, a word $w$ encodes each of the distinct interactions within a system as a letter. Moreover, the position of each letter in $w$ depicts the order in which the corresponding interaction is executed in the system, in case there is an order restriction.  For the semantics of EPIL formulas, we introduce the subsequent notation. For every EPIL formula $\varphi$ over $P$ and natural number $\nu \geq 1$ we define the EPIL formula $\varphi^{\nu}$ by induction on $\nu$ as follows:
\begin{itemize}
\item[-] $\varphi^1=\varphi$,
\item[-] $\varphi^{\nu+1}=\varphi^{\nu} * \varphi$. 
\end{itemize}

\begin{defi}\label{sem-epil}
Let $\varphi$ be an EPIL formula over $P$ and $w \in I(P)^+$. Then we define the satisfaction relation 
$w\models\varphi$ by induction on the structure of $\varphi$ as follows:
\begin{itemize}
\item[-] $w\models \phi$  \ \ iff \ \ $ w \models_{\mathrm{PIL}} \phi$,

\item[-] $w \models \zeta_1 * \zeta_2$  \ \ iff \ \ there exist $w_1,w_2 \in I(P)^+$ such that  $w=w_1w_2$ and $w_i \models\zeta_i$ for $i=1,2$,

\item[-] $w \models \zeta_1 \shuffle \zeta_2$  \ \ iff \ \ there exist $w_1,w_2 \in I(P)^+$ such that  $w\in w_1 \shuffle w_2$ and $w_i \models\zeta_i$ for $i=1,2$,

\item[-] $w \models \neg \zeta$ \ \ iff \ \ $w \not \models \zeta$,

\item[-] $w\models \varphi_1 \vee \varphi_2$ \ \ iff \ \ $w \models \varphi_1$ or $w \models \varphi_2$,

\item[-] $w\models \varphi_1 \wedge \varphi_2$ \ \ iff \ \ $w \models \varphi_1$ and $w \models \varphi_2$,

\item[-] $w \models \varphi_1 * \varphi_2$  \ \ iff \ \ there exist $w_1,w_2 \in I(P)^+$ such that  $w=w_1w_2$ and $w_i \models\varphi_i$ for $i=1,2$,

\item[-] $w \models \varphi_1 \shuffle \varphi_2$  \ \ iff \ \ there exist $w_1,w_2 \in I(P)^+$ such that  $w\in w_1\shuffle w_2$ and $w_i \models\varphi_i$ for $i=1,2$,

\item[-] $w \models \varphi^+ $  \ \ iff \ \ there exists $\nu \geq 1$ such that  $w \models\varphi^{\nu}$.

\end{itemize}
\end{defi}

\noindent The empty word is not included in the semantics of EPIL since in our framework we do not consider architectures with no interactions.

If $\varphi=\phi$ is a PIL formula, then $w \models \phi$ implies that $w$ is a letter in $I(P)$. Two EPIL formulas $\varphi, \varphi'$ are called \emph{equivalent}, and we denote it by $\varphi \equiv \varphi' $, whenever $w \models \varphi$ iff $w \models \varphi'$ for every $w \in I(P)^+$.

It can be easily seen that the concatenation operator  is not commutative.  The proof of the next proposition is straightforward.

\begin{prop}
\label{epil_properties}
Let $\varphi,\varphi_1, \varphi_2, \varphi_3$ be \emph{EPIL} formulas over $P$. Then,
\begin{enumerate}[(i)]
\item $\varphi_1 * (\varphi_2 * \varphi_3)\equiv(\varphi_1 * \varphi_2) * \varphi_3$, \label{epil_assoc}
\item $\varphi * (\varphi_1\vee \varphi_2)\equiv(\varphi * \varphi_1)\vee (\varphi * \varphi_2)$, 
\item $(\varphi_1\vee \varphi_2) * \varphi\equiv(\varphi_1 * \varphi)\vee (\varphi_2 * \varphi)$.\label{epil_distr_dis}
\end{enumerate}
\end{prop}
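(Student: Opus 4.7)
The plan is to prove each of the three equivalences by showing that an arbitrary word $w \in I(P)^+$ satisfies the left-hand side if and only if it satisfies the right-hand side, unfolding the semantic clause for $*$ (and for $\vee$) from Definition \ref{sem-epil}. Since the statement is an equivalence between EPIL formulas, no induction on formula structure is needed; the argument reduces to elementary properties of word concatenation in the free semigroup $I(P)^+$.

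For part (\ref{epil_assoc}), I would fix $w \in I(P)^+$ and unfold both sides twice. On the left, $w \models \varphi_1 * (\varphi_2 * \varphi_3)$ amounts to the existence of a split $w = w_1 u$ with $w_1 \models \varphi_1$ together with a further split $u = w_2 w_3$ satisfying $w_2 \models \varphi_2$ and $w_3 \models \varphi_3$. On the right, $w \models (\varphi_1 * \varphi_2) * \varphi_3$ unfolds to the existence of a split $w = v w_3$ with $v \models \varphi_1 * \varphi_2$ and $w_3 \models \varphi_3$, hence to splits $v = w_1 w_2$ with the corresponding satisfactions. Both conditions collapse to the existence of a three-factor decomposition $w = w_1 w_2 w_3$ with $w_i \models \varphi_i$ for $i=1,2,3$; the only subtlety is that all three factors must lie in $I(P)^+$, which follows because each intermediate factor $u$ or $v$ is itself required to belong to $I(P)^+$ by the semantic clause for $*$, and concatenation of words in $I(P)^+$ is associative.

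For parts (ii) and (iii), I would again unfold the outermost $*$ and then the inner $\vee$, performing a case split. For (ii), $w \models \varphi * (\varphi_1 \vee \varphi_2)$ iff there is a split $w = w_1 w_2$ with $w_1 \models \varphi$ and ($w_2 \models \varphi_1$ or $w_2 \models \varphi_2$); distributing the existential over the disjunction yields exactly the condition $w \models (\varphi * \varphi_1) \vee (\varphi * \varphi_2)$. Part (iii) is entirely symmetric, splitting on which disjunct of $\varphi_1 \vee \varphi_2$ holds for the first factor. The only step that deserves a line of justification is the commutation of existential quantification over splits with disjunction, which is valid because the existence of a split witnessing either disjunct is equivalent to the disjunction of the two existence statements.

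I do not expect any real obstacle: the proposition is a routine semantic verification, and the author already describes it as straightforward. The only minor care is to keep track of the $I(P)^+$ side condition on each factor and to apply the associativity of word concatenation without conflating it with the formula-level operator $*$.
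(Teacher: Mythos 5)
Your proof is correct and matches the paper's intent: the paper simply declares the proposition straightforward and omits the argument, which is precisely the routine semantic unfolding you give (associativity of concatenation in $I(P)^+$ for (i), distribution of the existential over the split with disjunction for (ii) and (iii), keeping the $I(P)^+$ side condition on each factor). Nothing further is needed.
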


\noindent Now, we define an updated version of component-based systems where in comparison to the one in Definition \ref{BIP-def}, we replace the set of interactions $\gamma$ (equivalently its corresponding PIL formula $\phi_{\gamma}$ representing them) by an EPIL formula.

\begin{defi} \label{BIP_upd-def}A component-based system  is a pair $(\B, \varphi)$ where $\B=\{B(i) \mid i \in [n]  \}$ is a set of components and $\varphi$  is an EPIL formula over $P_{\B}$.
\end{defi}
We should note that the EPIL formula $\varphi$ in the above definition is defined over $P_{\B}$. Nevertheless, we will be interested in words $w$ of interactions in $I_{\B}$ satisfying $\varphi$.
  
Observe that in this work we develop no theory about the computation of the semantics of component-based systems, a problem that will be studied in future work. Though it should be clear that we specify how we model the components of  systems in order to formalize the corresponding architectures. 

\subsection{Examples of architectures described by EPIL formulas} 
\label{examples_EPIL}   
Next we present three examples of component-based models whose architectures have ordered interactions encoded by EPIL formulas. Clearly, there exist several variations of the following architectures and their order restrictions, that EPIL formulas could also model sufficiently by applying relevant modifications. We need to define the following macro EPIL formula. Let $P=\{p_1, \ldots , p_n\}$ be a set of ports. Then, for  $p_{i_1}, \ldots , p_{i_m} \in P$ with $m <n$ we let 
$$\#(p_{i_1} \wedge \ldots \wedge p_{i_m})::=p_{i_1}\wedge \ldots \wedge p_{i_m} \wedge \bigwedge_{p \in P \setminus \{p_{i_1}, \ldots, p_{i_m}\}}\neg p.$$

\begin{exa}\label{ex_b_blackboard}\textbf{(Blackboard)}
We consider a component-based system $(\B,\varphi)$ with the Blackboard architecture. Blackboard architecture is applied to multi-agent systems 
for solving problems with nondeterministic strategies that result from multiple partial solutions (cf. Chapter 2 in \cite{Bu:Pa},\cite{Co:Bl}).
Applications based on a Blackboard architecture, 
include 
planning and scheduling (cf. \cite{St:Cr}) as well as 
artificial intelligence \cite{Ni:Bl}. Blackboard architecture involves three component types, one blackboard component, one controller
component and the knowledge sources components \cite{Bu:Pa,Co:Bl,Ni:Bl}. Blackboard is a global data store that presents the state of the problem to be solved. 
Knowledge sources, simply called sources, are expertised agents that provide 
partial solutions to the given problem. Knowledge sources are independent and do not know about the existence of other sources. If
there is sufficient information for a source to provide its partial solution, then the
source is triggered i.e., is keen to write on the blackboard. Since multiple sources are triggered and compete to provide their solutions, a controller component is used
to resolve any conflicts. Controller accesses both the blackboard to inspect the available data and the sources to schedule them so that they execute their solutions on blackboard.

For our example we consider three knowledge sources components. 
Therefore, we have  $\B=\lbrace B(i)\mid i\in [5]\rbrace$ where $B(1), B(2),B(3),B(4), B(5)$ refer to  blackboard,  controller and the three sources
components, respectively. The set of ports of each component is $P(1)=\lbrace p_d,p_a\rbrace,
 P(2)=\lbrace p_r,p_l,p_e\rbrace,$ $P(3)=\lbrace p_{n_1},p_{t_1},p_{w_1} \rbrace$, $P(4)=\lbrace p_{n_2},p_{t_2},p_{w_2} \rbrace$, and $P(5)=\lbrace p_{n_3},p_{t_3},p_{w_3} \rbrace$. 
 Figure \ref{b_blackboard} depicts an instantiation of the permissible interactions 
 among the five components of our component-based system. Blackboard has two ports $p_d,p_a$ to declare the state of the problem and add the new data as obtained by a knowledge source, respectively.
Knowledge sources  have three ports $p_{n_{\kappa}},p_{t_{\kappa}},p_{w_{\kappa}}$, for $\kappa=1,2,3$, for being notified about the existing data
on blackboard, the trigger of the source, and for writing the partial solution on blackboard, respectively. 
Controller has three ports, namely $p_r$ used to record blackboard data, $p_l$ for the log process of
 triggered sources, and $p_e$ for their execution on blackboard. Here we assume that all knowledge sources are triggered, i.e., that all available sources participate in the architecture. The permissible interactions in the architecture range over $I_{\B}$ and are obtained as follows:
\begin{itemize}
\item The sets $\lbrace p_d,p_r \rbrace$ and $\lbrace  p_d,p_{n_{\kappa}} \rbrace$, for $\kappa=1,2,3$, capture the interactions of blackboard with controller and the three sources, respectively, for presenting the state of the problem. The corresponding connections are shown in Figure \ref{b_blackboard} by the four black lines. 
\item The sets $ \lbrace p_l, p_{t_{\kappa}} \rbrace$, for $\kappa=1,2,3$, refer to the log process of
 triggered sources in the controller. The orange lines in Figure \ref{b_blackboard} depict a possible scenario of such connections between the controller and each of the sources $B(3)$ and $B(5)$.
 \item Finally, the sets $ \lbrace p_e, p_{w_{\kappa}},p_a \rbrace$, for ${\kappa}=1,2,3$, refer to the connection of the sources with blackboard through its controller for adding the new data. These interactions are shown
 with the green lines in Figure \ref{b_blackboard}, for the case that sources $B(3)$ and $B(5)$ have been triggered. 
\end{itemize}

\noindent The EPIL formula $\varphi$ describing the ordered and recursive interactions of Blackboard architecture is

\begin{multline*}\varphi=\Bigg(\#(p_d\wedge p_r) *  \bigg(\#(p_d\wedge p_{n_1})\shuffle \#(p_d\wedge p_{n_2})\shuffle \#(p_d\wedge p_{n_3}) \bigg) * \\ \bigg(\varphi_1\vee \varphi_2\vee \varphi_3\vee(\varphi_1 \shuffle \varphi_2) \vee (\varphi_1\shuffle \varphi_3)\vee (\varphi_2\shuffle \varphi_3)\vee (\varphi_1\shuffle\varphi_2\shuffle\varphi_3)\bigg)^+ \Bigg)^+
\end{multline*}
where 
$$\varphi_i= \#(p_l\wedge p_{t_i}) * \#(p_e\wedge p_{w_i}\wedge p_a)$$
for $i=1,2,3$. The first PIL subformula encodes the connection between blackboard and controller. The EPIL subformula between the two concatenation operators represents the connections of the three knowledge sources with blackboard in order to be informed for existing data. The last part of $\varphi$ captures the connection of some of the three knowledge sources with controller and blackboard for the triggering and the writing process. The use of shuffle operator in $\varphi$ serves for capturing any possible order, among the sources, for implementing the corresponding connections with controller and blackboard. On the other hand, the two concatenation operators in the first line of $\varphi$ ensure respectively that the sources are informed after the controller for the data on blackboard and before the triggering and writing process. The use of the inner iteration operator allows the repetition of the triggering and writing process of some of the sources based on the blackboard's
data at one concrete point. Moreover, the outer iteration operator describes the recursive permissible interactions in the architecture at several points. This implies that as new data are added on blackboard, then the controller and the sources are informed anew in each communication `cycle' for these data, and then the sources decide whether they are able to write on blackboard or not. Therefore, combining the two iteration operators we achieve to describe the subsequent implementation of
Blackboard architecture in the given component-based system $\B$.

\definecolor{harlequin}{rgb}{0.25, 1.0, 0.0}
\definecolor{ao}{rgb}{0.0, 0.0, 1.0}

\definecolor{darkorange}{rgb}{1.0, 0.55, 0.0}
\definecolor{harlequin}{rgb}{0.25, 1.0, 0.0}
\definecolor{ao}{rgb}{0.0, 0.0, 1.0}

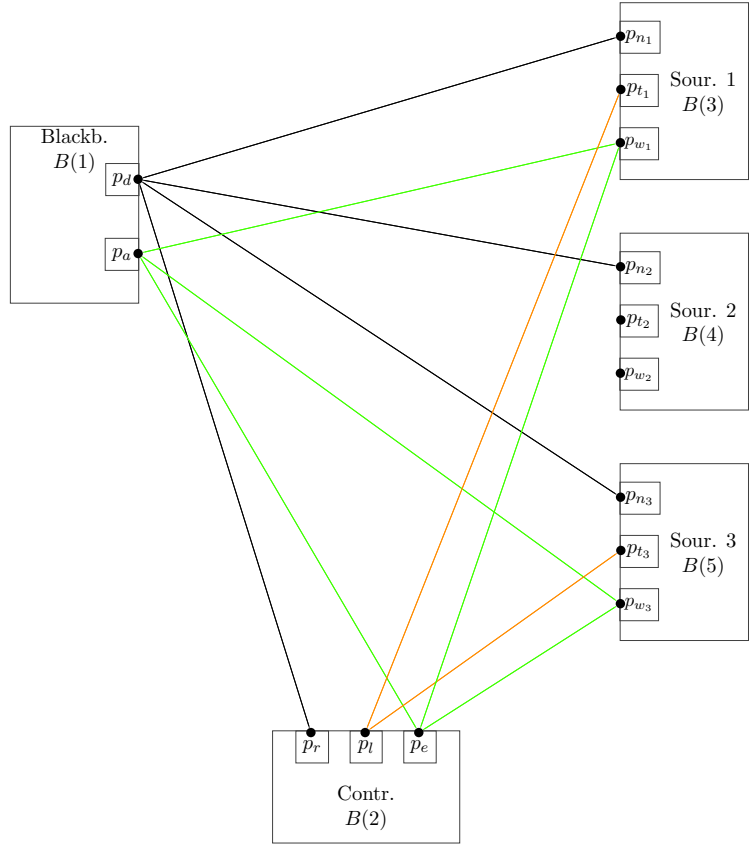
\begin{figure}[h]
\centering
\resizebox{0.65\linewidth}{!}{
\begin{tikzpicture}[>=stealth',shorten >=1pt,auto,node distance=1cm,baseline=(current bounding box.north)]
\tikzstyle{component}=[rectangle,ultra thin,draw=black!75,align=center,inner sep=9pt,minimum size=1.5cm,minimum height=3.31cm,minimum width=2.4cm]
\tikzstyle{port}=[rectangle,ultra thin,draw=black!75,minimum size=6mm]
\tikzstyle{bubble} = [fill,shape=circle,minimum size=5pt,inner sep=0pt]
\tikzstyle{type} = [draw=none,fill=none] 

\node [component,align=center] (a1)  {};
\node [port] (a2) [above right= -1.3cm and -0.62cm of a1]  {$p_d$};
\node[bubble] (a3) [above right=-1.06cm and -0.08cm of a1]   {};

\node [port] (a4) [above right= -2.7cm and -0.63cm of a1]  {$p_a$};
\node[bubble] (a5) [above right=-2.45cm and -0.08cm of a1]   {};

\node[type] (a6) [above=-0.45cm of a1]{Blackb.};
\node            [below=-0.1cm of a6]{$B(1)$};

\tikzstyle{control}=[rectangle,ultra thin,draw=black!75,align=center,inner sep=9pt,minimum size=1.5cm,minimum height=2.1cm,minimum width=3.5cm]

\node [control,align=center] (b1)  [below right=8cm and 2.5cm of a1]{};
\node [port] (b2) [above left=-0.605cm  and -1.05 of b1]  {$p_r$};
\node[bubble] (b3) [above left=-0.1cm and -0.35cm of b2]   {};

\node [port] (b4) [right=0.4 of b2]  {$p_l$};
\node[bubble] (b5) [above left=-0.1cm and -0.35cm of b4]   {};

\node [port] (b6) [right=0.4 of b4]  {$p_e$};
\node[bubble] (b7) [above left=-0.1cm and -0.35cm of b6]   {};

\node[type] (b8) [below=-1.2cm of b1]{Contr.};
\node[]      [below=-0.1cm of b8]{$B(2)$};

\tikzstyle{source}=[rectangle,ultra thin,draw=black!75,align=center,inner sep=9pt,minimum size=1.5cm,minimum height=3.31cm,minimum width=2.4cm]

\tikzstyle{port}=[rectangle,ultra thin,draw=black!75,minimum size=6mm,inner sep=3pt]

\node [source,align=center] (c1) [above right=-1cm and 9cm of a1] {};
\node [port] (c2) [above left= -0.95cm and -0.75cm of c1]  {$p_{n_1}$};
\node[bubble] (c3) [above left=-0.35cm and -0.08cm of c2]   {};

\tikzstyle{port}=[rectangle,ultra thin,draw=black!75,minimum size=6mm,inner sep=3.5pt]
\node [port] (c4) [above left= -1.6cm and -0.72cm of c2]  {$p_{t_1}$};
\node[bubble] (c5) [above left=-0.35cm and -0.08cm of c4]   {};

\tikzstyle{port}=[rectangle,ultra thin,draw=black!75,minimum size=6mm,inner sep=2.7pt]
\node [port] (c6) [above left= -1.6cm and -0.72cm of c4]  {{\small $p_{w_1}$}};
\node[bubble] (c7) [above left=-0.35cm and -0.08cm of c6]   {};

\node[type,align=center] (c8) [above right=-1.7cm and -1.6cm of c1]{Sour. $1$};
\node[] [below=-0.1cm of c8]{$B(3)$};

\tikzstyle{port}=[rectangle,ultra thin,draw=black!75,minimum size=6mm,inner sep=3pt]
\node [source,align=center] (d1) [below= 1cm of c1] {};
\node [port] (d2) [above left= -0.95cm and -0.75cm of d1]  {$p_{n_2}$};
\node[bubble] (d3) [above left=-0.35cm and -0.08cm of d2]   {};

\tikzstyle{port}=[rectangle,ultra thin,draw=black!75,minimum size=6mm,inner sep=3.5pt]
\node [port] (d4) [above left= -1.6cm and -0.72cm of d2]  {$p_{t_2}$};
\node[bubble] (d5) [above left=-0.35cm and -0.08cm of d4]   {};

\tikzstyle{port}=[rectangle,ultra thin,draw=black!75,minimum size=6mm,inner sep=2.7pt]
\node [port] (d6) [above left= -1.6cm and -0.72cm of d4]  {{\small $p_{w_2}$}};
\node[bubble] (d7) [above left=-0.35cm and -0.08cm of d6]   {};

\node[type] (d8) [above right=-1.7cm and -1.6cm of d1]{Sour. $2$};
\node[] [below=-0.1cm of d8]{$B(4)$};

\tikzstyle{port}=[rectangle,ultra thin,draw=black!75,minimum size=6mm,inner sep=3pt]
\node [source,align=center] (e1) [below= 1cm of d1] {};
\node [port] (e2) [above left= -0.95cm and -0.75cm of e1]  {$p_{n_3}$};
\node[bubble] (e3) [above left=-0.35cm and -0.08cm of e2]   {};

\tikzstyle{port}=[rectangle,ultra thin,draw=black!75,minimum size=6mm,inner sep=3.5pt]
\node [port] (e4) [above left= -1.6cm and -0.72cm of e2]  {$p_{t_3}$};
\node[bubble] (e5) [above left=-0.35cm and -0.08cm of e4]   {};

\tikzstyle{port}=[rectangle,ultra thin,draw=black!75,minimum size=6mm,inner sep=2.7pt]
\node [port] (e6) [above left= -1.6cm and -0.72cm of e4]  {{\small $p_{w_3}$}};
\node[bubble] (e7) [above left=-0.35cm and -0.08cm of e6]   {};

\node[type] (e8) [above right=-1.7cm and -1.6cm of e1]{Sour. $3$};
\node[] [below=-0.1cm of e8]{$B(5)$};
 
 
\path[-]          (a3)  edge                  node {} (c3);
 \path[-]          (a3)  edge                  node {} (d3);
 \path[-]          (a3)  edge                  node {} (e3);

 \path[-]          (a3)  edge                  node {} (b3);
  \path[-]          (b3)  edge                  node {} (a3);
 
 
 \path[-]          (c3)  edge                  node {} (a3);
 \path[-]          (d3)  edge                  node {} (a3);
 \path[-]          (e3)  edge                  node {} (a3);

 
 \path[-]          (a5)  edge  [harlequin]               node {} (c7);
 \path[-]          (a5)  edge  [harlequin]                node {} (e7);

 \path[-]          (a5)  edge  [harlequin]                node {} (b7);
 
 
 \path[-]          (c7)  edge  [harlequin]                node {} (a5);
 \path[-]          (e7)  edge   [harlequin]               node {} (a5);

 \path[-]          (b7)  edge   [harlequin]              node {} (a5);
 
 \path[-]          (b5)  edge  [darkorange]                node {} (c5);
   \path[-]          (b5)  edge [darkorange]                node {} (e5);
   
    \path[-]          (c5)  edge  [darkorange]                node {} (b5);
  \path[-]          (e5)  edge  [darkorange]               node {} (b5);

 \path[-]          (b7)  edge  [harlequin]               node {} (c7);
 \path[-]          (c7)  edge  [harlequin]               node {} (b7);

    \path[-]          (b7)  edge  [harlequin]               node {} (e7);
 \path[-]          (e7)  edge  [harlequin]               node {} (b7);
   
\end{tikzpicture}}
\caption{Blackboard architecture. A possible execution for the interactions.}
\label{b_blackboard}
\end{figure}
\end{exa}

Before our second example we show the expressive difference among EPIL and PCL formulas of \cite{Ma:Co}.
 
\begin{rem}
\label{remark}
Consider the Blackboard architecture presented in the previous example. Then the  corresponding  PCL formula $\rho$ (cf. \cite{Ma:Co}) describing that architecture is
\begin{multline*}
\rho= \#(p_d\wedge p_r) + \#(p_d\wedge p_{n_1}) + \#(p_d\wedge p_{n_2}) + \#(p_d\wedge p_{n_3}) + \\
 \big(\phi_1 \sqcup \phi_2 \sqcup \phi_3 \sqcup (\phi_1 + \phi_2) \sqcup (\phi_1 + \phi_3) \sqcup (\phi_2 + \phi_3) \sqcup  (\phi_1 + \phi_2 + \phi_3)\big) 
\end{multline*}
where $+$ denotes the coalescing operator, $\sqcup$ denotes the union operator, and 

$\phi_i = \#(p_l\wedge p_{t_i}) +   \#(p_e\wedge p_{w_i} \wedge p_a)$

\noindent for $i=1,2,3$.

\noindent Then, the PCL formula $\rho$ is interpreted over sets of interactions in $\mathcal{P}(I(P))\setminus \{\emptyset\}$ which trivially cannot express the required order for implementing the interactions. For instance the set of interactions 
$\big\{\{p_d, p_r\} , \{p_d, p_{n_1}\} , \{p_d , p_{n_2}\} , \{p_d ,  p_{n_3}\}, \{p_l, p_{t_2} \} ,   \{p_e , p_{w_2} , p_a \}  \big\}$ satisfies $\rho$ but represents no order of the interactions' execution.  
\end{rem}

Note that the first-order logics studied in \cite{Bo:Ch,Bo:St,Ko:Pa} were built over PIL. Although PIL served as the basis for developing logics that encode parametric architectures (still without addressing ordering constraints), it is far from describing complex architectures like Blackboard, Request/Response, or Publish/Subscribe.

\begin{exa} \textbf{(Request/Response)}
\label{b-r-r}
We consider a component-based system $(\B,\varphi)$ with the Request/Response architecture. These
architectures are classical interaction patterns and 
widely used for web services \cite{Da:Se}. A Request/Response architecture refers to clients and services. In order for a service to be made available
it should be enrolled in the service registry. The enrollment of a service in the registry allows service consumers, simply called clients,
to search the existing services. Once services are signed up, then clients search the corresponding registry and take the address of the services. Each 
client that is interested in a service sends a request to the service and waits until the service will respond. No other client can be connected
to the service until the response of the service to the client who sent the request will be completed. In \cite{Ma:Co} the authors described this process
by adding, for each service, a third component type called coordinator. Coordinator takes care that only one
client is connected to a service until the process among them is completed. 

The Request/Response architecture of our component-based system consists of four component types namely  service registry, service, 
client, and  coordinator (Figure \ref{b_r_r}). For our example we consider seven components, and specifically, the service registry, two services  with their associated coordinators, and two clients. Therefore, we have that $\B=\lbrace B(i)\mid i\in [7]\rbrace$ where $B(1),\ldots, B(7)$ refer to each of the aforementioned components,
respectively. The set of ports of each component is $P(1)=\lbrace p_e, p_u, p_t \rbrace$, $P(2)=\lbrace p_{r_{1}},p_{g_{1}}, p_{s_{1}}\rbrace$, $P(3)=\lbrace p_{r_{2}},p_{g_{2}}, p_{s_{2}}\rbrace$, $P(4)=\lbrace p_{m_{1}}, p_{a_{1}},p_{d_{1}}\rbrace$, $P(5)=\lbrace p_{m_{2}}, p_{a_{2}},p_{d_{2}}\rbrace$, $P(6)=\lbrace p_{l_{1}},p_{o_{1}},p_{n_{1}}, p_{q_{1}}, p_{c_1}\rbrace$, and $P(7)=\lbrace p_{l_{2}},p_{o_{2}},p_{n_{2}}, p_{q_{2}}, p_{c_2}\rbrace$. Figure \ref{b_r_r} depicts the permissible interactions 
 among the service registry, service $B(2)$ and the two clients  components of our component-based system. The corresponding interactions for service $B(3)$, 
 which have been omitted for simplicity, are derived similarly. Service registry has three ports denoted by $p_e$, $p_u$, and $p_t$ used for 
 connecting with the service for its enrollment, for authorizing the client to search for a service, and for transmitting the address (link) of the 
service to the client in order for the client to send then its request, respectively. Services have three ports $p_{r_{\kappa}},p_{g_{\kappa}}, p_{s_{\kappa}}$, for ${\kappa}=1,2$, which establish the connection to the service registry for the signing up of the service, and the connection to 
a client (via coordinator) for getting a request and responding, respectively.
Each client $\kappa$ has five ports denoted by $p_{l_{\kappa}},p_{o_{\kappa}},p_{n_{\kappa}}, p_{q_{\kappa}}$ and $p_{c_{\kappa}}$, for ${\kappa}=1,2$. The first two ports are used for connection with the 
service registry to look up the available services and for obtaining the address of the service. 
The latter three ports express the connection of the client to  coordinator, 
 to service (via coordinator) for sending the request, and to service (via coordinator) for collecting its response,
 respectively. Coordinators have three ports namely $p_{m_{\kappa}}, p_{a_{\kappa}},p_{d_{\kappa}}$, for ${\kappa}=1,2$. The first port controls that only
 one client is connected to a service. The second one is used for acknowledging that the connected client sends a request, and the
 third one disconnects the client when the service responds to the request. 
 
\noindent The allowed interactions in the architecture range over $I_{\B}$ and are obtained as follows:
\begin{itemize}
\item The sets $\lbrace p_e, p_{r_{\kappa}}\rbrace $ refer the enrollment of the two services in registry, while the sets $\lbrace p_{l_{\kappa}}, p_u \rbrace$ and $\lbrace p_{o_{\kappa}},p_t \rbrace$ express the interactions of the two clients with registry, for $\kappa=1,2$. The corresponding connections are shown in Figure \ref{b_r_r} with black and orange color for the services and clients, respectively. 
\item The sets $\lbrace p_{n_{\kappa}},p_{m_{\lambda}} \rbrace$, $\lbrace p_{q_{\kappa}}, p_{a_{\lambda}}, p_{g_{\lambda}}\rbrace$, $\lbrace p_{c_{\kappa}},p_{d_{\lambda}},p_{s_{\lambda}} \rbrace$ capture the interactions of each of the two clients with the two services through their coordinators, for $\kappa,\lambda=1,2$. The green lines in Figure \ref{b_r_r} depict a possible case for these connections, and specifically between the two client components and service $B(2)$ through its coordinator $B(4)$.
\end{itemize}
\noindent Then, the EPIL formula $\varphi$ describing the Request/Response architecture, with its permissible ordered and recursive interactions, is

\begin{multline*}
\varphi=\big(\#(p_e\wedge p_{r_1})\shuffle \#(p_e\wedge p_{r_2})\big) * \big(\xi_1 \shuffle \xi_2\big) * \\ \Bigg( \bigg(\varphi_{11}  \vee \varphi_{21} \vee (\varphi_{11} * \varphi_{21}) \vee (\varphi_{21} * \varphi_{11})\bigg)^+ \bigvee  \bigg(\varphi_{12}  \vee \varphi_{22} \vee (\varphi_{12} * \varphi_{22}) \vee (\varphi_{22} * \varphi_{12}) \bigg)^+ \bigvee \\ \bigg(\big(\varphi_{11}  \vee \varphi_{21} \vee (\varphi_{11} * \varphi_{21}) \vee (\varphi_{21} * \varphi_{11})\big)^+ \shuffle 
 \\  \big(\varphi_{12}  \vee \varphi_{22} \vee (\varphi_{12} * \varphi_{22}) \vee (\varphi_{22} * \varphi_{12}) \big)^+\bigg)^+\Bigg)^+
\end{multline*} 
where
\begin{itemize}
 \item[-]$\xi_1=\#(p_{l_1}\wedge p_u) * \#(p_{o_1}\wedge p_t)$
 \item[-]$\xi_2=\#(p_{l_2}\wedge p_u) * \#(p_{o_2}\wedge p_t)$
\end{itemize}
describe the interactions of the two clients with the service registry and
\begin{itemize} 
\item[-]$\varphi_{11}=\#(p_{n_1}\wedge p_{m_1}) * \#(p_{q_1}\wedge p_{a_1}\wedge p_{g_1})* \#(p_{c_1}\wedge p_{d_1}\wedge p_{s_1})$ 
\item[-] $\varphi_{12}=\#(p_{n_1}\wedge p_{m_2}) * \#(p_{q_1}\wedge p_{a_2}\wedge p_{g_2})* \#(p_{c_1}\wedge p_{d_2}\wedge p_{s_2})$ 
\item[-]$\varphi_{21}=\#(p_{n_2}\wedge p_{m_1}) * \#(p_{q_2}\wedge p_{a_1}\wedge p_{g_1})* \#(p_{c_2}\wedge p_{d_1}\wedge p_{s_1})$
\item[-] $\varphi_{22}=\#(p_{n_2}\wedge p_{m_2}) * \#(p_{q_2}\wedge p_{a_2}\wedge p_{g_2})* \#(p_{c_2}\wedge p_{d_2}\wedge p_{s_2})$
\end{itemize}

\noindent encode the connections of each of the two clients with the two services through their coordinators. 

The EPIL formula $\varphi$ is interpreted as follows. The two subformulas at the left of the first two concatenation operators encode the interleaving connections of the two services and the two clients with registry, respectively. Then, each of the three subformulas connected with the big disjunctions express that either one of the two clients or both of them (one at each time) are connected with the first service only, the second service only, or both of the services, respectively. Observe that in these subformulas we make use of the concatenation operator for the connection of the two clients with the same service.
 This results from the fact that the architecture requires that 
a unique client should be connected with each service through its coordinator. 
On the other hand, there is no restriction for the connection of a client with multiple services, and 
hence we use the shuffle operator to connect the corresponding subformulas at the third and fourth line of the EPIL formula $\varphi$. Then, the  first iteration operator in EPIL formula $\varphi$ allows the repetition of the interactions of only first, only  second, or both of
clients with the first service through its coordinator. Analogously, 
the next four iteration operators  encode the recursive interactions for some of the clients with  the second service or with both of the services (applied consecutively) through their coordinators. Finally, the  use of last iteration operator 
describes the repetition of the interactions for any of the aforementioned cases.

\definecolor{darkorange}{rgb}{1.0, 0.55, 0.0}
\definecolor{harlequin}{rgb}{0.25, 1.0, 0.0}
\definecolor{ao}{rgb}{0.0, 0.0, 1.0}

\begin{figure}[t]
\centering
\resizebox{1.0\linewidth}{!}{
\begin{tikzpicture}[>=stealth',shorten >=1pt,auto,node distance=3cm,baseline=(current bounding box.north)]
\tikzstyle{component}=[rectangle,ultra thin,draw=black!75,align=center,inner sep=9pt,minimum size=2.5cm,minimum height=3cm, minimum width=3.9cm]
\tikzstyle{port}=[rectangle,ultra thin,draw=black!75,minimum size=6mm]
\tikzstyle{bubble} = [fill,shape=circle,minimum size=5pt,inner sep=0pt]
\tikzstyle{type} = [draw=none,fill=none]

\node [component] (a1) {};

\tikzstyle{port}=[rectangle,ultra thin,draw=black!75,minimum size=6mm,inner sep=2.5pt]
\node[bubble] (a2)  [above left=-0.105cm and -0.65cm of a1]   {};   
\node [port]  (a3)  [above left=-0.605cm and -1.0cm of a1]  {$p_{m_1}$};  

\tikzstyle{port}=[rectangle,ultra thin,draw=black!75,minimum size=6mm,inner sep=4.5pt]
\node[bubble] (a4)  [above=-0.105cm of a1]   {};   
\node [port]  (a5)  [above=-0.605cm and -2.0cm of a1]  {$p_{a_1}$};  

\node[bubble] (a6)  [above right=-0.105cm and -0.65cm of a1]   {};   
\node [port]  (a7)  [above right=-0.605cm and -1.0cm of a1]  {$p_{d_1}$};  

\node[type]   (a8)      [below=-1.5cm of a1]{Coord. $1$};
\node                   [below=-0.1cm of a8]{$B(4)$};

\node [component] (h1) [left=1.5 of a1]{};

\tikzstyle{port}=[rectangle,ultra thin,draw=black!75,minimum size=6mm,inner sep=2.5pt]
\node[bubble] (h2)  [above left=-0.105cm and -0.65cm of h1]   {};   
\node [port]  (h3)  [above left=-0.605cm and -1.0cm of h1]  {$p_{m_2}$};  

\tikzstyle{port}=[rectangle,ultra thin,draw=black!75,minimum size=6mm,inner sep=4.5pt]
\node[bubble] (h4)  [above=-0.105cm of h1]   {};   
\node [port]  (h5)  [above=-0.605cm and -2.0cm of h1]  {$p_{a_2}$};  

\node[bubble] (h6)  [above right=-0.105cm and -0.65cm of h1]   {};   
\node [port]  (h7)  [above right=-0.605cm and -1.0cm of h1]  {$p_{d_2}$};  

\node[type]   (h8)      [below=-1.5cm of h1]{Coord. $2$};
\node                   [below=-0.1cm of h8]{$B(5)$};

\tikzstyle{port}=[rectangle,ultra thin,draw=black!75,minimum size=6mm,inner sep=4.5pt,minimum height=0.4pt]
\node [component] (b1) [above left=8.0cm and 10.0cm of a1]{};

\node[bubble]     (b2) [below left=-0.105cm and -0.7cm of b1]   {};   
\node [port]      (b3) [below left=-0.58cm and -1.1cm of b1]  {$p_{n_2}$};

\tikzstyle{port}=[rectangle,ultra thin,draw=black!75,minimum size=6mm,inner sep=4.5pt, minimum height=2pt, minimum width= 0.4cm]
\node[bubble]     (b4) [below=-0.105cm of b1]   {};  
\node [port]      (b5) [below=-0.605cm and -2.0cm of b1]  {$p_{q_2}$};

\tikzstyle{port}=[rectangle,ultra thin,draw=black!75,minimum size=6mm,inner sep=4.5pt, minimum height=1pt, minimum width= 0.4cm]
\node[bubble]     (b6) [below right=-0.105cm and -0.75cm of b1]   {};   
\node [port]      (b7) [below right=-0.58cm and -1.05cm of b1]  {$p_{c_2}$};  
 
\tikzstyle{port}=[rectangle,ultra thin,draw=black!75,minimum size=6mm,inner sep=4.5pt, minimum height=2.4pt, minimum width=0.8cm]
\node[bubble] (b8)  [above right=-0.75 and -0.1cm of b1]   {};   
\node [port]  (b9)  [above right=-1.0cm and -.81cm of b1]  {$p_{l_2}$};  

\node[bubble] (b10)  [below right=-1.2cm and -0.1cm of b1]   {};   
\node [port]  (b11)  [below right=-1.4 cm and -.81cm of b1]  {$p_{o_2}$};

\node[type]    (b12)        [above=-0.8cm of b1]{Client $2$};
\node                  [below=-0.1cm of b12]{$B(7)$};

\node [component] (g1) [right=2.0 of b1]{};

\tikzstyle{port}=[rectangle,ultra thin,draw=black!75,minimum size=6mm,inner sep=4.5pt, minimum height=2pt, minimum width=0.8cm]
\node[bubble] (g2)  [right=-0.1cm of g1]   {};   
\node [port]  (g3)  [right=-.81cm of g1]  {$p_{e}$};

\node[bubble] (g4)  [above left=-0.75 and -0.1cm of g1]   {};   
\node [port]  (g5)  [above left=-1.0cm and -.81cm of g1]  {$p_{u}$};  

\node[bubble] (g6)  [below left=-1.2cm and -0.1cm of g1]   {};   
\node [port]  (g7)  [below left=-1.4 cm and -.81cm of g1]  {$p_{t}$};  

\node[type]   (g8)      [above=-0.85cm of g1]{S. Registry};
\node                   [below=-0.1cm of g8]{$B(1)$};

\tikzstyle{port}=[rectangle,ultra thin,draw=black!75,minimum size=6mm,inner sep=4.5pt, minimum height=1pt, minimum width= 0.4cm]
\node [component] (c1) [below=2cm of b1]{};

\node[bubble]     (c2) [below left=-0.105cm and -0.7cm of c1]   {};   
\node [port]      (c3) [below left=-0.58cm and -1.1cm of c1]  {$p_{n_1}$};

\tikzstyle{port}=[rectangle,ultra thin,draw=black!75,minimum size=6mm,inner sep=4.5pt, minimum height=2pt, minimum width= 0.4cm]
\node[bubble]     (c4) [below=-0.105cm of c1]   {};  
\node [port]      (c5) [below=-0.605cm and -2.0cm of c1]  {$p_{q_1}$};

\tikzstyle{port}=[rectangle,ultra thin,draw=black!75,minimum size=6mm,inner sep=4.5pt, minimum height=1pt, minimum width= 0.4cm]
\node[bubble]     (c6) [below right=-0.105cm and -0.75cm of c1]   {};   
\node [port]      (c7) [below right=-0.58cm and -1.05cm of c1]  {$p_{c_1}$};  

\tikzstyle{port}=[rectangle,ultra thin,draw=black!75,minimum size=6mm,inner sep=4.5pt, minimum height=2.4pt, minimum width=0.8cm]
\node[bubble] (c8)  [above right=-0.75 and -0.1cm of c1]   {};   
\node [port]  (c9)  [above right=-1.0cm and -.81cm of c1]  {$p_{l_1}$};  

\node[bubble] (c10)  [below right=-1.2cm and -0.1cm of c1]   {};   
\node [port]  (c11)  [below right=-1.4 cm and -.81cm of c1]  {$p_{o_1}$};

\node[type]    (c12)        [above=-0.8cm of c1]{Client $1$};
\node                       [below=-0.1cm of c12]{$B(6)$};

\tikzstyle{port}=[rectangle,ultra thin,draw=black!75,minimum size=6mm,inner sep=4.5pt, minimum height=1pt, minimum width= 0.4cm]
\node [component] (d1) [right=9cm of b1]{};
\node[bubble] (d2) [below left=-0.105cm and -1.2cm of d1]   {};   
\node [port] (d3) [below left=-0.605cm and -1.5cm of d1]  {$p_{g_2}$};

\node[bubble] (d4) [below right=-0.105cm and -1.2cm of d1]   {};   
\node [port] (d5) [below right=-0.58cm and -1.5cm of d1]  {$p_{s_2}$};  

\node[bubble] (d6) [left=-0.1cm of d1]   {};   
\node [port] (d7) [left=-.81cm of d1]  {$p_{r_2}$}; 

\node[type] (d8) [above=-0.8cm of d1]{Service $2$};
\node            [below=-0.1cm of d8]{$B(3)$};

\node [component] (e1) [below right=2cm and 2cm of d1]{};
\node[bubble] (e2) [below left=-0.105cm and -1.2cm of e1]   {};   
\node [port] (e3) [below left=-0.605cm and -1.5cm of e1]  {$p_{g_1}$};  

\node[bubble] (e4) [below right=-0.105cm and -1.2cm of e1]   {};   
\node [port] (e5) [below right=-0.58cm and -1.5cm of e1]  {$p_{s_1}$}; 

\node[bubble] (e6) [left=-0.1cm of e1]   {};   
\node [port] (e7) [left=-.81cm of e1]  {$p_{r_1}$};

\node[type] (e8) [above=-0.8cm of e1]{Service $1$};
\node        [below=-0.1cm of e8]{$B(2)$};

\path[-]          (g2)  edge                  node {} (d6);
\path[-]          (d6)  edge                  node {} (g2);

\path[-]          (g2)  edge                 node {} (e6);
\path[-]          (e6)  edge                  node {} (g2);
\path[-]          (g4)  edge   [darkorange]               node {} (c8);
\path[-]          (c8)  edge   [darkorange]               node {} (g4);

\path[-]          (g6)  edge  [darkorange]                node {} (c10);
\path[-]          (c10)  edge  [darkorange]                node {} (g6);

\path[-]          (c2)  edge                  node {} (a2);
\path[-]          (c4)  edge                  node {} (a4);
\path[-]          (c6)  edge                  node {} (a6);

\path[-]          (a2)  edge [harlequin]                 node {} (c2);
\path[-]          (a4)  edge [harlequin]                node {} (c4);
\path[-]          (a6)  edge [harlequin]                 node {} (c6);


\path[-]          (g4)  edge  [darkorange]                node {} (b8);
\path[-]          (b8)  edge  [darkorange]                node {} (g4);

\path[-]          (g6)  edge   [darkorange]               node {} (b10);
\path[-]          (b10)  edge  [darkorange]                node {} (g6);

\path[-]          (b2)  edge    [harlequin]              node {} (a2);
\path[-]          (a2)  edge    [harlequin]             node {} (b2);

\path[-]          (b4)  edge   [harlequin]               node {} (a4);
\path[-]          (a4)  edge   [harlequin]              node {} (b4);

\path[-]          (b6)  edge  [harlequin]                node {} (a6);
\path[-]          (a6)  edge  [harlequin]                node {} (b6);

\path[-]          (e2)  edge  [harlequin]                node {} (a4);
\path[-]          (a4)  edge  [harlequin]                node {} (e2);

\path[-]          (e4)  edge [harlequin]                 node {} (a6);
\path[-]          (a6)  edge  [harlequin]               node {} (e4);

\end{tikzpicture}}
\caption{Request/Response architecture. The omitted interactions are derived similarly.}
\label{b_r_r}
\end{figure}
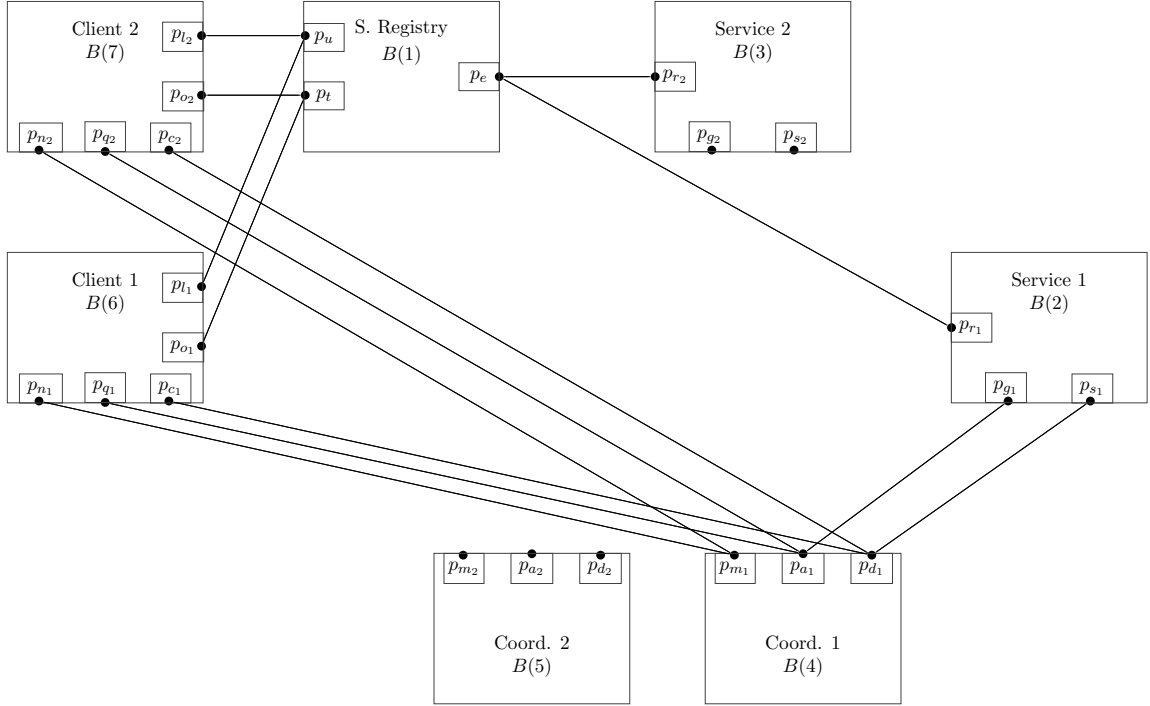
\end{exa}

\begin{exa} \textbf{(Publish/Subscribe)}
\label{b-pu-su}
We consider a component-based system $(\B, \varphi) $ with the Publish/Subscribe architecture. The latter is widely used in IoT 
 applications (cf. for instance \cite{Ol:AP,Pa:Pu}), and recently in cloud systems \cite{Ya:Pr} and robotics \cite{Ma:Ho}. Publish/Subscribe architecture involves three types of components, namely publishers, subscribers, and topics.
 Publishers advertise and transmit to
topics the type of messages they are able to produce. Then, subscribers are connected with the topics
they are interested in,  and topics in turn transfer the messages from publishers to corresponding subscribers. 
Once a subscriber receives
the message it has requested, then it is disconnected from the relevant topic. Publishers cannot check the existence of subscribers and vice-versa \cite{Eg:Pu}. 

For our example we consider two publisher components, two topic components and three subscriber components. 
Hence, we have that $\B=\lbrace B(i)\mid i\in [7]\rbrace$ where $B(1),\ldots, B(7)$ refer to the aforementioned components, respectively. The set of ports of each component is $P(1)=\lbrace p_{a_1},p_{t_1}\rbrace,P(2)=\lbrace p_{a_2},p_{t_2}\rbrace, P(3)=\lbrace p_{n_1},p_{r_1},p_{c_1},p_{s_1},p_{f_1}\rbrace, P(4)=\lbrace p_{n_2},p_{r_2},p_{c_2},p_{s_2},p_{f_2}\rbrace,$ $P(5)=\lbrace p_{e_1},p_{g_1},p_{d_1} \rbrace$, $P(6)=\lbrace p_{e_2},p_{g_2},p_{d_2} \rbrace$, and $P(7)=\lbrace p_{e_3},p_{g_3},p_{d_3} \rbrace$. 
 Figure \ref{b-p-s} depicts one of the possible instantiations for the interactions  
 among the components of our system. The ports $p_{a_{\kappa}}$ and $p_{t_{\kappa}}$, for $\kappa=1,2$, are used from the publishers for advertising and transferring their
 messages to topic components, respectively. Each of the two topics is notified from the publishers and receives their messages through ports $p_{n_{\kappa}}$ and $p_{r_{\kappa}}$, for $\kappa=1,2$,
 respectively. Ports $p_{c_{\kappa}},p_{s_{\kappa}}$ and $p_{f_{\kappa}}$, for $\kappa=1,2$, are used from  topic components for the connection with a subscriber, the sending of 
 a message to a subscriber and for finalizing their connection (disconnection), respectively. Subscribers use the ports $p_{e_{\mu}},p_{g_{\mu}},p_{d_{\mu}} $, for $\mu=1,2,3$, for connecting with the topic (express interest),
getting a message from the topic, and disconnecting from the topic, respectively. The permissible interactions in the architecture range over $I_{\B}$ which includes the following sets:
\begin{itemize}
\item The sets $\lbrace p_{a_{\kappa}},p_{n_{\lambda}}\rbrace$ and $\lbrace p_{t_{\kappa}}, p_{r_{\lambda}}\rbrace$ that refer to the connections of the two publishers with each of the two topics for advertising and transferring their messages, for $\kappa, \lambda=1,2$. The black lines in Figure \ref{b-p-s} depict a possible case for these interactions, and specifically between publisher $B(1)$ with topic $B(3)$ and  publisher $B(2)$ with both topics.
\item The sets $\lbrace p_{c_{\kappa}},p_{e_{\mu}}\rbrace$, $\lbrace p_{s_{\kappa}}, p_{g_{\mu}}\rbrace$ and $\lbrace p_{f_{\kappa}}, p_{d_{\mu}}\rbrace$ that capture the interactions between topics and subscribers, for $\kappa=1,2$ and $\mu=1,2,3$. Figure \ref{b-p-s} represents with orange lines a possible instantiation of these connections, and in particular for topic $B(3)$ with subscribers $B(5)$ and $B(6)$ as well as for topic $B(4)$ with subscriber $B(7)$.
\end{itemize}

\noindent Then, the EPIL formula $\varphi$ for the Publish/Subscribe architecture is $\varphi=\big(\varphi_1\vee \varphi_2\vee (\varphi_1\shuffle \varphi_2)\big)^+$ with
\begin{multline*}
\varphi_1=\bigg(\big( \xi_1 * \varphi_{11}\big)\vee \big( \xi_1 * \varphi_{12}\big)\vee \big(  \xi_1 * \varphi_{13}\big)\vee \\
\big(  \xi_1 * (\varphi_{11}\shuffle \varphi_{12})\big)\vee \big( \xi_1 * (\varphi_{11}\shuffle \varphi_{13})\big)\vee \\ \big( \xi_1 * (\varphi_{12}\shuffle \varphi_{13})\big)\vee \big( \xi_1 * (\varphi_{11}\shuffle\varphi_{12}\shuffle \varphi_{13})\big)\bigg)
\end{multline*}
and 
\begin{multline*}
\varphi_2=\bigg(\big(  \xi_2 * \varphi_{21}\big)\vee \big( \xi_2 * \varphi_{22}\big)\vee \big(  \xi_2* \varphi_{23}\big)\vee \\
\big(  \xi_2 * (\varphi_{21}\shuffle \varphi_{22})\big)\vee \big( \xi_2 * (\varphi_{21}\shuffle \varphi_{23})\big)\vee \\ \big( \xi_2 * (\varphi_{22}\shuffle \varphi_{23})\big)\vee \big(  \xi_2 * (\varphi_{21}\shuffle\varphi_{22}\shuffle \varphi_{23})\big)\bigg)
\end{multline*}
\noindent where we make use of the following auxiliary subformulas: 
\begin{itemize}
    \item[-] $\xi_1=\xi_{11}\vee \xi_{12}\vee (\xi_{11}\shuffle \xi_{12})$
    \item[-] $\xi_2=\xi_{21}\vee \xi_{22}\vee (\xi_{21}\shuffle \xi_{22})$
\end{itemize}
\noindent encode that each of the two topics connects only with the first publisher $B(1)$, or with
the second publisher $B(2)$ or with both of them, and
\begin{itemize}
\item[-] $\xi_{11}=\#(p_{n_1}\wedge p_{a_1}) * \#(p_{r_1}\wedge p_{t_1})$
\item[-] $\xi_{12}=\#(p_{n_1}\wedge p_{a_2}) * \#(p_{r_1}\wedge p_{t_2})$
\item[-] $\xi_{21}=\#(p_{n_2}\wedge p_{a_1}) * \#(p_{r_2}\wedge p_{t_1})$
\item[-] $\xi_{22}=\#(p_{n_2}\wedge p_{a_2}) * \#(p_{r_2}\wedge p_{t_2})$
\end{itemize}
\noindent describe the interactions of the two topics with each of the two publishers,  and 
\begin{itemize}
\item[-] $\varphi_{11}= \#(p_{c_1}\wedge p_{e_1})* \#(p_{s_1}\wedge p_{g_1}) * \#(p_{f_1}\wedge p_{d_1})$
\item[-] $\varphi_{12}= \#(p_{c_1}\wedge p_{e_2})* \#(p_{s_1}\wedge p_{g_2}) * \#(p_{f_1}\wedge p_{d_2})$
\item[-] $\varphi_{13}= \#(p_{c_1}\wedge p_{e_3})*\#(p_{s_1}\wedge p_{g_3}) * \#(p_{f_1}\wedge p_{d_3})$
\item[-] $\varphi_{21}= \#(p_{c_2}\wedge p_{e_1})*\#(p_{s_2}\wedge p_{g_1}) * \#(p_{f_2}\wedge p_{d_1})$
\item[-] $\varphi_{22}= \#(p_{c_2}\wedge p_{e_2})*\#(p_{s_2}\wedge p_{g_2}) * \#(p_{f_2}\wedge p_{d_2})$
\item[-] $\varphi_{23}= \#(p_{c_2}\wedge p_{e_3})*\#(p_{s_2}\wedge p_{g_3}) * \#(p_{f_2}\wedge p_{d_3})$.
\end{itemize}
\noindent describe the connections of the two topics with each of the three subscribers.  Each of $\varphi_1$ and $\varphi_2$ encode the interactions of first and second topic, respectively, with some of the publisher and subscriber components.  The use of the shuffle operator in each of $\varphi_1$ and $\varphi_2$ expresses that the interactions among distinct subscribers may be executed with any order. Then, the EPIL formula $\varphi$ captures the participation of only the first, or only the second, or both topics in the implementation of the architecture. Since there are no order restrictions for the interactions among multiple topics, the EPIL subformulas $\varphi_1$ and $\varphi_2$ are connected in $\varphi$ with the shuffle operator. Finally, the use of the iteration operator in EPIL formula $\varphi$ permits the repetition of the interactions of one or both of the topics with some of the available publishers and subscribers.

\definecolor{darkorange}{rgb}{1.0, 0.55, 0.0}
\definecolor{harlequin}{rgb}{0.25, 1.0, 0.0}
\definecolor{ao}{rgb}{0.0, 0.0, 1.0}

\begin{figure}[h]
\centering
\resizebox{0.7\linewidth}{!}{
\begin{tikzpicture}[>=stealth',shorten >=1pt,auto,node distance=1cm,baseline=(current bounding box.north)]
\tikzstyle{component}=[rectangle,ultra thin,draw=black!75,align=center,inner sep=9pt,minimum size=1.5cm,minimum height=3.5cm,minimum width=3cm]
\tikzstyle{port}=[rectangle,ultra thin,draw=black!75,minimum size=7mm]
\tikzstyle{bubble} = [fill,shape=circle,minimum size=5pt,inner sep=0pt]
\tikzstyle{type} = [draw=none,fill=none]

\node [component,align=center] (a1)  {};
\node [port] (a2) [above right=-1.30 and -0.78cm of a1]  {$p_{a_1}$};
\node[bubble] (a3) [below right=-0.40 and -0.10cm of a2]   {};

\node [port] (a4) [below =0.90cm of a2]  {$p_{t_1}$};
\node[bubble] (a5) [below right=-0.35 and -0.10cm of a4]   {};

\node[type] (a6) [above=-0.45cm of a1]{Publ. $1$};
\node[type]  [below=-0.05cm of a6]{$B(1)$};

\node [component] (b1) [right=3cm of a1] {};
\node [port] (b2) [below right=-1.07cm and -0.78cm of b1]  {$p_{f_1}$}; 
\node[bubble] (b3) [below right=-0.40cm and -0.10cm of b2]   {};   

\node [port] (b4) [above=0.25cm of b2]  {$p_{s_1}$}; 
\node[bubble] (b5) [above right=-0.40cm and -0.10cm of b4]   {};   

\node [port] (b6) [above =0.3cm of b4]  {$p_{c_1}$}; 
\node[bubble] (b7) [above right=-0.40cm and -0.10cm of b6]   {};

\node [port] (b8) [above left= -1.4cm and -0.78cm of b1]  {$p_{n_1}$};
\node[bubble] (b9) [above left=-0.35cm and -0.10cm of b8]   {};

\node [port] (b10) [below= 0.85cm of b8]  {$p_{r_1}$};
\node[bubble] (b11) [below left=-0.40 and -0.10cm of b10]   {};
 
\node[type] (b12) [above=-0.55cm of b1]{Topic $1$};
\node[type]  [below=-0.1cm of b12]{$B(3)$};

\node [component] (c1) [above right= -1cm and 3cm of b1] {};
\node [port] (c2) [below left=-1.07cm and -0.78cm of c1]  {$p_{d_1}$}; 
\node[bubble] (c3) [below left=-0.40cm and -0.10cm of c2]   {};  

\node [port] (c4) [above=0.25cm of c2]  {$p_{g_1}$}; 
\node[bubble] (c5) [above left=-0.40cm and -0.10cm of c4]   {};   

\node [port] (c6) [above=0.3cm of c4]  {$p_{e_1}$}; 
\node[bubble] (c7) [above left=-0.40cm and -0.10cm of c6]   {};   
\node[type] (c8) [above=-0.55cm of c1]{Subs. $1$};
\node[type]  [below=-0.1cm of c8]{$B(5)$};

\node [component] (d1) [below right=-1cm and 3cm of b1] {};
\node [port] (d2) [below left=-1.07cm and -0.78cm of d1]  {$p_{d_2}$}; 
\node[bubble] (d3) [below left=-0.40cm and -0.10cm of d2]   {};  

\node [port] (d4) [above=0.25cm of d2]  {$p_{g_2}$}; 
\node[bubble] (d5) [above left=-0.40cm and -0.10cm of d4]   {};   

\node [port] (d6) [above=0.3cm of d4]  {$p_{e_2}$}; 
\node[bubble] (d7) [above left=-0.40cm and -0.10cm of d6]   {};   
\node[type] (d8) [above=-0.55cm of d1]{Subs. $2$};
\node[type]  [below=-0.1cm of d8]{$B(6)$};


\node [component,align=center] (e1) [below=4.5cm of a1] {};
\node [port] (e2) [above right=-1.30 and -0.78cm of e1]  {$p_{a_2}$};
\node[bubble] (e3) [below right=-0.40 and -0.10cm of e2]   {};

\node [port] (e4) [below =0.90cm of e2]  {$p_{t_2}$};
\node[bubble] (e5) [below right=-0.35 and -0.10cm of e4]   {};

\node[type] (e6) [above=-0.45cm of e1]{Publ. $2$};
\node[type]  [below=-0.05cm of e6]{$B(2)$};

\node [component] (f1) [right=3cm of e1] {};
\node [port] (f2) [below right=-1.07cm and -0.78cm of f1]  {$p_{f_2}$}; 
\node[bubble] (f3) [below right=-0.40cm and -0.10cm of f2]   {};   

\node [port] (f4) [above=0.25cm of f2]  {$p_{s_2}$}; 
\node[bubble] (f5) [above right=-0.40cm and -0.10cm of f4]   {};   

\node [port] (f6) [above =0.3cm of f4]  {$p_{c_2}$}; 
\node[bubble] (f7) [above right=-0.40cm and -0.10cm of f6]   {};

\node [port] (f8) [above left= -1.4cm and -0.78cm of f1]  {$p_{n_2}$};
\node[bubble] (f9) [above left=-0.35cm and -0.10cm of f8]   {};

\node [port] (f10) [below= 0.85cm of f8]  {$p_{r_2}$};
\node[bubble] (f11) [below left=-0.40 and -0.10cm of f10]   {};
 
\node[type] (f12) [above=-0.55cm of f1]{Topic $2$};
\node[type]  [below=-0.1cm of f12]{$B(4)$};

\node [component] (g1) [right= 3cm of f1] {};
\node [port] (g2) [below left=-1.07cm and -0.78cm of g1]  {$p_{d_3}$}; 
\node[bubble] (g3) [below left=-0.40cm and -0.10cm of g2]   {};  

\node [port] (g4) [above=0.25cm of g2]  {$p_{g_3}$}; 
\node[bubble] (g5) [above left=-0.40cm and -0.10cm of g4]   {};   

\node [port] (g6) [above=0.3cm of g4]  {$p_{e_3}$}; 
\node[bubble] (g7) [above left=-0.40cm and -0.10cm of g6]   {};   
\node[type] (g8) [above=-0.55cm of g1]{Subs. $3$};
\node[type]  [below=-0.1cm of g8]{$B(7)$};

\path[-]          (a3)  edge                  node {} (b9);
\path[-]          (b9)  edge                  node {} (a3);

\path[-]          (a5)  edge                  node {} (b11);
\path[-]          (b11)  edge                  node {} (a5);

\path[-]          (b7)  edge  [darkorange]                node {} (c7);
\path[-]          (c7)  edge  [darkorange]                node {} (b7);

\path[-]          (b7)  edge  [darkorange]                node {} (d7);
\path[-]          (d7)  edge  [darkorange]                node {} (b7);

\path[-]          (b5)  edge   [darkorange]               node {} (c5);
\path[-]          (c5)  edge   [darkorange]               node {} (b5);

\path[-]          (b5)  edge  [darkorange]                node {} (d5);
\path[-]          (d5)  edge   [darkorange]               node {} (b5);

\path[-]          (b3)  edge   [darkorange]               node {} (c3);
\path[-]          (c3)  edge  [darkorange]                node {} (b3);

\path[-]          (b3)  edge   [darkorange]               node {} (d3);
\path[-]          (d3)  edge  [darkorange]                node {} (b3);


\path[-]          (e3)  edge                  node {} (f9);
\path[-]          (f9)  edge                  node {} (e3);

\path[-]          (e5)  edge                  node {} (f11);
\path[-]          (f11)  edge                  node {} (e5);

\path[-]          (e3)  edge                  node {} (b9);
\path[-]          (b9)  edge                  node {} (e3);

\path[-]          (e5)  edge                  node {} (b11);
\path[-]          (b11)  edge                  node {} (e5);


\path[-]          (f7)  edge  [darkorange]                node {} (g7);
\path[-]          (g7)  edge  [darkorange]                node {} (f7);

\path[-]          (f5)  edge  [darkorange]                node {} (g5);
\path[-]          (g5)  edge  [darkorange]                node {} (f5);

\path[-]          (f3)  edge  [darkorange]                node {} (g3);
\path[-]          (g3)  edge  [darkorange]                node {} (f3);

\end{tikzpicture}}
\caption{Publish/Subscribe architecture. A possible execution for the interactions.}
\label{b-p-s}
\end{figure}
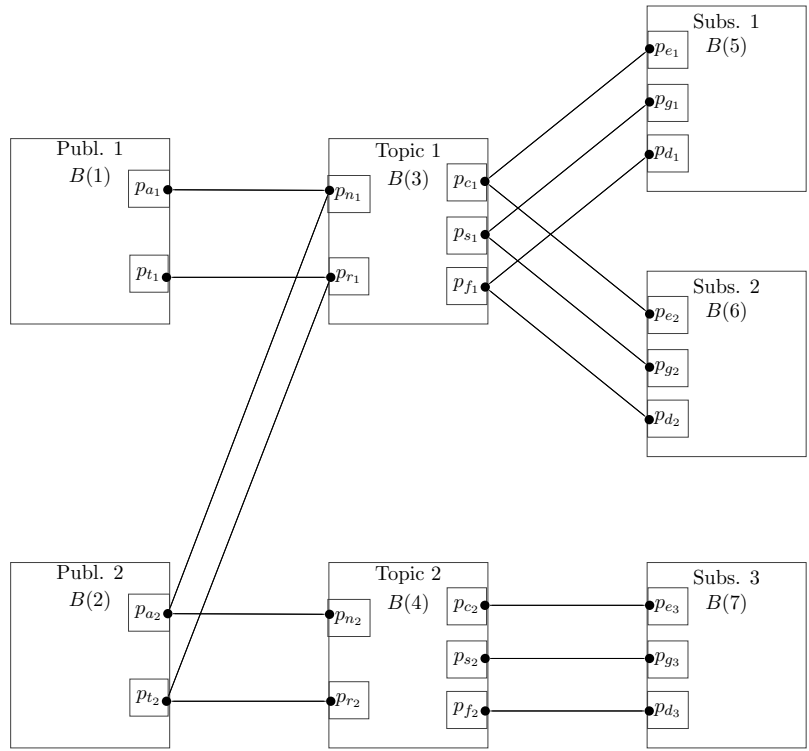

\end{exa}
 
The presented examples illustrate that EPIL formulas are expressive enough to 
encode the execution order of the permissible interactions as well as to specify
the subsequent implementation of a component-based system's architecture through recursive interactions. We note that in \cite{Pi:Ar} we described Blackboard, Publish/Subscribe, and Request/Response architectures, without applying any iteration operator, and hence without allowing recursion in the corresponding interactions.

\section{Parametric component-based systems}

In this section we deal with component-based systems in the parametric setting. 
Component-based systems considered in Subsection \ref{comp_based_system} are comprised of a finite number of components 
which are of the same or distinct type. On the other hand, in the parametric setting a
component-based model is comprised of a finite number of distinct \emph{component types} where the cardinality 
of the \emph{instances} of each type is a parameter for the system. It should be clear, that in real world applications we do not need an unbounded number of components. Nevertheless, the number of instances of every component type is unknown or it can be modified during a process. Therefore, we consider parametric component-based systems, i.e., component-based systems with
infinitely many instances of every component type.

Let $\B=\{B(i) \mid i \in [n]  \}$ be a set of component types. For every $i \in [n] $ and $j \geq 1$ we consider a copy $B(i,j)=(Q(i,j),P(i,j),q_{0}(i,j),R(i,j))$ of $B(i)$, namely the \emph{$j$-th instance of} $B(i)$.  Hence, for every $i \in [n]$ and $j \geq 1$, the instance $B(i,j)$ is also a component and we call it a \emph{parametric component} or a \emph{component instance}. We assume that $(Q(i,j) \cup P(i,j)) \cap (Q(i', j') \cup  P(i',j')) = \emptyset$ whenever $i \neq i' $ or $j \neq j'$ for every $ i, i' \in [n]$ and $j, j' \geq 1$.  This restriction is needed in order to identify the distinct parametric components. It also permits us to use, without any confusion, the notation  $P(i,j)=\{p(j) \mid p\in P(i)\}$ for every $i \in [n]$ and $j \geq 1$.   We set $p\mathcal{B}= \{B(i,j) \mid i \in [n],  j \geq 1 \} $ and call it a set of \emph{parametric components}. The set of ports of $p\B$ is given by $P_{p\B} =\bigcup_{i \in [n], j\geq 1}  P(i,j)$.

\begin{rem}
Observe that in the parametric setting we use an index ``$i$" in order to specify the type of a component. Specifically, 
we let $B(i,j)$ denote the $j$-th instance of a component of type $i$, for $i \in [n]$ and $j \geq 1$. On the other hand, for simplicity we avoided such a notation for component-based systems in the non-parametric setting. In particular we presented  a sequential enumeration of
the several, finite in number, components of the same or different type by $B(i)$, for $i\in [n]$. This explains why in the parametric setting we use the term ``component type". For instance in the Request/Response architecture of Example \ref{b-r-r}, we denote by $B(2),B(3)$ the two service components. On the other hand, in a parametric Request/Response architecture we would let $B(2,1)$ and $B(2,2)$ refer to two component instances of the service component type. For this, we arbitrarily choose $i=2$ to denote the type of service, and in turn we let $j=1,2$ refer to the two instances of that type.
\end{rem}

As it is already mentioned, in practical applications we do not know how many instances of each component type are connected at a concrete time. This means that we cannot define interactions of $p\mathcal{B}$ in the same way as we did for finite sets of components. Hence, we need a symbolic representation to describe interactions, and in turn architectures, of parametric systems. For this, we introduce the first-order extended interaction
 logic which is proved sufficient to describe a wide class of architectures of parametric component-based systems. Similarly to EPIL the semantics of our logic encodes the order and iteration of the interactions implemented within a parametric architecture. Formalization of such aspects is important for constructing well-defined architectures, and hence parametric systems which are 
less error prone and satisfy most of their requirements \cite{Am:Pa,Bl:De,De:Pa}.

\subsection{First-order extended interaction logic}
\label{sub_FOEIL}
In this subsection we introduce the  first-order extended interaction logic as a modelling language for describing the interactions of parametric component-based systems. For this, we need to equip EPIL formulas with variables. Due to the
nature of parametric systems we need to distinguish variables referring to different component types. Let $p\mathcal{B}= \{B(i,j) \mid i \in [n], j \geq 1 \} $ be a set of parametric components. We consider pairwise disjoint countable sets of first-order variables  $\mathcal{X}^{(1)}, \ldots , \mathcal{X}^{(n)}$ referring to  instances of component types $B(1), \ldots, B(n)$, respectively. First-order  variables in $\mathcal{X}^{(i)}$, for every $i \in [n]$, will be denoted by small  letters with the corresponding superscript. Hence by $x^{(i)} $  we understand that $x^{(i)} \in \mathcal{X}^{(i)}$, $i\in [n]$, is a first-order variable referring to an  instance of component type $B(i)$. We let $\mathcal{X}=\mathcal{X}^{(1)}\cup \ldots \cup \mathcal{X}^{(n)}$ and  set  $P_{p\B(\mathcal{X})} =\left\{p\left(x^{(i)}\right) \mid i \in [n], x^{(i)} \in \mathcal{X}^{(i)},  \text{ and } p\in P(i)\right \}$. 
  
\begin{defi} \label{def_fOEIL}
Let $p\mathcal{B}= \{B(i,j) \mid i \in [n], j \geq 1 \} $ be a set of parametric components. Then the syntax of first-order extended interaction logic (FOEIL for short) formulas $\psi$ over $p\mathcal{B}$\footnote{According to our terminology for EPIL formulas, a FOEIL formula should be defined over the set of ports of $p\B$. Nevertheless, we prefer for simplicity to refer to the set $p\B$ of parametric components.} is given by the grammar
\begin{multline*}
\psi  ::= \varphi  \mid x^{(i)}=y^{(i)} \mid  \neg(x^{(i)}=y^{(i)}) \mid  \psi\vee\psi \mid \psi \wedge \psi \mid   \psi * \psi   \mid \psi\shuffle \psi \mid \psi^+ \mid \\ \exists x^{(i)}. \psi \mid  \forall x^{(i)} . \psi \mid \exists^* x^{(i)}. \psi \mid \forall^*x^{(i)}.\psi \mid \exists^{\shuffle} x^{(i)}.\psi \mid \forall^{\shuffle}x^{(i)}.\psi  
\end{multline*}
where $\varphi$ is an EPIL formula over $P_{p\B(\mathcal{X})}$, $i \in [n]$, $x^{(i)}, y^{(i)} $ are  first-order variables in $\mathcal{X}^{(i)}$, $\exists^*$ denotes the existential concatenation  quantifier, $\forall^*$ the universal concatenation quantifier, $\exists^{\shuffle}$ is the existential shuffle quantifier, and $\forall^{\shuffle}$ is the universal shuffle quantifier. Furthermore, we assume that whenever $\psi$ contains a subformula of the form $\exists ^*x^{(i)}.\psi'$ or $\exists ^{\shuffle} x^{(i)}.\psi'$, then the application of negation in $\psi'$ is permitted only  in PIL formulas and formulas of the form $x^{(j)}=y^{(j)}$. 
\end{defi}

Let $\psi$ be a FOEIL formula over $p\B$. As usual, we denote by $\mathrm{free}(\psi)$ the set of free variables of $\psi$. If $\psi$ has no free variables, then it is a \emph{sentence}. 
We consider a mapping $r:[n] \rightarrow \mathbb{N}$. The value $r(i)$, for every $i \in [n]$, intends to represent the finite number of instances of the component type $B(i)$ in the parametric system, affecting in turn, the corresponding interactions. Hence, for different mappings we obtain a different parametric system. 
We let $p\B(r)= \{B(i,j) \mid i \in [n], j \in [r(i)] \} $ and call it  the \emph{instantiation of} $p\B$ w.r.t. $r$. We denote by $P_{p\B(r)}$ the set of all ports of components' instances in $p\B(r)$, i.e., $P_{p\B(r)}= \bigcup_{i\in [n], j \in [r(i)]}P(i,j)$. The set $I_{p\B(r)}$ of interactions of $p\B(r)$ is given by $I_{p\B(r)}=\{a \in I(P_{p\B(r)}) \mid  \vert a \cap P(i,j)\vert \leq 1 \text{ for every } i\in [n] \text{ and } j \in [r(i)]\}$.

Let $\mathcal{V} \subseteq \mathcal{X} $ be a finite set of first-order variables.  We let $P_{p\B(\mathcal{V})}= \{ p(x^{(i)}) \in P_{p\B(\mathcal{X})} \mid x^{(i)} \in \mathcal{V} \}$.
To interpret FOEIL formulas over $p\B$ we use the notion of an \emph{assignment} defined with respect to the set of variables $\mathcal{V}$ and the mapping $r$.  Formally, a $(\mathcal{V},r)$-\emph{assignment} is a mapping $\sigma : \mathcal{V} \rightarrow \mathbb{N}  $ such that $\sigma(\mathcal{V} \cap \mathcal{X}^{(i)} ) \subseteq [r(i)]$ for every $i \in [n]$. If $\sigma$ is a $(\mathcal{V},r)$-assignment, then $\sigma[x^{(i)} \rightarrow j]$  is the $(\mathcal{V} \cup \{x^{(i)}\},r)$-assignment which acts as $\sigma$ on $\mathcal{V}\setminus \{x^{(i)}\}$  and assigns $j$ to $x^{(i)}$. If  $\varphi$ is an EPIL formula over $P_{p\B(\mathcal{V})}$, then  $\sigma(\varphi)$ is an EPIL formula over $P_{p\B(r)}$ which is obtained by $\varphi$ by replacing every port $p(x^{(i)})$  in $\varphi$ by $p(\sigma(x^{(i)}))$. Intuitively, a $(\mathcal{V},r)$-assignment $\sigma$ assigns unique identifiers to each instance in a parametric system, w.r.t. the mapping $r$. 

We interpret FOEIL formulas over triples consisting of a mapping $r:[n] \rightarrow \mathbb{N}$, a $(\mathcal{V},r)$-assignment $\sigma$, and a word $w \in I_{p\B(r)}^+$. 
As for EPIL formulas, we define for every FOEIL formula $\psi$ over $p\B$ and natural number $\nu\geq 1$, the FOEIL formula $\psi^{\nu}$ over $p\B$ by induction on $\nu$: $\psi^1=\psi$ and $\psi^{\nu+1}=\psi^{\nu} * \psi$. The semantics of formulas of the form $\exists^* x^{(i)} . \psi$ and $\forall^* x^{(i)} . \psi$
(resp. $\exists^{\shuffle} x^{(i)} . \psi$ and $\forall^{\shuffle} x^{(i)} . \psi$) refer to satisfaction of $\psi$ by subwords of  
$w$. The subwords correspond to component instances which are determined by the application of the assignment $\sigma$ to $ x^{(i)}$, and  $w$ results by the $*$ (resp. $\shuffle$) operator among the subwords.

\begin{defi}
Let $\psi$ be a FOEIL formula over a set $p\mathcal{B}= \{B(i,j) \mid i \in [n], j \geq 1 \} $ of parametric components and $\mathcal{V} \subseteq \mathcal{X}$ a finite set containing $\mathrm{free}(\psi)$. Then for every $r:[n] \rightarrow \mathbb{N}$, $(\mathcal{V},r)$-assignment $\sigma$, and $w  \in I_{p\B(r)}^+$ we define the satisfaction relation $(r,\sigma,w) \models\psi$, inductively on the structure of $\psi$ as follows:

\

\begin{itemize}

     \item[-] $(r,\sigma,w) \models  \varphi$ iff $w \models \sigma(\varphi)$,
     
     \item[-] $(r,\sigma,w) \models  x^{(i)}=y^{(i)}$ iff $\sigma(x^{(i)})=\sigma(y^{(i)})$,
     
     \item[-] $(r,\sigma,w) \models  \neg(x^{(i)}=y^{(i)})$ iff $(r,\sigma,w) \not\models x^{(i)}=y^{(i)}$,

     \item[-]  $(r,\sigma, w)  \models \psi_1 \vee \psi_2$ iff $(r,\sigma, w)\models  \psi_1$ or $(r,\sigma, w)\models \psi_2$, 
 
\item[-]  $(r,\sigma, w)  \models \psi_1 \wedge \psi_2$ iff $(r,\sigma, w)\models  \psi_1$ and $(r,\sigma, w)\models \psi_2$,  
     
     \item[-] $(r,\sigma, w) \models \psi_1 * \psi_2$ iff there exist $w_1,w_2 \in I_{p\B(r)}^+$ such that $w=w_1w_2$ and   $(r,\sigma, w_i) \models  \psi_i$ for $i=1,2$, 
     \item[-] $(r, \sigma, w) \models \psi_1 \shuffle \psi_2$ iff there exist $w_1, w_2 \in I_{p\B(r)}^+$ such that $w \in w_1\shuffle w_2$ and $(r, \sigma, w_i) \models \psi_i$ for $i=1,2$, 
     
\item[-] $(r, \sigma, w) \models \psi^+ $ iff there exists $\nu \geq 1$ such that $(r, \sigma, w) \models \psi^{\nu}$,     
     
     \item[-]   $(r,\sigma, w) \models \exists x^{(i)} . \psi$ iff there exists $j \in [r(i)] $ such that  $(r, \sigma[x^{(i)} \rightarrow  j ], w) \models \psi$,

      \item[-]  $(r,\sigma, w) \models \forall x^{(i)} . \psi$ iff  $(r, \sigma[x^{(i)} \rightarrow j ], w) \models \psi$ for every $j \in [r(i)]$,
      
\item[-]  $(r,\sigma, w) \models \exists^ * x^{(i)} . \psi$ iff there exists $1\leq k\leq r(i)$ and $w_{l_1}, \ldots, w_{l_k} \in I_{p\B(r)}^+$ with $1 \leq l_1 <  \ldots <  l_k \leq r(i)$ such that $w = w_{l_1}  \ldots  w_{l_k}$ and $(r, \sigma[x^{(i)} \rightarrow j ], w_j) \models \psi$ for every $j = l_1, \ldots, l_k$,

      \item[-]  $(r,\sigma, w) \models \forall^* x^{(i)} . \psi$ iff there exist $w_1, \ldots, w_{r(i)} \in I_{p\B(r)}^+$ such that $w=w_1 \ldots w_{r(i)}$ and $(r, \sigma[x^{(i)} \rightarrow j ], w_j) \models \psi$ for every $j \in [r(i)]$,
      
\item[-]  $(r,\sigma, w) \models \exists^{\shuffle} x^{(i)} . \psi$ iff there exists $1\leq k\leq r(i)$ and $w_{l_1}, \ldots, w_{l_k} \in I_{p\B(r)}^+$ with $1 \leq l_1 <  \ldots <  l_k \leq r(i)$  such that $w \in w_{l_1} \shuffle \ldots \shuffle w_{l_k}$ and $(r, \sigma[x^{(i)} \rightarrow j ], w_j) \models \psi$ for every $j = l_1, \ldots, l_k$,

\item[-]  $(r,\sigma, w) \models \forall^{\shuffle} x^{(i)} . \psi$ iff there exist $w_1, \ldots, w_{r(i)} \in I_{p\B(r)}^+$ such that $w \in w_1 \shuffle \ldots \shuffle w_{r(i)}$ and $(r, \sigma[x^{(i)} \rightarrow j ], w_j) \models \psi$ for every $j \in [r(i)]$.
\end{itemize}
\end{defi}

\noindent By definition of parametric systems, all instances of each component type are identical, hence the order specified above in the semantics of $\exists^*, \forall^*, \exists^{\shuffle}, \forall^{\shuffle}$ quantifiers causes no restriction in the derived architecture. Moreover, it is important for the complexity of the translation algorithm of FOEIL formulas to finite automata (cf. proof of Proposition \ref{formula-aut}). 

If $\psi$ is a FOEIL sentence over $p\B$, then we simply write $(r, w) \models \psi$. Let also $\psi'$ be a FOEIL sentence over $p\mathcal{B}$. Then,  $\psi$ and $\psi'$ are called \emph{equivalent w.r.t.} $r$ whenever $(r,w) \models \psi$ iff $(r,w) \models \psi'$, for every $w  \in I_{p\B(r)}^+$.

In the sequel, we shall write also $x^{(i)} \neq y^{(i)}$  for $\neg(x^{(i)}=y^{(i)})$.

Let $\beta$ be a boolean combination of atomic formulas of the form  $x^{(i)}=y^{(i)}$ and $\psi$ a FOEIL formula over $p\B$. Then, we define 
$\beta \rightarrow \psi ::= \neg \beta \vee \psi$.

For simplicity sometimes we denote boolean combinations of  formulas of the form $x^{(i)} = y^{(i)}$ as constraints. For instance we write $\exists x^{(i)}\forall y^{(i)}\exists x^{(j)}\forall y^{(j)} ((x^{(i)}  \neq y^{(i)}) \wedge (x^{(j)} \neq y^{(j)})) .\psi$ for  $\exists x^{(i)}\forall y^{(i)}\exists x^{(j)}\forall y^{(j)}. (((x^{(i)}\neq y^{(i)}) \wedge (x^{(j)} \neq y^{(j)})) \rightarrow \psi)$. 

Note that in \cite{Ma:Co} the authors considered a universe of component types and hence, excluded in their logic formulas the erroneous types
for each architecture. Such a restriction is not needed in our setting since we consider a well-defined set $[n]$ of component types for
each architecture. 
Now we are ready to formally define the concept of a parametric component-based system.

\begin{defi}
A parametric component-based system is a pair $(p\B, \psi)$ where $p\B=\{B(i,j) \mid i\in [n], j \geq 1 \}$ is a set of parametric components and $\psi$ is a FOEIL sentence over $p\B$.  
\end{defi}

In the sequel, for simplicity we refer to parametric component-based systems as parametric systems. We remind that in this work we focus on the architectures of parametric systems. The study of parametric systems' semantics
is left for investigation in future work as a part of parametric verification.

For our examples in the next subsection, we shall need the following macro FOEIL formula. Let $p\B=\{B(i,j) \mid i\in [n], j \geq 1\}$ and $1 \leq i_1, \ldots, i_m \leq n$ be pairwise different indices. Then we set
\begin{multline*}
\#\big(p_{i_1}(x^{(i_1)}) \wedge \ldots \wedge p_{i_m}(x^{(i_m)})\big)::= \big (p_{i_1}(x^{(i_1)}) \wedge \ldots \wedge p_{i_m}(x^{(i_m)})\big) \wedge \\ 
\bigg(\bigwedge_{j =i_1, \ldots, i_m } \bigwedge_{p \in P(j)\setminus \{ p_j\}} \neg p(x^{(j)}) \bigg) \wedge 
\bigg(\bigwedge_{j=i_1, \ldots, i_m} \forall y^{(j)}(y^{(j)} \neq x^{(j)}).\bigwedge_{p \in P(j)} \neg p(y^{(j)}) \bigg ) \wedge \\
 \bigg( \bigwedge_{k \in [n]\setminus \{i_1, \ldots, i_m\} }\bigwedge_{p \in P(k)}\forall x^{(k)} .   \neg p(x^{(k)}) \bigg ) .
\end{multline*} 
The first $m-1$ conjunctions, in the above formula,   express that the ports appearing in the argument of $\#$ participate in the interaction. In the second line, the   
  double indexed conjunctions in the first pair of big parentheses disable all the other ports of the participating instances of component type $i_1, \ldots, i_m$ described by variables $x^{(i_1)}, \ldots, x^{(i_m)}$, respectively; conjunctions in the second pair of parentheses disable all ports of remaining instances of component types $i_1, \ldots, i_m$. Finally, the last conjunct in the third line ensures that no ports in instances of remaining component types
participate in the interaction.

\subsection{Examples of FOEIL sentences for parametric architectures}

In this subsection we present several examples of FOEIL sentences describing concrete parametric architectures. In what follows we often refer to a component instance simply by instance. Moreover, whenever is defined a unique instance for a component type we may also consider the corresponding set of variables as a singleton.

\begin{exa}
\label{ma-sl}
\textbf{(Parametric Master/Slave)} We present a FOEIL sentence for the parametric Master/Slave architecture. Master/Slave architecture concerns two types of components, namely masters and slaves \cite{Ma:Co}. We denote by $p_m$ the unique port of master component and by $p_s$ the unique port of slave component. Every slave must be connected with exactly one master. Interactions among masters (resp. slaves) are not permitted. An instantiation of the architecture for two masters and two slaves with all the possible cases of the allowed interactions is shown in Figure \ref{m-s}. 
   
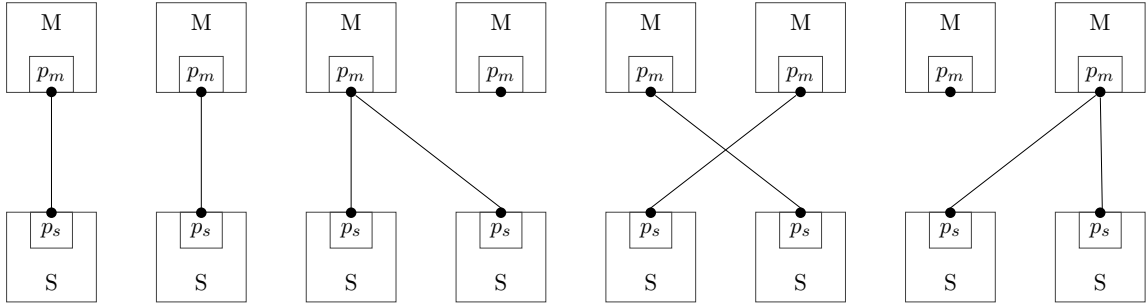
\begin{figure}[h]
\centering
\resizebox{1.0\linewidth}{!}{
\begin{tikzpicture}[>=stealth',shorten >=1pt,auto,node distance=1cm,baseline=(current bounding box.north)]
\tikzstyle{component}=[rectangle,ultra thin,draw=black!75,align=center,inner sep=9pt,minimum size=1.5cm]
\tikzstyle{port}=[rectangle,ultra thin,draw=black!75,minimum size=6mm]
\tikzstyle{bubble} = [fill,shape=circle,minimum size=5pt,inner sep=0pt]
\tikzstyle{type} = [draw=none,fill=none]

 \node [component,align=center] (a1)  {};
 \node [port] (a2) [below=-0.605cm of a1]  {$p_m$};
 \node[bubble] (a3) [below=-0.105cm of a1]   {};
\node[type]  [above=-0.6cm of a1]{M};

\node [component] (a4) [below=2cm of a1]  {};
\node [port,align=center,inner sep=5pt] (a5) [above=-0.6035cm of a4]  {$p_s$};
\node[bubble] (a6) [above=-0.105cm of a4]   {};
\node[type]  [below=-0.6cm of a4]{S};

\path[-]          (a1)  edge                  node {} (a4);

 \node [component] (b1) [right=1cm of a1] {};
 \node [port] (b2) [below=-0.605cm of b1]  {$p_m$};
 \node[bubble] (b3) [below=-0.105cm of b1]   {};
 \node[type]  [above=-0.6cm of b1]{M};

\node [component] (b4) [below=2cm of b1]  {};
\node [port,align=center,inner sep=5pt] (b5) [above=-0.6035cm of b4]  {$p_s$};
\node[bubble] (b6) [above=-0.105cm of b4]   {};
\node[type]  [below=-0.6cm of b4]{S};

\path[-]          (b1)  edge                  node {} (b4);

 \node [component] (c1)[right=1cm of b1] {};
 \node [port] (c2) [below=-0.605cm of c1]  {$p_m$};
 \node[bubble] (c3) [below=-0.105cm of c1]   {};
  \node[type]  [above=-0.6cm of c1]{M};

\node [component] (c4) [below=2cm of c1]  {};
\node [port,align=center,inner sep=5pt] (c5) [above=-0.6035cm of c4]  {$p_s$};
\node[bubble] (c6) [above=-0.105cm of c4]   {};
\node[type]  [below=-0.6cm of c4]{S};

\path[-]          (c1)  edge                  node {} (c4);

 \node [component] (d1)[right=1cm of c1] {};
 \node [port] (d2) [below=-0.605cm of d1]  {$p_m$};
 \node[bubble] (d3) [below=-0.105cm of d1]   {};
  \node[type]  [above=-0.6cm of d1]{M};

\node [component] (e4) [below=2cm of d1]  {};
\node [port,align=center,inner sep=5pt] (e5) [above=-0.6035cm of e4]  {$p_s$};
\node[] (i1) [above right=-0.25 cm and -0.25cm of e5]   {};
\node[bubble] (e6) [above=-0.105cm of e4]   {};
\node[type]  [below=-0.6cm of e4]{S};

\path[-]          (c3)  edge  node {}           (i1);

 \node [component] (f1)[right=1cm of d1] {};
 \node [port] (f2) [below=-0.605cm of f1]  {$p_m$};
 \node[bubble] (f3) [below=-0.105cm of f1]   {};
  \node[type]  [above=-0.6cm of f1]{M};

\node [component] (g4) [below=2cm of f1]  {};
\node [port,align=center,inner sep=5pt] (g5) [above=-0.6035cm of g4]  {$p_s$};
\node[] (i2) [above left=-0.25 cm and -0.25cm of g5]   {};
\node[bubble] (g6) [above=-0.105cm of g4]   {};
\node[type]  [below=-0.6cm of g4]{S};

 \node [component] (h1)[right=1cm of f1] {};
 \node [port] (h2) [below=-0.605cm of h1]  {$p_m$};
 \node[bubble] (h3) [below=-0.105cm of h1]   {};
  \node[type]  [above=-0.6cm of h1]{M};

\node [component] (j4) [below=2cm of h1]  {};
\node [port,align=center,inner sep=5pt] (j5) [above=-0.6035cm of j4]  {$p_s$};
\node[] (i3) [above right=-0.25 cm and -0.25cm of j5]   {};
\node[bubble] (j6) [above=-0.105cm of j4]   {};
\node[type]  [below=-0.6cm of j4]{S};

\path[-]          (h3)  edge                  node {} (i2);

\path[-]          (f3)  edge                  node {} (i3);

 \node [component] (k1)[right=1cm of h1] {};
 \node [port] (k2) [below=-0.605cm of k1]  {$p_m$};
 \node[bubble] (k3) [below=-0.105cm of k1]   {};
  \node[type]  [above=-0.6cm of k1]{M};

\node [component] (k4) [below=2cm of k1]  {};
\node [port,align=center,inner sep=5pt] (k5) [above=-0.6035cm of k4]  {$p_s$};
\node[] (i4) [above left=-0.25 cm and -0.25cm of k5]   {};
\node[bubble] (k6) [above=-0.105cm of k4]   {};
\node[type]  [below=-0.6cm of k4]{S};

 \node [component] (l1)[right=1cm of k1] {};
 \node [port] (l2) [below=-0.605cm of l1]  {$p_m$};
 \node[bubble] (l3) [below=-0.105cm of l1]   {};
  \node[type]  [above=-0.6cm of l1]{M};

\node [component] (m4) [below=2cm of l1]  {};
\node [port,align=center,inner sep=5pt] (m5) [above=-0.6035cm of m4]  {$p_s$};
\node[] (i5) [above right=-0.25 cm and -0.45cm of m5]   {};
\node[bubble] (m6) [above=-0.105cm of m4]   {};
\node[type]  [below=-0.6cm of m4]{S};

\path[-]          (l3)  edge                  node {} (i4);

\path[-]          (l3)  edge                  node {} (i5);

\end{tikzpicture}}
\caption{Master/Slave architecture.}
\label{m-s}
\end{figure}

\noindent We let $\mathcal{X}^{(1)}, \mathcal{X}^{(2)}$ denote the sets of variables of master and slave component instances, respectively. Then, the FOEIL sentence $\psi$ representing parametric Master/Slave architecture is
$$\psi=\forall^* x^{(2)} \exists x^{(1)}. \#( p_m(x^{(1)}) \wedge p_s(x^{(2)})). $$

\noindent In the above sentence, the universal concatenation quantifier accompanied with the existential one encodes that every slave
 instance should be connected with a master instance through their corresponding ports, and 
the distinct slave instances may apply these interactions consecutively.

\end{exa}

\begin{exa}
\label{str}
\textbf{(Parametric Star)} Star architecture has only one component type with a unique port namely $p$. One instance is considered as the center in the sense that every other instance has to be connected with it. No other interaction is permitted. Figure \ref{star} represents the Star architecture for five instances. 

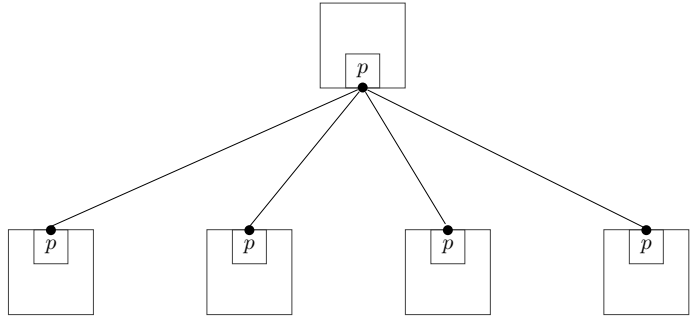
\begin{figure}[h]
\centering
\resizebox{0.6\linewidth}{!}{
\begin{tikzpicture}[>=stealth',shorten >=1pt,auto,node distance=1cm,baseline=(current bounding box.north)]
\tikzstyle{component}=[rectangle,ultra thin,draw=black!75,align=center,inner sep=9pt,minimum size=1.5cm]
\tikzstyle{port}=[rectangle,ultra thin,draw=black!75,minimum size=6mm]
\tikzstyle{bubble} = [fill,shape=circle,minimum size=5pt,inner sep=0pt]
\tikzstyle{type} = [draw=none,fill=none] 

\node [component] (a1) {};
\node [port] (a2) [below=-0.605cm of a1]  {$p$};
\node[bubble] (a3) [below=-0.105cm of a1]   {};

\node [component] (a4) [below left =2.5cm and 4cm of a1]  {};
\node [port] (a5) [above=-0.605cm of a4]  {$p$};
\node[] (i1) [above left=-0.15 cm and -0.73cm of a4]   {};
\node[bubble] (a6) [above=-0.105cm of a4]   {};

\path[-]          (a3)  edge                  node {} (i1);

\node [component] (b4) [right =2 cm of a4]  {};
\node [port] (b5) [above=-0.605cm of b4]  {$p$};
\node[] (i2) [above left=-0.25 cm and -0.30cm of b5]   {};
\node[bubble] (b6) [above=-0.105cm of b4]   {};

\path[-]          (a3)  edge                  node {} (i2);

\node [component] (c4) [right =2 cm of b4]  {};
\node [port] (c5) [above=-0.605cm of c4]  {$p$};
\node[] (i3) [above right=-0.20 cm and -0.38cm of c5]   {};
\node[bubble] (c6) [above=-0.105cm of c4]   {};

\path[-]          (a3)  edge                  node {} (i3);

\node [component] (d4) [right =2 cm of c4]  {};
\node [port] (d5) [above=-0.605cm of d4]  {$p$};
\node[] (i4) [above right=-0.20 cm and -0.27cm of d5]   {};
\node[bubble] (d6) [above=-0.105cm of d4]   {};

\path[-]          (a3)  edge                  node {} (i4);

\end{tikzpicture}}
\caption{Star architecture.}
\label{star}
\end{figure}

\noindent The FOEIL sentence $\psi$ for parametric Star architecture is as follows:
$$ \psi= \exists x^{(1)}\forall^* y^{(1)} (x^{(1)}\neq y^{(1)}). \#(p(x^{(1)}) \wedge p(y^{(1)})). $$

\noindent The 
universal concatenation quantifier preceded by the existential quantifier in $\psi$
 encodes that each of the instances in the architecture should be connected with the center instance
consecutively.

\end{exa}

\begin{exa}
\label{pi-fi}
\textbf{(Parametric Pipes/Filters)} Pipes/Filters architecture involves two types of components, namely pipes and filters \cite{Ga:An}. Pipe (resp. filter) component has an entry  port  $p_e$ and an output port $p_o$ (resp. $f_e,f_o$). Every filter $F$ is connected to two separate pipes $P$ and $P'$ via interactions  $\{f_e, p_o\}$ and $\{f_o,p'_e \}$, respectively. Every pipe $P$ can be connected to at most one filter $F$ via an interaction $\{p_o, f_e\}$. Any other interaction is not permitted. An instantiation of the architecture for four pipe and three filter components is shown in Figure \ref{p-f}. We denote by $\mathcal{X}^{(1)}$ and $\mathcal{X}^{(2)}$ the sets of variables  corresponding to pipe and filter component instances, respectively. The subsequent FOEIL sentence $\psi$ describes the parametric Pipes/Filters architecture.
\begin{multline*}
\psi= \forall^* x^{(2)}\exists x^{(1)} \exists y^{(1)}(x^{(1)} \neq y^{(1)}).\bigg(\#(p_o(x^{(1)}) \wedge f_e(x^{(2)}))\ast \#(p_e(y^{(1)}) \wedge f_o(x^{(2)})) \bigg) \bigwedge     \\
\qquad \quad \bigg ( \forall z^{(1)}\forall y^{(2)}. \bigg( \big(\forall z^{(2)} (y^{(2)} \neq z^{(2)}).  
\theta_1\big) \vee  \theta_2\bigg)\bigg)
\end{multline*}

\noindent where the EPIL formulas $\theta_1$ and $\theta_2$ are given respectively, by:

\

$\theta_1= \big( \big((p_o(z^{(1)}) \wedge f_e(y^{(2)}))\shuffle\mathrm{true}\big) \wedge \big(\neg \big(  (p_o(z^{(1)}) \wedge f_e(z^{(2)}))\shuffle
\mathrm{true}\big)\big)\big)$

\

\noindent and

\

$ \theta_2= \big(\neg\big( (p_o(z^{(1)}) \wedge f_e(y^{(2)}))  \shuffle 
\mathrm{true}\big)\big)$.

\

\noindent In the above sentence, the universal concatenation quantifier ($\forall^* x^{(2)}$) in conjunction with the two existential ones ($\exists x^{(1)}, \exists y^{(1)} $)
 describe that every filter instance is connected with two distinct pipe instances, and these interactions are implemented consecutively. The arguments of $\#$ express the connection of a filter entry (resp. output) port with a pipe output (resp. entry) port excluding by definition erroneous port connections. Then, the subformula after the big conjunction ensures that no more than one filter entry port will be connected to the same pipe output port.

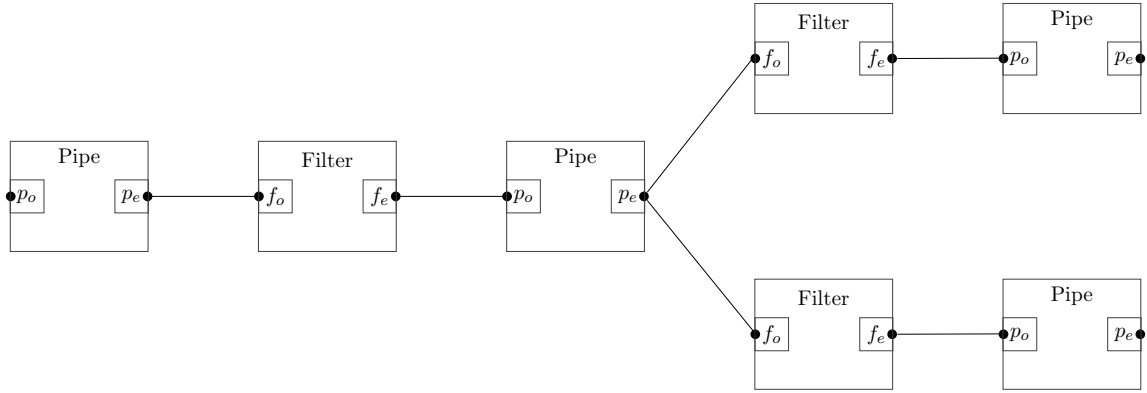
\begin{figure}[h]
\centering
\resizebox{1.0\linewidth}{!}{
\begin{tikzpicture}[>=stealth',shorten >=1pt,auto,node distance=1cm,baseline=(current bounding box.north)]
\tikzstyle{component}=[rectangle,ultra thin,draw=black!75,align=center,inner sep=9pt,minimum size=1.5cm,minimum width=2.5cm,minimum height=2cm]
\tikzstyle{port}=[rectangle,ultra thin,draw=black!75,minimum size=6mm]
\tikzstyle{bubble} = [fill,shape=circle,minimum size=5pt,inner sep=0pt]
\tikzstyle{type} = [draw=none,fill=none]

 \node [component] (a1) {};

\node[bubble] (a2) [right=-0.105cm of a1]   {};   
\node [port] (a3) [right=-0.605cm of a1]  {$p_e$};  

\node[bubble] (a4) [left=-0.105cm of a1]   {};   
\node [port] (a5) [left=-0.615cm of a1]  {$p_o$};  

\node[type]  [above=-0.6cm of a1]{Pipe};

\node [component] (b2) [right=2cm of a1]  {};
\node[bubble] (b3) [right=-0.105cm of b2]   {};   
\node [port] (b4) [right=-0.605cm of b2]  {$f_e$};  

\node[bubble] (b5) [left=-0.105cm of b2]   {};   
\node [port] (b6) [left=-0.615cm of b2]  {$f_o$};  

\node[type]  [above=-0.6cm of b2]{Filter};
\path[-]          (a3)  edge                  node {} (b6);

 \node [component] (c2) [right=2cm of b2]{};

\node[bubble] (c3) [right=-0.105cm of c2]   {};   
\node [port] (c4) [right=-0.605cm of c2]  {$p_e$};  

\node[bubble] (c5) [left=-0.105cm of c2]   {};   
\node [port] (c6) [left=-0.615cm of c2]  {$p_o$};  

\node[type]  [above=-0.6cm of c2]{Pipe};

\path[-]          (b4)  edge                  node {} (c6);

\node [component] (d2) [above right= 0.5cm and 2cm of c2]  {};
\node[bubble] (d3) [right=-0.105cm of d2]   {};  
\node [port] (d4) [right=-0.600cm of d2]  {$f_e$};  

\node[bubble] (d5) [left=-0.105cm of d2]   {};   
\node [port] (d6) [left=-0.615cm of d2]  {$f_o$};  
\node[] (i1) [above left=-0.3 cm and -0.20cm of d6]   {};

\node[type]  [above=-0.6cm of d2]{Filter};
\path[-]          (c3)  edge                  node {} (i1);

\node [component] (e2) [below right= 0.5cm and 2cm of c2]  {};
\node[bubble] (e3) [right=-0.105cm of e2]   {};   
\node [port] (e4) [right=-0.600cm of e2]  {$f_e$};  

\node[bubble] (e5) [left=-0.105cm of e2]   {};   
\node [port] (e6) [left=-0.615cm of e2]  {$f_o$}; 
\node[] (i2) [above left=-0.62 cm and -0.30cm of e6]   {};
\node[type]  [above=-0.6cm of e2]{Filter};
\path[-]          (c3)  edge                  node {} (i2);

 \node [component] (f2) [right=2cm of d2]{};

\node[bubble] (f3) [right=-0.105cm of f2]   {};   
\node [port] (f4) [right=-0.605cm of f2]  {$p_e$}; 

\node[bubble] (f5) [left=-0.105cm of f2]   {};   
\node [port] (f6) [left=-0.615cm of f2]  {$p_o$};  
\node[] (i3) [above left=-0.43 cm and -0.25cm of f6]   {};

\node[type]  [above=-0.6cm of f2]{Pipe};

\path[-]          (d3)  edge                  node {} (i3);

 \node [component] (g2) [right=2cm of e2]{};

\node[bubble] (g3) [right=-0.105cm of g2]   {};   
\node [port] (g4) [right=-0.605cm of g2]  {$p_e$};  

\node[bubble] (g5) [left=-0.105cm of g2]   {};   
\node [port] (g6) [left=-0.615cm of g2]  {$p_o$};  
\node[] (i4) [above left=-0.43 cm and -0.25cm of g6]   {};

\node[type]  [above=-0.6cm of g2]{Pipe};

\path[-]          (e3)  edge                  node {} (i4);

\end{tikzpicture}}
\caption{Pipes/Filters architecture.}
\label{p-f}
\end{figure}
\end{exa}

\begin{exa}
\label{repo}
\textbf{(Parametric Repository)} Repository architecture involves two types of components, namely repository and data accessor \cite{Cl:Do}. Repository component is unique and all data accessors are connected to it. No connection among data accessors exists. Both repository and data accessors have one port each denoted by $p_r,p_d$, respectively. Figure \ref{rep} shows an instantiation of the architecture with four data accessors. 

\begin{figure}[h]
\centering
\resizebox{0.6\linewidth}{!}{
\begin{tikzpicture}[>=stealth',shorten >=1pt,auto,node distance=1cm,baseline=(current bounding box.north)]
\tikzstyle{component}=[rectangle,ultra thin,draw=black!75,align=center,inner sep=9pt,minimum size=1.5cm]
\tikzstyle{port}=[rectangle,ultra thin,draw=black!75,minimum size=6mm]
\tikzstyle{bubble} = [fill,shape=circle,minimum size=5pt,inner sep=0pt]
\tikzstyle{type} = [draw=none,fill=none] 

\node [component] (a1) {};
\node [port] (a2) [below=-0.605cm of a1]  {$p_r$};
\node[bubble] (a3) [below=-0.105cm of a1]   {};

\node [component] (a4) [below left =2.5cm and 4cm of a1]  {};
\node [port] (a5) [above=-0.605cm of a4]  {$p_d$};
\node[] (i1) [above left=-0.15 cm and -0.73cm of a4]   {};
\node[bubble] (a6) [above=-0.105cm of a4]   {};

\path[-]          (a3)  edge                  node {} (i1);

\node [component] (b4) [right =2 cm of a4]  {};
\node [port] (b5) [above=-0.605cm of b4]  {$p_d$};
\node[] (i2) [above left=-0.25 cm and -0.30cm of b5]   {};
\node[bubble] (b6) [above=-0.105cm of b4]   {};

\path[-]          (a3)  edge                  node {} (i2);

\node [component] (c4) [right =2 cm of b4]  {};
\node [port] (c5) [above=-0.605cm of c4]  {$p_d$};
\node[] (i3) [above right=-0.20 cm and -0.38cm of c5]   {};
\node[bubble] (c6) [above=-0.105cm of c4]   {};

\path[-]          (a3)  edge                  node {} (i3);

\node [component] (d4) [right =2 cm of c4]  {};
\node [port] (d5) [above=-0.605cm of d4]  {$p_d$};
\node[] (i4) [above right=-0.20 cm and -0.27cm of d5]   {};
\node[bubble] (d6) [above=-0.105cm of d4]   {};

\path[-]          (a3)  edge                  node {} (i4);

\end{tikzpicture}}
\caption{Repository architecture.}
\label{rep}
\end{figure}
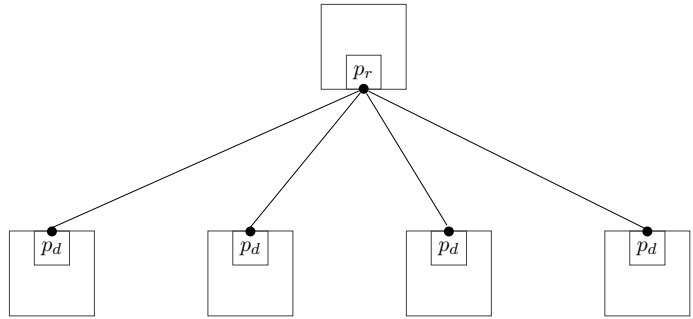

\noindent The subsequent FOEIL sentence $\psi$ characterizes the parametric Repository architecture. Variable set $\mathcal{X}^{(1)}$ refers to  instances of repository component and variable set $\mathcal{X}^{(2)}$ to instances of data accessor component. Then, the use of the
 universal concatenation operator combined with the existential one serves to encode that each of the data accessor instances interact with the repository instance
in consecutive order. 
$$ \psi= \exists x^{(1)} \forall^* x^{(2)}.  \#(p_r(x^{(1)}) \wedge
p_d(x^{(2)})). $$

\noindent 

\end{exa}

Observe that in Examples \ref{ma-sl}-\ref{repo} we use the concatenation quantifier to describe the execution of the interactions for the several component
instances. On the other hand, the shuffle quantifier would not provide the encoding of any different implementation of the corresponding architectures.
This is because the above parametric architectures do not impose order restrictions
on the execution of the permissible interactions. For this reason, we also omit the application of the iteration operator
in these examples. Hence, FOEIL can describe sufficiently parametric architectures with no order restrictions in the allowed interactions.
 Next, we provide three more examples of parametric architectures,
 namely Blackboard, Request/Response, and Publish/Subscribe, where the order of interactions constitutes a main feature.

\begin{exa}
\label{blackboard}
\textbf{(Parametric Blackboard)} The subsequent FOEIL sentence $\psi$ encodes the interactions of Blackboard architecture, described in Example \ref{ex_b_blackboard}, in the parametric setting. We consider three sets of variables, namely $\mathcal{X}^{(1)}, \mathcal{X}^{(2)}, \mathcal{X}^{(3)}$ for the instances of blackboard, controller, and knowledge sources components, respectively.  
\begin{multline*}
\psi=\Bigg(\exists x^{(1)} \exists x^{(2)}.\Bigg( \# (p_d(x^{(1)})\wedge p_r(x^{(2)})) * \bigg(\forall^{\shuffle} x^{(3)}.\# (p_d(x^{(1)})\wedge p_n(x^{(3)}))\bigg) * \\ \bigg(\exists^{\shuffle} y^{(3)}. \big(\#(p_l(x^{(2)})\wedge p_t(y^{(3)})) *  \# (p_e(x^{(2)})\wedge p_w(y^{(3)})\wedge p_a(x^{(1)}))\big) \bigg)^+\Bigg)\Bigg)^{+}.
\end{multline*}

\noindent We interpret the above FOEIL sentence as follows: There exists a blackboard and a controller instance ($\exists x^{(1)}, \exists x^{(2)}$) so that the former informs the
latter for the available data. In turn, the parentheses with the 
universal shuffle quantifier ($\forall^{\shuffle} x^{(3)}$) encodes that all the sources instances are informed from blackboard in arbitrary order. 
Finally, the parenthesis with the existential shuffle quantifier ($\exists^{\shuffle} y^{(3)}$) describes the triggering and writing process of some of the sources  instances, 
implemented with interleaving. Similarly to the non-parametric version of the architecture, the iteration operators describe the repetition
of the corresponding parametric interactions. An instantiation of  parametric Blackboard architecture with three sources is presented in Example \ref{ex_b_blackboard} of Subsection \ref{examples_EPIL}.

\end{exa}

\begin{exa}
\label{re-re}
\textbf{(Parametric Request/Response)} Next we present a FOEIL sentence $\psi$ for Request/Response architecture, described in Example \ref{b-r-r}, in the parametric setting. We consider the variable sets $\mathcal{X}^{(1)},\mathcal{X}^{(2)},\mathcal{X}^{(3)}$, and $\mathcal{X}^{(4)}$  referring to instances of service registry, service, client, and coordinator component, respectively. 
\begin{multline*}
\psi= \bigg( \exists x^{(1)}.\bigg(\big(\forall^{\shuffle} x^{(2)}. \# (p_e(x^{(1)})\wedge p_r(x^{(2)}))\big) *\\ \big(\forall^{\shuffle} x^{(3)}. (\# (p_l(x^{(3)})\wedge p_u(x^{(1)})) * \#(p_o(x^{(3)})\wedge p_t(x^{(1)}))) \big )\bigg)\bigg) * \\ \bigg(\exists^{\shuffle} y^{(2)}\exists x^{(4)}\exists^*y^{(3)}.\xi \wedge \bigg(\forall y^{(4)}\forall z^{(3)}\forall z^{(2)}.\big( \theta \vee \big(\forall t^{(3)}\forall t^{(2)}(z^{(2)}\neq t^{(2)}).\theta'\big)\big)\bigg)\bigg)^+
\end{multline*}
\noindent where the EPIL formulas $\xi$, $\theta$, and $\theta'$ are  given respectively, by:

\

$ \xi= \# (p_n(y^{(3)})\wedge p_m(x^{(4)})) * \# (p_q(y^{(3)})\wedge p_a(x^{(4)})\wedge p_g (y^{(2)})) *\# (p_c(y^{(3)})\wedge p_d(x^{(4)})\wedge p_s(y^{(2)}))$,

\

$\theta=\neg ( (p_q(z^{(3)})\wedge p_a(y^{(4)})\wedge p_g (z^{(2)}))\shuffle \mathrm{true})$,

\

\noindent and

\

$\theta' =( \# (p_q(z^{(3)})\wedge p_a(y^{(4)})\wedge p_g (z^{(2)}))\shuffle\mathrm{true}) \wedge \\
 \text{ \qquad \qquad \qquad \qquad \qquad \qquad \qquad \ \ \ \ } \neg(   (p_q(t^{(3)})\wedge p_a(y^{(4)})\wedge p_g (t^{(2)}))\shuffle\mathrm{true}).$

\

\noindent The first two lines of FOEIL sentence $\psi$ express the connections 
of every
service and client instance with the service registry instance, respectively. Moreover, the several instances of services as well as of clients
interact with the registry instance in arbitrary order and for this, we use the universal shuffle quantifiers $\forall^{\shuffle} x^{(2)}$ and $\forall^{\shuffle} x^{(3)}$, respectively. Then, the last line of $\psi$
captures that for some of the service instances ($\exists^{\shuffle} y{^{(2)}}$) there exist some client instances ($\exists^{\ast} y^{(3)}$) that are interested in the former, and hence are connected 
consecutively through the services' coordinator instance ($\exists x^{(4)}$). Recall that only a unique client instance is allowed to interact with a service instance which justifies the use of the concatenation quantifier $\exists^{\ast} y^{(3)}$. On the other hand, the interactions of the distinct service instances with the interested client instances are implemented with interleaving, since there are no order restrictions from the architecture, which is expressed by the shuffle quantifier $\exists^{\shuffle} y{^{(2)}}$. Then, the subformula $\forall y^{(4)}\forall z^{(3)} \forall z^{(2)} .\big( \theta \vee \big(\forall t^{(3)}\forall t^{(2)}(z^{(2)}\neq t^{(2)}).\theta'\big)\big)$ in $\psi$ serves as a constraint to ensure that a unique coordinator instance is assigned to each service instance. Finally, the application of the iteration operator allows the repetition of the permissible interactions in the parametric architecture. An instantiation of the parametric architecture for two clients and two services is discussed in Example \ref{b-r-r} of Subsection \ref{examples_EPIL}.
\end{exa}

\begin{exa}\textbf{(Parametric Publish/Subscribe)}
\label{pu-su}
We consider Publish/Subscribe architecture, described in Example \ref{b-pu-su}, in the parametric setting. In the subsequent FOEIL sentence $\psi$, we let variable sets $\mathcal{X}^{(1)},\mathcal{X}^{(2)},\mathcal{X}^{(3)}$ correspond to publisher, topic, and subscriber component instances, respectively. 
\begin{multline*}
\psi= \Bigg( \exists^{\shuffle} x^{(2)}. \Bigg( \bigg( \exists^{\shuffle} x^{(1)}.\big(\# (p_a(x^{(1)})\wedge p_n (x^{(2)}))* \# (p_t(x^{(1)})\wedge p_r(x^{(2)}))\big) \bigg)  
* \\ \bigg(\exists^{\shuffle} x^{(3)}. \big( \# (p_e(x^{(3)})\wedge p_c(x^{(2)}))*\#(p_g(x^{(3)})\wedge p_s (x^{(2)})) *  \#(p_d(x^{(3)})\wedge p_f (x^{(2)}))\big)\bigg)\Bigg)\Bigg)^+.
\end{multline*}

\noindent The FOEIL sentence $\psi$ is interpreted as follows: The big parenthesis with the existential shuffle
quantifier $\exists^{\shuffle} x^{(1)}$ preceded by $\exists^{\shuffle} x^{(2)}$ describes that given some topic instances, some of the 
publisher instances advertise and in turn transmit their messages to the former with interleaving. For the same topic instances, in turn, the parenthesis at the second line with the
existential shuffle quantifier $\exists^{\shuffle} x^{(3)}$ encodes the arbitrary order among some of the subscriber instances for executing three types of consecutive interactions, namely the connection with the
interested subscriber instance, the transfer of the message and their disconnection. The communication steps described above are implemented 
for the distinct topic instances with interleaving (captured by the first existential shuffle quantifier $\exists^{\shuffle} x^{(2)}$). Finally, the
iteration operator is applied to the whole sentence  in order to describe the recursion of the aforementioned interactions, 
and hence models the subsequent implementation of the architecture within the parametric system. Example \ref{b-pu-su} presented in Subsection \ref{examples_EPIL} is an instantiation of the parametric Publish/Subscribe architecture for two publishers, two topics, and three subscribers.

\end{exa}

In \cite{Ma:Co} the authors described a simpler version of Request/Response and Blackboard architectures. Though the resulting
sets of interactions do not depict the order in which they should be performed (cf. Remark \ref{remark}). Publish/Subscribe architecture has not been considered
 in the related work \cite{Bo:St,Ko:Pa,Ma:Co}. A weighted version of Publish/Subscribe architecture was described by a weighted propositional configuration logic formula in \cite{Pa:On}. Nevertheless, even if we consider that formula without weights, it is not possible to express the required order of the implementation of the interactions. Moreover,  the versions of the parametric architectures studied in \cite{Bo:St,Ko:Pa,Ma:Co,Pa:On} do not allow recursive interactions. Our Examples \ref{blackboard}, \ref{re-re}, and \ref{pu-su} show that for a parametric component-based system with any of the aforementioned architectures, the semantics of the corresponding FOEIL formula encodes the required order of recursive interactions.

\section{Decidability results for FOEIL}\label{sec_dec}
In this section, we prove decidability results for FOEIL  sentences. Specifically, we show that the equivalence and validity  problems for FOEIL sentences are decidable in doubly exponential time, whereas the satisfiability problem is decidable in exponential time. For this, we establish an effective translation of every FOEIL formula to an expressive equivalent finite automaton, and hence we take advantage of well-known computational results for finite automata. For the reader's convenience we briefly recall basic notions and results on finite automata. 

Let $A$ be an alphabet. A (nondeterministic) finite automaton (NFA for short) over $A$ is a five-tuple $\mathcal{A}=(Q, A, I, \Delta, F )$ where $Q$ is the finite state set, $I \subseteq Q$ is the set of initial states, $\Delta \subseteq Q\times A \times Q$ is the set of transitions, and $F$ is the final state set.  

Let $w=a_1 \ldots a_n \in A^*$. A path of $\mathcal{A}$ over $w$ is a sequence of transitions $((q_{i-1}, a_i, q_i))_{1 \leq i \leq n}$. The path is called successful if $q_0 \in I$ and $q_n \in F$. A word $w\in A^*$ is accepted (or recognized) by $\mathcal{A}$ if there a successful path of $\mathcal{A}$ over $w$. The language $L(\mathcal{A})$ of $\mathcal{A}$ is the set of all words accepted by $\mathcal{A}$. 

The finite automaton $\mathcal{A}$ is called deterministic (DFA for short) (resp. complete) if $I=\{q_0\}$ and for every $q\in Q$ and $a \in A$ there is at most (resp. exactly) one state $q'\in Q$ such that $(q,a,q') \in \Delta$. In this case we write $\mathcal{A}=(Q, A, q_0, \Delta, F)$. Two finite automata $\mathcal{A}$ and $\mathcal{A}'$ over $A$ are called equivalent if $L(\mathcal{A})=L(\mathcal{A}')$.  For our translation algorithm of FOEIL formulas to finite automata we shall need folklore results in automata theory. We collect them in the following proposition (cf. for instance \cite{Kh:Au, Sa:El, Si:In}). 

\begin{prop}\label{prop-rec}
\begin{itemize}
\hfill
\item[\emph{1)}] Let $\mathcal{A}=(Q, A, I, \Delta, F )$ be an \emph{NFA} over $A$. Then, we can construct an equivalent complete finite automaton $\mathcal{A}'$ with state set $\mathcal{P}(Q)$ over $A$. The run time of the  algorithm is exponential.
\item[\emph{2)}] Let $\mathcal{A}_1=(Q_1, A, I_1, \Delta_1, F_1 )$ and $\mathcal{A}_2=(Q_2, A, I_2, \Delta_2, F_2 )$ be two \emph{NFA's} over $A$. Then, the intersection  $L(\mathcal{A}_1) \cap L(\mathcal{A}_2)$ is accepted by the \emph{NFA} $\mathcal{A}=(Q_1\times Q_2, A, I_1 \times I_2, \Delta, F_1 \times F_2)$ where $\Delta=\{((q_1,q_2),a,(q'_1,q'_2)) \mid (q_1,a,q'_1) \in \Delta_1, (q_2,a,q_2')\in \Delta_2\}$. If $\mathcal{A}_1$ and $\mathcal{A}_2$ are \emph{DFA's}, then $\mathcal{A}$ is also a \emph{DFA}. The finite automaton $\mathcal{A}$ is called the product automaton of $\mathcal{A}_1$ and $\mathcal{A}_2$. \\
The union $L(\mathcal{A}_1) \cup L(\mathcal{A}_2)$ is accepted by the \emph{NFA} $\mathcal{A}'=(Q_1 \cup Q_2, A, I_1 \cup I_2, \Delta_1 \cup \Delta_2, F_1 \cup F_2)$ where without loss of generality we assume that $Q_1 \cap Q_2 = \emptyset$. The \emph{NFA} $\mathcal{A}'$ is called the disjoint union of $\mathcal{A}_1$ and $\mathcal{A}_2$.
\item[\emph{3)}] If $\mathcal{A}_1$,  $\mathcal{A}_2$, and $\mathcal{A}$ are finite automata over $A$, then we can construct, from $\mathcal{A}_1$, $\mathcal{A}_2$, and $\mathcal{A}$ \emph{NFA's} $\mathcal{B}$, $\mathcal{C}$, and $\mathcal{D}$ accepting respectively, the languages $L(\mathcal{A}_1)*L(\mathcal{A}_2)$, $L(\mathcal{A}_1) \shuffle L(\mathcal{A}_2)$, and $L(\mathcal{A})^+$. The run time for all the constructions is polynomial.
\item[\emph{4)}] Let $\mathcal{A}_1=(Q_1, A, q_{0,1}, \Delta_1, F_1 )$ and $\mathcal{A}_2=(Q_2, A, q_{0,2}, \Delta_2, F_2 )$ be two complete finite automata over $A$. Then, the union  $L(\mathcal{A}_1) \cup L(\mathcal{A}_2)$ is accepted by the complete finite automaton $\mathcal{A}=(Q_1\times Q_2, A, (q_{0,1},q_{0,2}), \Delta, (Q_1 \times F_2) \cup (F_1 \times Q_2))$ where $\Delta=\{((q_1,q_2),a,(q'_1,q'_2)) \mid (q_1,a,q'_1) \in \Delta_1, (q_2,a,q_2')\in \Delta_2\}$. 
\item[\emph{5)}] Let $\mathcal{A}=(Q,A,q_0,\Delta, F)$ be a \emph{DFA} over $A$. Then, we can construct an equivalent complete finite automaton $\mathcal{A}'=(Q \cup \{\bar{q}\}, A, q_0, \Delta', F)$ where $\bar{q}$ is a new state and $\Delta'=\Delta \cup \{(q,a,\bar{q}) \mid \text{ there is no state } q' \in Q \text{ such that } (q,a,q') \in \Delta \} \cup \{(\bar{q},a,\bar{q}) \mid a \in A\}$.
\item[\emph{6)}] Let $\mathcal{A}=(Q,A,q_0,\Delta, F)$ be a complete finite automaton over $A$. Then, the complement of $L(\mathcal{A})$ is accepted by the complete finite automaton $\bar{\mathcal{A}}=(Q,A,q_0,\Delta, Q\setminus F)$.
\item[\emph{7)}] Let $\mathcal{A}=(Q, A, I, \Delta, F )$ be an \emph{NFA} over $A$. Then, we can decide in linear time whether $L(\mathcal{A})=\emptyset$  or not (emptiness problem).
\item[\emph{8)}] Let $\mathcal{A}=(Q, A, I, \Delta, F )$ be an \emph{NFA} over $A$. Then, we can decide in exponential time  whether $L(\mathcal{A})=A^*$  or not (universality problem).
\end{itemize}
\end{prop}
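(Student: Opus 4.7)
The plan is to verify each of the eight statements using standard constructions from classical automata theory, all of which can be found in textbooks such as \cite{Sa:El, Si:In, Kh:Au}. Since these are folklore results, I would verify correctness by routine induction on the length of the input word or on the structure of the construction, without presenting the detailed computations.

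For item (1), I would apply the subset construction: define $\mathcal{A}' = (\mathcal{P}(Q), A, \{I\}, \Delta', F')$ where $(S,a,S') \in \Delta'$ iff $S' = \{q' \mid \exists q \in S,\, (q,a,q') \in \Delta\}$, and $F' = \{S \subseteq Q \mid S \cap F \neq \emptyset\}$. Correctness is standard and the exponential blow-up is intrinsic to the construction. Items (2), (4), and (5) follow similarly from direct product constructions, disjoint unions, and the trap-state completion; in each case, correctness is verified by observing that successful paths of the composite automaton correspond bijectively to appropriate pairs of paths in the component automata, or, in the case of (5), to successful paths of the original automaton together with possibly a tail of trap-state transitions on rejected inputs.

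For item (3), I would employ the standard polynomial constructions: for $L(\mathcal{A}_1) \ast L(\mathcal{A}_2)$, take the disjoint union of $\mathcal{A}_1$ and $\mathcal{A}_2$ and add, for each transition $(q_{0,2},a,q') \in \Delta_2$ out of an initial state of $\mathcal{A}_2$, a transition $(f,a,q')$ from each final state $f$ of $\mathcal{A}_1$, making the new final states $F_2$ (plus $F_1$ if $q_{0,2} \in F_2$); for $L(\mathcal{A}_1) \shuffle L(\mathcal{A}_2)$, use the product automaton with transitions that advance exactly one coordinate; for $L(\mathcal{A})^+$, add to $\mathcal{A}$ the transitions $(f,a,q')$ for every $f \in F$ and every original transition $(q_0, a, q')$ with $q_0 \in I$. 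Each construction yields an NFA of size polynomial in the inputs. Item (6) is immediate once $\mathcal{A}$ is complete and deterministic, since then every word has exactly one computation, so it is accepting iff it is not accepted by $\bar{\mathcal{A}}$. Item (7) reduces to graph reachability from $I$ to $F$, solvable in linear time.

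The most delicate point is item (8): universality for NFA is in fact PSPACE-complete, so the exponential bound is obtained by chaining items (1), (5), (6), and (7) together, namely determinizing $\mathcal{A}$, completing it, complementing the result, and testing emptiness of the complement. The exponential cost is entirely due to the determinization step in (1); the remaining stages contribute only polynomial overhead in the resulting size. No additional argumentation is required, and the proposition serves as a convenient summary of the constructions needed for the subsequent translation of FOEIL formulas into finite automata in Section~\ref{sec_dec}.
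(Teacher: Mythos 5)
Your proposal is correct and matches the paper's treatment: the paper gives no proof of this proposition, simply citing it as folklore from standard references (\cite{Kh:Au, Sa:El, Si:In}), and the constructions you spell out (subset construction, product and disjoint union, trap-state completion, the standard polynomial automata for concatenation, shuffle and iteration, complementation of complete DFAs, reachability for emptiness, and the determinize--complete--complement--test chain for universality) are exactly the classical ones those references contain. The only cosmetic issue is writing $q_{0,2}$ for the initial state of $\mathcal{A}_2$ in item (3), where $\mathcal{A}_2$ is an NFA with a set $I_2$ of initial states, but the construction is unaffected.
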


\noindent Next we present the translation algorithm of  FOEIL sentences to finite automata. Our algorithm requires an exponential time  at its worst case.  Specifically, we state the following theorem.

\begin{thm}\label{sent_to_aut}
Let $\psi$ be a \emph{FOEIL} sentence over a set $p\B=\{B(i,j) \mid i\in [n], j\geq 1\}$ of parametric components and $r:[n] \rightarrow \mathbb{N}$. Then, we can effectively construct a finite automaton $\mathcal{A}_{\psi,r}$ over $I_{p\B(r)}$ such that $(r,w) \models \psi$ iff $w \in L(\mathcal{A}_{\psi,r})$ for every  $w \in I_{p\B(r)}^+$. The worst case run time for the translation algorithm is exponential and the best case is polynomial.    
\end{thm}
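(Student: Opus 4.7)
The plan is to proceed by structural induction, after first strengthening the statement to handle arbitrary FOEIL formulas $\psi$ with free variables together with a $(\mathcal{V},r)$-assignment $\sigma$: the induction produces an NFA $\mathcal{A}_{\psi,\sigma,r}$ over $I_{p\B(r)}$ accepting exactly $\{w\in I_{p\B(r)}^+\mid (r,\sigma,w)\models\psi\}$. The theorem then follows by specialising to a sentence and an empty assignment.

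For the base case I would first treat EPIL formulas by a secondary induction on their structure. Once $\sigma$ substitutes each variable $x^{(i)}$ by its value $\sigma(x^{(i)})$, every PIL atom $\phi$ translates to the one-letter automaton accepting the finite set $\{a\in I_{p\B(r)}\mid a\models_{\mathrm{PIL}}\phi\}$. The operators $\vee,\wedge,*,\shuffle,^+$ are implemented respectively by disjoint union, product, Cauchy product, shuffle product and the iteration construction of Proposition \ref{prop-rec}, all of which are polynomial. The only potentially costly step is negation: the grammar restricts it to PIL formulas and to $\zeta$-formulas, i.e.\ concatenations of PIL formulas. Complementation is handled by first determinising via the subset construction and then flipping final states; this is exponential in the worst case, but for a bare PIL formula it collapses to complementing a finite set of letters.

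Next I would lift the construction to full FOEIL. Equality atoms $x^{(i)}=y^{(i)}$ and their negations do not depend on $w$; under $\sigma$ they evaluate to a Boolean constant, giving either the automaton accepting $I_{p\B(r)}^+$ or the empty one. Boolean connectives, concatenation, shuffle and iteration reuse the EPIL constructions. For the first-order quantifiers I exploit the fact that $r$ bounds each variable's range: $\exists x^{(i)}.\psi$ and $\forall x^{(i)}.\psi$ become, respectively, the union and intersection of the $r(i)$ automata obtained by varying $\sigma[x^{(i)}\to j]$. The formula $\forall^{*} x^{(i)}.\psi$ is the Cauchy product, taken in the order $j=1,\dots,r(i)$, of these automata, and $\forall^{\shuffle} x^{(i)}.\psi$ is their shuffle. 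For the ordered existential variants $\exists^{*} x^{(i)}.\psi$ and $\exists^{\shuffle} x^{(i)}.\psi$, rather than enumerating the exponentially many increasing subsequences of $[r(i)]$ explicitly, I would build a single NFA from the disjoint union of the component automata, nondeterministically allowing transitions from a final state of the $j$-th component to an initial state of any $j'$-th component with $j'>j$ (for $\exists^*$) or to any unused $j'\neq j$ (for $\exists^{\shuffle}$, with an added counter tracking used indices); this keeps the step polynomial in $\sum_{j\in[r(i)]}|\mathcal{A}_{\psi,\sigma[x^{(i)}\to j],r}|$.

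The main obstacle is the complexity analysis. Collecting the pieces, each node of the formula tree yields a polynomial-size construction \emph{except} for complementation of $\zeta$-formulas, which incurs a subset construction. Because the grammar of EPIL (and hence FOEIL) forbids negation outside PIL and $\zeta$ fragments, every such complementation acts on an automaton built purely from letter-automata via polynomial operations, so its determinisation is single-exponential in the size of that $\zeta$-subformula alone, and the blow-ups do not compound through the Boolean and quantifier layers. A sentence free of $\zeta$-negation thus produces the NFA in polynomial time (best case), while sentences containing such negations give the single-exponential upper bound (worst case). The formal verification is a routine bookkeeping induction tracking at each node the size of the produced NFA, so I expect the conceptual difficulty to lie entirely in arranging the quantifier constructions so that they remain polynomial in $r(i)$ and in isolating the single controlled source of exponential growth.
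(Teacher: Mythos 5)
Your overall strategy is essentially the paper's (Proposition \ref{formula-aut}): strengthen to formulas with free variables under a fixed $(\mathcal{V},r)$-assignment, translate PIL atoms to two-state DFAs, handle the EPIL operators by disjoint union, product, Cauchy product, shuffle product and iteration, determinise only at the syntactically restricted negations, and expand each quantifier over the finite range $[r(i)]$. Your treatment of equality atoms as Boolean constants under $\sigma$, and of $\exists,\forall,\forall^{*},\forall^{\shuffle}$, matches the paper. Your chaining construction for $\exists^{*}x^{(i)}.\psi$ (jumps from final states of the $j$-th component automaton to initial states of components $j'>j$) is in fact a genuine improvement over the paper, which enumerates all $2^{r(i)}$ nonempty subsets of $[r(i)]$ at this step; for concatenation in increasing index order the chaining automaton recognises exactly the required union and is polynomial in the sum of the component sizes.

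The gap is in your treatment of $\exists^{\shuffle}x^{(i)}.\psi$. Its semantics requires $w\in w_{l_1}\shuffle\cdots\shuffle w_{l_k}$, an arbitrary \emph{interleaving} of the factors, whereas a chaining automaton that jumps to an as-yet-unused component only recognises $\bigcup_{\pi}L_{\pi(1)}\ast\cdots\ast L_{\pi(k)}$ over orderings $\pi$, which is a strict subset of the shuffle: for instance $acbd$ lies in $\{ab\}\shuffle\{cd\}$ but in no ordered concatenation of the two languages. Capturing interleaving forces you to run the selected component automata simultaneously, i.e., to take a product of their state spaces; and the ``counter tracking used indices'' you add has $2^{r(i)}$ values in any case, so even the (incorrect) automaton you describe is not polynomial. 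The paper accepts an exponential cost here, taking the disjoint union over all nonempty $J\subseteq[r(i)]$ of the shuffle products, and then argues separately — using the syntactic ban on $\neg\zeta$ inside the scope of $\exists^{*}$ and $\exists^{\shuffle}$, and the independence of the subset computations for distinct variables — that these exponential steps do not compound. As a consequence, your complexity accounting, which attributes all exponential growth to $\zeta$-negation and declares every quantifier step polynomial, is wrong: the existential shuffle quantifier is an independent and unavoidable source of exponential blow-up in this construction. The stated worst-case exponential bound still holds once you replace your $\exists^{\shuffle}$ step by the subset-enumeration (or an equivalent product-based) construction, but as written that step is semantically incorrect.
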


We shall prove Theorem \ref{sent_to_aut} using the subsequent proposition. For this, we need the following notations. Let $\mathcal{V} \subseteq \mathcal{X}$ be a finite set of variables. For every $i \in [n]$ and $x^{(i)} \in \mathcal{V}$, we define the set $P(i)(x^{(i)})= \{p(x^{(i)}) \mid p\in P(i) \text{ and } x^{(i)} \in \mathcal{V} \}$ and let $I_{p\B(\mathcal{V})}=\{ a \in I(P_{p\B(\mathcal{V})}) \mid  \vert a \cap P(i)(x^{(i)}) \vert \leq 1  \text{ for every } i\in[n] \text{ and } x^{(i)} \in \mathcal{V} \}$. Next let   $\sigma$  be a $(\mathcal{V}, r)$ assignment and $L$ a language  over $I_{p\B(\mathcal{V})}$. We shall denote by $\sigma(L)$ the language over $I(P_{p\B(r)})$\footnote{The language $\sigma(L)$ is not always over $I_{p\B(r)}$. For instance,  assume that $a \in L$, with $a \in I_{p\B(\mathcal{V})}$, $p(x^{(i)}), p'(y^{(i)}) \in a $ for some $i\in [n]$, $p,p' \in P(i)$, and $\sigma(x^{(i)})=\sigma(y^{(i)})$. Then  $\sigma(a) \notin I_{p\B(r)}$.} which is obtained by $L$ by replacing every  variable $x\in \mathcal{V}$ by $\sigma(x)$.

\begin{prop}\label{formula-aut}
Let $\psi$ be a \emph{FOEIL} formula over a set $p\B=\{B(i,j) \mid i\in [n], j\geq 1\}$ of parametric components. Let also $\mathcal{V} \subseteq \mathcal{X}$ be a finite set of variables containing $\mathrm{free}(\psi)$ and $r:[n] \rightarrow \mathbb{N}$. Then, we can effectively construct a finite automaton $\mathcal{A}_{\psi,r}$ over $I_{p\B(\mathcal{V})}$ such that for every $(\mathcal{V}, r)$-assignment $\sigma$ and $w \in I_{p\B(r)}^+$ we have $(r, \sigma, w) \models \psi$ iff $w \in \sigma(L(\mathcal{A}_{\psi,r})) \cap I^+_{p\B(r)}$. The worst case run time for the translation algorithm is exponential and the best case is polynomial.   
\end{prop}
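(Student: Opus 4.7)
My plan is to proceed by structural induction on the FOEIL formula $\psi$, constructing $\mathcal{A}_{\psi,r}$ from the automata obtained for its subformulas and using the closure properties collected in Proposition \ref{prop-rec}. For the base case in which $\psi$ is an EPIL formula $\varphi$ over $P_{p\B(\mathcal{X})}$, I would run an inner induction on the structure of $\varphi$: for a PIL atom $\phi$, take the single-letter automaton whose unique initial-to-final transitions are labelled by those $a \in I_{p\B(\mathcal{V})}$ with $a \models_{\mathrm{PIL}} \phi$; then handle $*$, $\shuffle$, $\vee$, $\wedge$, and ${}^+$ by the standard polynomial constructions, and treat the restricted negation $\neg\zeta$ by first determinizing (Proposition \ref{prop-rec}(1)) and then swapping final and non-final states via Proposition \ref{prop-rec}(5)--(6). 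Because the syntax only allows negation on $\zeta$-formulas (which are built from PIL atoms and concatenation), the determinizations happen on structurally simple subautomata, which keeps the blow-up controlled. For the atomic equality $x^{(i)}=y^{(i)}$ and its negation I would define $L(\mathcal{A}_{\psi,r})$ to be either $I_{p\B(\mathcal{V})}^+$ or a language constructed from symbolic letters that, after applying $\sigma$ and intersecting with $I_{p\B(r)}^+$, collapses to the full set $I_{p\B(r)}^+$ or to $\emptyset$ according to whether $\sigma(x^{(i)})=\sigma(y^{(i)})$ holds.

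For boolean and regular inductive cases, the plan is to apply the constructions of Proposition \ref{prop-rec} directly: the product construction for $\wedge$, disjoint union for $\vee$, and the polynomial-time constructions for $*$, $\shuffle$, and ${}^+$. The correctness in each step reduces to verifying that the defining semantic clause of the connective (Definition of $\models$) matches the language-theoretic operation on $L(\mathcal{A}_{\psi',r})$, and then showing that the symbolic substitution $\sigma$ distributes across union, intersection, concatenation, shuffle, and iteration of languages, so that $\sigma(L(\mathcal{A}_{\psi,r})) \cap I_{p\B(r)}^+$ still characterizes the satisfying words.

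The quantifier cases are handled by explicit enumeration over $[r(i)]$. For $\exists x^{(i)}.\psi'$ I would form the disjoint union $\bigcup_{j \in [r(i)]} \mathcal{A}_{\psi'[j/x^{(i)}],r}$ (where $\psi'[j/x^{(i)}]$ denotes the formula with the free variable $x^{(i)}$ replaced by the constant $j$, handled via the induction hypothesis by an appropriate assignment extension) and dually the product construction for $\forall x^{(i)}.\psi'$. The concatenation quantifiers translate into unions of concatenations: $\exists^* x^{(i)}.\psi'$ becomes the union, over all strictly increasing tuples $1 \leq l_1 < \cdots < l_k \leq r(i)$, of the Cauchy concatenations $\mathcal{A}_{\psi'[l_1/x^{(i)}],r} \cdots \mathcal{A}_{\psi'[l_k/x^{(i)}],r}$, while $\forall^* x^{(i)}.\psi'$ is the single concatenation $\mathcal{A}_{\psi'[1/x^{(i)}],r} \cdots \mathcal{A}_{\psi'[r(i)/x^{(i)}],r}$. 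The shuffle quantifiers $\exists^{\shuffle}$ and $\forall^{\shuffle}$ are treated analogously but with the shuffle product replacing concatenation. The enforced ordering in the semantics of $\exists^*, \forall^*, \exists^{\shuffle}, \forall^{\shuffle}$ (forced by the fact that instances of each component type are identical) is exactly what lets these constructions stay within the arsenal of Proposition \ref{prop-rec}.

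The main obstacle will be the complexity bookkeeping. Each quantifier can duplicate a subautomaton up to $r(i)$ times, so a chain of $q$ quantifiers with bound $M=\max_{i\in[n]} r(i)$ can produce up to $M^q$ combined copies; together with the exponential cost of determinization inside the restricted negation cases, this gives the worst-case exponential upper bound. In the best case (bounded parameters $r$, few or no negations, and quantifiers applied to small subautomata) all constructions in Proposition \ref{prop-rec} are polynomial and the translation is polynomial overall. The correctness invariant is preserved through every inductive step, since the only non-trivial point is checking that application of $\sigma$ commutes with each operation used in the construction; once this is established for each connective, the claim $(r,\sigma,w)\models\psi \iff w \in \sigma(L(\mathcal{A}_{\psi,r})) \cap I_{p\B(r)}^+$ follows directly from the induction hypothesis.
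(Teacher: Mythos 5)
Your proposal follows essentially the same route as the paper's proof: structural induction using the closure constructions of Proposition \ref{prop-rec}, with complementation via determinization for the restricted negations, enumeration over $[r(i)]$ (disjoint union, product, Cauchy and shuffle products) for the plain and universal quantifiers, subset enumeration for $\exists^*$ and $\exists^{\shuffle}$, and the same two sources of exponential blow-up. The only point to tighten is the equality atom: since the single automaton $\mathcal{A}_{\psi,r}$ must work for \emph{every} assignment $\sigma$, you cannot case-split on whether $\sigma(x^{(i)})=\sigma(y^{(i)})$; instead, as in the paper, take the $\sigma$-independent one-state automaton over $I_{p\B(\mathcal{V})}$ accepting exactly the words in which no interaction carries ports of both $x^{(i)}$ and $y^{(i)}$, and let the intersection with $I^+_{p\B(r)}$ after applying $\sigma$ do the work.
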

\begin{proof}
We prove our claim by  induction on the structure of the FOEIL formula $\psi$. 
\begin{itemize}
\item[i)] If $\psi=\mathrm{true}$, then we consider the complete finite automaton $\mathcal{A}_{\psi,r}=(\{q\}, I_{p\B(\mathcal{V})}, q, \Delta , \{q\})$ where $\Delta = \{(q,a,q) \mid a \in I_{p\B(\mathcal{V})} \}$. 
\item[ii)] If $\psi=p(x^{(i)})$, then we construct the DFA $\mathcal{A}_{\psi,r}=(\{q_0, q_1\}, I_{p\B(\mathcal{V})}, q_0,  \Delta, \{q_1\})$ with $\Delta= \{(q_0,a,q_1) \mid a \in I_{p\B(\mathcal{V})} \text{ and }  p(x^{(i)}) \in a \}$.
\item[iii)] If $\psi=\neg \phi'$ or $\psi=\phi_1 \vee \phi_2$ where $\phi', \phi_1, \phi_2$ are PIL formulas, and $\mathcal{A}_{\phi',r}$, $\mathcal{A}_{\phi_1,r}$, and $\mathcal{A}_{\phi_2,r}$ are DFA's, then we construct $\mathcal{A}_{\psi,r}$ by applying Proposition \ref{prop-rec}(5),(6) and (4), respectively. Trivially, the finite automaton $\mathcal{A}_{\psi, r}$ is deterministic.
\item[iv)] If $\psi= \zeta_1 * \zeta_2$, then we construct $\mathcal{A}_{\psi,r}$ by taking the finite automaton accepting the Cauchy product of $L(\mathcal{A}_{\zeta_1,r})$ and $L(\mathcal{A}_{\zeta_2,r})$ (cf. Proposition \ref{prop-rec}(3)).
\item[v)] If $\psi= \zeta_1 \shuffle \zeta_2$, then we construct $\mathcal{A}_{\psi,r}$ by taking the finite automaton accepting the shuffle product of $L(\mathcal{A}_{\zeta_1,r})$ and $L(\mathcal{A}_{\zeta_2,r})$ (cf. Proposition \ref{prop-rec}(3)).
\item[vi)] If $\psi= \neg \zeta$, then we construct firstly, from $\mathcal{A}_{\zeta, r}$, an equivalent complete finite automaton $\mathcal{A}'$ (cf. Proposition \ref{prop-rec}(1)). Then, we get $\mathcal{A}_{\psi,r}$ from $\mathcal{A}'$ by applying Proposition \ref{prop-rec}(6).
\item[vii)] If $\psi=\varphi_1 \vee \varphi_2$ or $\psi=\varphi_1 \wedge \varphi_2$ or $\psi=\varphi_1 * \varphi_2$ or $\psi=\varphi_1 \shuffle \varphi_2$ or $\psi=\varphi^+$ where $\varphi_1, \varphi_2, \varphi$ are EPIL formulas, then we construct $\mathcal{A}_{\psi,r}$ by taking respectively, the disjoint union and the product automaton of $\mathcal{A}_{\varphi_1,r}$ and $\mathcal{A}_{\varphi_2,r}$, the NFA accepting the Cauchy product of $L(\mathcal{A}_{\varphi_1,r})$ and $L(\mathcal{A}_{\varphi_2,r})$, the NFA accepting the shuffle product of $L(\mathcal{A}_{\varphi_1,r})$ and $L(\mathcal{A}_{\varphi_2,r})$, and the NFA accepting the iteration of $L(\mathcal{A}_{\varphi,r})$.  
\item[viii)] If $\psi=x^{(i)}=y^{(i)}$, then we consider the DFA  $\mathcal{A}_{\psi,r}=(\{q\}, I_{p\B(\mathcal{V})}, q, \Delta , \{q\})$ where $\Delta = \{(q,a,q) \mid a \in I_{p\B(\mathcal{V})}  \text{ such that  for every } p,p' \in P(i) \text{ if } p(x^{(i)}) \in a, \text{ then } p'(y^{(i)}) \notin a \}$.
\item[ix)] If $\psi=\neg (x^{(i)}=y^{(i)})$, then we get firstly a complete finite automaton by the DFA corresponding to formula $x^{(i)}=y^{(i)}$ (Proposition \ref{prop-rec}(5)). Then, we obtain $\mathcal{A}_{\psi,r}$ by applying Proposition \ref{prop-rec}(6).
\item[x)] If  $\psi=\psi_1 \vee \psi_2$ or $\psi=\psi_1 \wedge \psi_2$ or $\psi=\psi_1 * \psi_2$ or $\psi=\psi_1 \shuffle \psi_2$ or $\psi=\psi'^+$ where $\psi_1, \psi_2, \psi'$ are FOEIL formulas, then we apply the same arguments as in (vii) for EPIL formulas.
\item[xi)] If $\psi=\exists x^{(i)}. \psi'$, then we get $\mathcal{A}_{\psi,r}$ as the disjoint union of the finite automata $\mathcal{A}_{\psi',r}^{(j)}$, $j\in [r(i)]$, where $\mathcal{A}_{\psi',r}^{(j)}$ is obtained by $\mathcal{A}_{\psi',r}$ by replacing $x^{(i)}$ by $j$ in $I_{p\B(\mathcal{V})}$.
\item[xii)] If $\psi=\forall x^{(i)}. \psi'$, then we get $\mathcal{A}_{\psi,r}$ as the product automaton of the finite automata $\mathcal{A}_{\psi',r}^{(j)}$, $j\in [r(i)]$, where $\mathcal{A}_{\psi',r}^{(j)}$ is obtained by $\mathcal{A}_{\psi',r}$ by replacing $x^{(i)}$ by $j$ in $I_{p\B(\mathcal{V})}$.
\item[xiii)] If $\psi=\exists ^* x^{(i)}.\psi'$, then we compute firstly all nonempty  subsets $J$ of $[r(i)]$. For every such subset $J=\{l_1, \ldots, l_k\}$, with $1 \leq k \leq r(i)$ and $1 \leq l_1 <  \ldots <  l_k \leq r(i)$,  we consider the NFA $\mathcal{A}_{\psi,r}^{(J)}$ accepting the Cauchy product of the languages $L(\mathcal{A}_{\psi',r}^{(l_1)}),\ldots, L(\mathcal{A}_{\psi',r}^{(l_k)})$ where $\mathcal{A}_{\psi',r}^{(j)}$, $j\in J$, is obtained by $\mathcal{A}_{\psi',r}$ by replacing $x^{(i)}$ by  $j$ in $I_{p\B(\mathcal{V})}$. Then, we get $\mathcal{A}_{\psi,r}$ as the disjoint union of all finite automata $\mathcal{A}_{\psi',r}^{(J)}$  with $\emptyset \neq J \subseteq [r(i)]$. 
\item[xiv)] If $\psi=\forall^* x^{(i)}. \psi'$, then we get $\mathcal{A}_{\psi,r}$ as the NFA accepting the Cauchy product of the languages $L(\mathcal{A}_{\psi',r}^{(1)}), \ldots, L(\mathcal{A}_{\psi',r}^{(r(i))})$ where $\mathcal{A}_{\psi'r}^{(j)}$, $j\in [r(i)]$, is obtained by $\mathcal{A}_{\psi',r}$ by replacing $x^{(i)}$ by  $j$ in $I_{p\B(\mathcal{V})}$. 
\item[xv)] If $\psi=\exists ^{\shuffle} x^{(i)}. \psi'$, then we compute firstly all nonempty subsets $J$ of $[r(i)]$. For every such subset $J=\{l_1, \ldots, l_k\}$, with $1 \leq k \leq r(i)$ and $1 \leq l_1 <  \ldots <  l_k \leq r(i)$,   we consider the NFA $\mathcal{A}_{\psi,r}^{(J)}$ accepting the shuffle product of the languages $L(\mathcal{A}_{\psi',r}^{(l_1)}),\ldots, L(\mathcal{A}_{\psi',r}^{(l_k)})$ where $\mathcal{A}_{\psi',r}^{(j)}$, $j\in J$, is obtained by $\mathcal{A}_{\psi',r}$ by replacing $x^{(i)}$ by  $j$ in $I_{p\B(\mathcal{V})}$.  Then, we get $\mathcal{A}_{\psi,r}$ as the disjoint union of all finite automata $\mathcal{A}_{\psi',r}^{(J)}$ with $\emptyset \neq J \subseteq [r(i)]$.
\item[xvi)] If $\psi=\forall^{\shuffle} x^{(i)}. \psi'$, then we get $\mathcal{A}_{\psi,r}$ as the NFA accepting the shuffle product of the languages $L(\mathcal{A}_{\psi',r}^{(1)}), \ldots, L(\mathcal{A}_{\psi',r}^{(r(i))})$ where $\mathcal{A}_{\psi',r}^{(j)}$, $j\in [r(i)]$, is obtained by $\mathcal{A}_{\psi',r}$ by replacing $x^{(i)}$ by  $j$ in $I_{p\B(\mathcal{V})}$.
\end{itemize}
By our constructions above, we immediately get that for every $(\mathcal{V}, r)$-assignment $\sigma$ and $w \in I_{p\B(r)}^+$ we have $(r, \sigma, w) \models \psi$ iff $w \in \sigma(L(\mathcal{A}_{\psi,r})) \cap I^+_{p\B(r)}$. Hence, it remains to deal with the time complexity of our translation algorithm. 
  
Taking into account the above induction steps, we show that the worst case run time for our translation algorithm is exponential. Indeed, if $\psi$ is a PIL formula, then the constructed finite automaton $\mathcal{A}_{\psi,r}$  in steps (i)-(iii) is a DFA and its state-size is polynomial in the size of $\psi$. Specifically, if $\psi=p(x^{(i)})$, then the state-size of $\mathcal{A}_{\psi,r}$ is the same as the size of $\psi$. If $\psi=\neg \zeta$, then by step (vi) we need an exponential time since we construct firstly a complete finite automaton equivalent to the one corresponding to $\zeta$. The cases (iv), (v), (vii)-(xii), (xiv), and (xvi) trivially require polynomial time constructions. Finally, the translations in steps (xiii) and (xv) require an exponential run time since we need to compute all nonempty subsets of $[r(i)]$; every such subset corresponds to a polynomial time construction of a finite automaton. Moreover, this is an upper bound for the complexity in that steps because of the following reasons. Firstly, the FOEIL formula $\psi'$ contains no EPIL subformulas of the form $\neg \zeta$. Secondly, if $\psi'$ contains a subformula of the form $\exists ^* x^{(i')}.\psi''$ or $\exists ^{\shuffle} x^{(i')}.\psi''$, then the computation of the subsets of $[r(i')]$ is independent of the computation of the subsets of $[r(i)]$. \\ 
On the other hand, the best case run time of the algorithm is polynomial since if $\psi$ needs no translation of steps (vi), (xiii), or (xv), then no exponential blow up occurs.
\end{proof}

\

\begin{proof}[Proof of Theorem \ref{sent_to_aut}] We apply Proposition \ref{formula-aut}. Since $\psi$ is a sentence it contains no free variables. Hence, we get a finite automaton $\mathcal{A}_{\psi,r}$ over $I_{p\B(r)}$ such that $(r,w) \models \psi$ iff $w \in L(\mathcal{A}_{\psi,r})$ for every $w \in I^+_{p\B(r)}$, and this concludes our proof.  
\end{proof}

\

We should note that several of the constructions described in the proof of Proposition \ref{formula-aut} can be simplified according to the form of the FOEIL sentence $\psi$. For instance consider two ports $p,p'$ and the EPIL formula $\varphi=\# (p \wedge p')$. Clearly $\varphi$ is satisfied only by the interaction $a=\{p,p'\}$ which in turns implies that we get in  a straightforward way a finite automaton for $\varphi$. We clarify this in the following example.

\begin{exa}
We consider the parametric Repository architecture  and its corresponding FOEIL sentence 
$ \psi= \exists x^{(1)} \forall^* x^{(2)}.  \#(p_r(x^{(1)}) \wedge
p_d(x^{(2)}))$ (cf. Example \ref{repo}). We let $r(1)=1$ and $r(2)=3$, i.e., we assume one repository component instance, as required, and three instances of the data accessor component. Hence, the set of available ports of the architecture w.r.t. $r$ is $\{p_r(1), p_d(1), p_d(2), p_d(3)  \}$. We consider the interactions $a_1=\{p_r(1), p_d(1)  \}$, $a_2=\{p_r(1), p_d(2)  \}$, and $a_3=\{p_r(1), p_d(3)  \}$ and  the DFA's $\mathcal{A}_1 =(\{q_{1,0}, q_{1,1} \}, \{a_1 \}, q_{1,0}, \{(q_{1,0},a_1, \\ q_{1,1})\}, \{q_{1,1} \} )$, $\mathcal{A}_2 =(\{q_{2,0}, q_{2,1} \}, \{a_2 \}, q_{2,0}, \{(q_{2,0},a_2, q_{2,1})\}, \{q_{2,1} \} )$, and $\mathcal{A}_3 =(\{q_{3,0}, q_{3,1} \}, \\ \{a_3 \}, q_{3,0}, \{(q_{3,0},a_3, q_{3,1})\}, \{q_{3,1} \} )$. Trivially we get $L(\mathcal{A}_1) =\{a_1\}$, $L(\mathcal{A}_2) =\{a_2\}$, and $L(\mathcal{A}_3) =\{a_3\}$, hence we can consider the normalized DFA's $\mathcal{A}_1$, $\mathcal{A}_2$, and $\mathcal{A}_3$ (cf. for instance \cite{Kh:Au, Sa:El, Si:In}) as the ones corresponding to EPIL formulas $\#(p_r(1) \wedge
p_d(1))$, $\#(p_r(1) \wedge
p_d(2))$, and $\#(p_r(1) \wedge
p_d(3))$, respectively. Then, we construct the DFA $\mathcal{A}_{\psi,r}$, depicted in Figure  \ref{exam_aut},  which corresponds to FOEIL sentence $\psi$ w.r.t. $r$.
\begin{figure}[h]
  \centering
 \begin{tikzpicture}[>=stealth',shorten >=1pt,auto,node distance=3cm]
  \node[initial,state,initial text={}]    (q10)                   {$q_{1,0}$};
  \node[state]            (q) [right of=q10]  {$q$};
  \node[state]            (q') [right of=q]  {$q'$};
  \node[state, accepting] (q31) [right of=q']  {$q_{3,1}$}; 
  \path[->] 
      (q10)  edge   node [above]  {$a_1$} (q)
      (q)   edge   node [above] {$a_2$} (q')
      (q')   edge   node [above] {$a_3$} (q31);
\end{tikzpicture}
\caption{The DFA $\mathcal{A}_{\psi,r}$ corresponding to sentence $\psi$ w.r.t. $r$.}
\label{exam_aut}
\end{figure} 
\end{exa}

\

Now we are ready to state the decidability result of the equivalence of FOEIL sentences. Specifically, we prove the following theorem.

\begin{thm}
Let $p\mathcal{B}= \{B(i,j) \mid i \in [n], j \geq 1 \} $ be a set of parametric components and  $r:[n] \rightarrow \mathbb{N}$ a mapping. Then, the equivalence  problem for \emph{FOEIL} sentences over $p\B$ w.r.t. $r$ is decidable in doubly exponential time. 
\end{thm}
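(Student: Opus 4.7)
The plan is to reduce equivalence of FOEIL sentences to equivalence of the finite automata produced by Theorem \ref{sent_to_aut}, and then to bound the cost of equivalence checking for those automata.

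First, given two FOEIL sentences $\psi_1, \psi_2$ over $p\B$, I would apply Theorem \ref{sent_to_aut} twice, once for each sentence, to effectively construct finite automata $\mathcal{A}_{\psi_1,r}$ and $\mathcal{A}_{\psi_2,r}$ over $I_{p\B(r)}$ satisfying $(r,w)\models \psi_i$ iff $w\in L(\mathcal{A}_{\psi_i,r})$ for every $w \in I^+_{p\B(r)}$ and $i=1,2$. Since each sentence is closed, no assignment-dependence is involved, and $\psi_1$ and $\psi_2$ are equivalent w.r.t. $r$ precisely when $L(\mathcal{A}_{\psi_1,r})=L(\mathcal{A}_{\psi_2,r})$. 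By Theorem \ref{sent_to_aut} the construction runs in time at most exponential in the size of the input sentences, and hence each of the two automata has at most exponentially many states.

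Second, I would reduce the equivalence check to two emptiness tests in the standard way: $L(\mathcal{A}_{\psi_1,r})=L(\mathcal{A}_{\psi_2,r})$ iff both $L(\mathcal{A}_{\psi_1,r}) \cap \overline{L(\mathcal{A}_{\psi_2,r})}=\emptyset$ and $L(\mathcal{A}_{\psi_2,r}) \cap \overline{L(\mathcal{A}_{\psi_1,r})}=\emptyset$. To form each complement I would first apply the subset construction from Proposition \ref{prop-rec}(1) to convert the underlying NFA to an equivalent complete DFA, and then swap final and non-final states as in Proposition \ref{prop-rec}(6). The subset construction introduces the only substantial blow-up: starting from an NFA of exponential state-size, the resulting complete DFA has state-size at most doubly exponential in the size of the original sentences. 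The subsequent complementation is linear in the size of that DFA.

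Third, I would take the product automaton of $\mathcal{A}_{\psi_1,r}$ with the complement of $\mathcal{A}_{\psi_2,r}$ (and, symmetrically, of $\mathcal{A}_{\psi_2,r}$ with the complement of $\mathcal{A}_{\psi_1,r}$) via Proposition \ref{prop-rec}(2); the product keeps the doubly exponential size bound since we multiply an exponentially-sized state set with a doubly-exponentially-sized one. Finally, by Proposition \ref{prop-rec}(7), emptiness of the resulting NFA is decidable in time linear in its state-size, that is, in doubly exponential time in the size of $\psi_1$ and $\psi_2$. Since both tests return ``empty'' iff $\psi_1$ and $\psi_2$ are equivalent w.r.t.\ $r$, we obtain a decision procedure running in doubly exponential time overall.

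The main obstacle in the argument is book-keeping the complexity: the potentially exponential NFAs produced by Theorem \ref{sent_to_aut} force the subset construction to incur a doubly exponential blow-up, and one has to verify that no other step (products, complementation of a complete DFA, emptiness test) pushes the bound above doubly exponential. Since each of the remaining operations is at worst polynomial in the already doubly-exponential state-size, the final bound is indeed doubly exponential, as claimed.
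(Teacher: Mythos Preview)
Your proposal is correct and follows essentially the same approach as the paper: translate both sentences to NFAs via Theorem~\ref{sent_to_aut} (exponential), determinize via Proposition~\ref{prop-rec}(1) (another exponential), and then decide equivalence of the resulting complete automata. The only cosmetic difference is that the paper determinizes both automata and then invokes a linear-time DFA-equivalence test directly, whereas you spell out the symmetric-difference reduction (complement one side, intersect, test emptiness); both routes yield the same doubly exponential bound.
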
 
\begin{proof}
Let $\psi_1, \psi_2$ be FOEIL sentences over $p\B$. Then, by Theorem \ref{sent_to_aut} we  construct, in exponential time, finite automata $\mathcal{A}_{\psi_1,r}$ and $\mathcal{A}_{\psi_2,r}$ such that $(r,w) \models \psi_i$ iff $w \in L(\mathcal{A}_{\psi_i,r})$ for every $w \in I^+_{p\B(r)}$ and $i=1,2$. The finite automata $\mathcal{A}_{\psi_1,r}$ and $\mathcal{A}_{\psi_2,r}$ are in general nondeterministic, hence by Proposition \ref{prop-rec}(1) we construct complete finite automata $\mathcal{A}'_{\psi_1,r}$ and $\mathcal{A}'_{\psi_2,r}$  equivalent to $\mathcal{A}_{\psi_1,r}$ and $\mathcal{A}_{\psi_2,r}$, respectively. In this construction another exponential blow up occurs. Finally, the decidability of equivalence of the complete finite automata $\mathcal{A}'_{\psi_1,r}$ and $\mathcal{A}'_{\psi_2,r} $ requires  a linear time (cf. pages 143--145 in  \cite{Ah:Th}), and we are done.
\end{proof}

\

Next, we deal with the decidability of satisfiability and validity results for FOEIL sentences.  For this, we recall firstly these notions. More precisely, a FOEIL sentence $\psi$ over $p\B$ is called \emph{satisfiable w.r.t.} $r$ whenever there exists a $w  \in I_{p\B(r)}^+$ such that $(r,w) \models \psi$, and \emph{valid w.r.t.} $r$ whenever $(r,w) \models \psi$ for every $w  \in I_{p\B(r)}^+$.   

\begin{thm}
Let $p\mathcal{B}= \{B(i,j) \mid i \in [n], j \geq 1 \} $ be a set of parametric components and  $r:[n] \rightarrow \mathbb{N}$ a mapping. Then, the satisfiability problem for \emph{FOEIL} sentences over $p\B$ w.r.t. $r$ is decidable in  exponential time. 
\end{thm}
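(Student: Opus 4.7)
The plan is to reduce the satisfiability problem for FOEIL sentences to the emptiness problem for finite automata, using the translation already established in Theorem~\ref{sent_to_aut}. Given a FOEIL sentence $\psi$ over $p\B$ and a mapping $r:[n] \rightarrow \mathbb{N}$, I would first apply Theorem~\ref{sent_to_aut} to effectively construct, in exponential time at worst, a finite automaton $\mathcal{A}_{\psi,r}$ over the alphabet $I_{p\B(r)}$ such that $(r,w) \models \psi$ iff $w \in L(\mathcal{A}_{\psi,r})$ for every $w \in I_{p\B(r)}^+$.

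By the definition of satisfiability, $\psi$ is satisfiable w.r.t.\ $r$ iff there exists some $w \in I_{p\B(r)}^+$ with $(r,w) \models \psi$, which by the above equivalence is the case iff $L(\mathcal{A}_{\psi,r}) \cap I_{p\B(r)}^+ \neq \emptyset$. Since $\varepsilon \notin I_{p\B(r)}^+$, a brief inspection of the inductive construction in the proof of Proposition~\ref{formula-aut} shows that $\mathcal{A}_{\psi,r}$ never accepts the empty word (the base cases produce automata whose initial states are non-final, and none of the inductive constructions introduce initial-final states when the underlying formulas require at least one letter); alternatively one can remove $\varepsilon$ from the accepted language by a trivial linear-time modification. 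Hence the question reduces to deciding whether $L(\mathcal{A}_{\psi,r}) = \emptyset$.

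By Proposition~\ref{prop-rec}(7), emptiness of a finite automaton is decidable in linear time in the size of the automaton. Combining the two steps, the total running time is exponential for the construction of $\mathcal{A}_{\psi,r}$ plus linear in its (exponentially sized) state space, which remains exponential overall. The main subtlety, and the only non-routine point, is the bookkeeping regarding the empty word noted above; the rest of the argument follows directly from Theorem~\ref{sent_to_aut} and standard automata-theoretic facts collected in Proposition~\ref{prop-rec}.
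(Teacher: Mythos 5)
Your proposal is correct and follows essentially the same route as the paper: apply Theorem~\ref{sent_to_aut} to obtain $\mathcal{A}_{\psi,r}$ in exponential time and then decide emptiness in linear time via Proposition~\ref{prop-rec}(7). Your additional remark about excluding the empty word is a careful refinement of a point the paper's proof passes over silently, but it does not change the argument.
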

\begin{proof}
Let $\psi$ be a FOEIL sentence over $p\B$. By Theorem \ref{sent_to_aut} we  construct, in exponential time, an NFA $\mathcal{A}_{\psi,r}$ such that $(r,w) \models \psi$ iff $w \in L(\mathcal{A}_{\psi,r})$ for every $w \in I^+_{p\B(r)}$. Then, $\psi$ is satisfiable iff $L(\mathcal{A}_{\psi,r}) \neq \emptyset$ which is decidable in linear time (Proposition \ref{prop-rec}(7)), and this concludes our proof.    
\end{proof}

\begin{thm}
Let $p\mathcal{B}= \{B(i,j) \mid i \in [n], j \geq 1 \} $ be a set of parametric components and  $r:[n] \rightarrow \mathbb{N}$ a mapping. Then, the validity problem for \emph{FOEIL} sentences over $p\B$ w.r.t. $r$ is decidable in doubly exponential time. 
\end{thm}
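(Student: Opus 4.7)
The plan is to reduce validity of a FOEIL sentence to a universality check on a finite automaton, following the same pattern used for the equivalence and satisfiability results above. Let $\psi$ be a FOEIL sentence over $p\B$. By Theorem \ref{sent_to_aut} we can effectively construct, in (at worst) exponential time, a finite automaton $\mathcal{A}_{\psi,r}$ over the alphabet $I_{p\B(r)}$ such that, for every $w\in I_{p\B(r)}^+$, we have $(r,w)\models \psi$ iff $w\in L(\mathcal{A}_{\psi,r})$. By definition, $\psi$ is valid w.r.t. $r$ iff every $w\in I_{p\B(r)}^+$ satisfies $(r,w)\models \psi$, equivalently iff $L(\mathcal{A}_{\psi,r})\supseteq I_{p\B(r)}^+$.

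Next, I would reduce this inclusion to a standard universality test. One clean way is to form the finite automaton $\mathcal{A}'_{\psi,r}$ obtained from $\mathcal{A}_{\psi,r}$ by additionally marking its initial state(s) as accepting; then $L(\mathcal{A}'_{\psi,r}) = L(\mathcal{A}_{\psi,r}) \cup \{\varepsilon\}$, so $\psi$ is valid w.r.t. $r$ iff $L(\mathcal{A}'_{\psi,r}) = I_{p\B(r)}^*$. Now apply Proposition \ref{prop-rec}(8), which decides the universality problem for an NFA in exponential time in the size of the automaton. Since $\mathcal{A}_{\psi,r}$ (and hence $\mathcal{A}'_{\psi,r}$) has size at most exponential in $|\psi|$ and in $r$, the universality check runs in time exponential in that exponential size, i.e., doubly exponential in the size of the input. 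Adding the initial exponential-time translation preserves this overall doubly exponential bound.

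Alternatively, and equivalently for the complexity analysis, one can proceed as in the proof of the equivalence theorem: determinise and complete $\mathcal{A}_{\psi,r}$ via Proposition \ref{prop-rec}(1) (exponential blowup) to obtain a complete DFA, complement it using Proposition \ref{prop-rec}(6) (no blowup), and then decide emptiness of the resulting automaton in linear time via Proposition \ref{prop-rec}(7). One has to take a small amount of care with the empty word: since the semantics of FOEIL exclude $\varepsilon$, we simply ignore the acceptance status of the initial state in the complemented DFA (or, equivalently, intersect beforehand with an automaton accepting $I_{p\B(r)}^+$, which is of constant size). The emptiness test is then performed on an automaton whose size is exponential in $|\psi|$ and in $r$, yielding again an overall doubly exponential running time.

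No genuinely new obstacle appears relative to the equivalence proof: the only subtlety is the routine bookkeeping about the empty word outlined above, which is handled either by making the initial state accepting in the complement or by a constant-size product. Hence the validity problem for FOEIL sentences over $p\B$ w.r.t. $r$ is decidable in doubly exponential time.
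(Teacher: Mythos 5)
Your proposal is correct and follows essentially the same route as the paper: translate $\psi$ to an NFA via Theorem \ref{sent_to_aut} (exponential time) and then invoke the exponential-time universality test of Proposition \ref{prop-rec}(8), giving a doubly exponential bound overall. Your explicit handling of the empty word is a small refinement the paper's proof glosses over (it applies Proposition \ref{prop-rec}(8) directly to the condition $L(\mathcal{A}_{\psi,r}) = I^+_{p\B(r)}$), but it does not change the argument or the complexity.
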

\begin{proof}
Let $\psi$ be a FOEIL sentence over $p\B$. By Theorem \ref{sent_to_aut} we  construct, in exponential time, an NFA $\mathcal{A}_{\psi,r}$ such that $(r,w) \models \psi$ iff $w \in L(\mathcal{A}_{\psi,r})$ for every $w \in I^+_{p\B(r)}$. Then, $\psi$ is valid iff  $L(\mathcal{A}_{\psi,r}) = I^+_{p\B(r)} $ which is decidable in exponential time (Proposition \ref{prop-rec}(8)). Hence, we can decide whether $\psi$ is valid or not in doubly exponential time. 
\end{proof}

\section{Parametric weighted architectures}\label{weighted_part}

\subsection{Weighted EPIL and component-based systems}
\label{wEPIL_sec}
In this subsection, we introduce the notion of weighted EPIL over a set of ports $P$ and the semiring $K$. Furthermore, we  define weighted component-based systems and provide examples of architectures in the weighted setup.

\begin{defi}
Let $P$ be a finite set of ports. Then the syntax of weighted EPIL (wEPIL for short) formulas over $P$ and $K$ is given by the grammar
$$\tilde{\varphi}::= k \mid \varphi \mid \tilde{\varphi} \oplus  \tilde{\varphi} \mid \tilde{\varphi} \otimes  \tilde{\varphi} \mid \tilde{\varphi} \odot  \tilde{\varphi} \mid \tilde{\varphi} \varpi  \tilde{\varphi} \mid \tilde{\varphi}^{\oplus}  $$ 
where $k \in K$, $\varphi$ is an EPIL formula over $P$, and $\oplus$, $\otimes$, $\odot$, $\varpi$, and $^{\oplus}$ are the weighted disjunction, conjunction, concatenation, shuffle, and iteration operator, respectively.
\end{defi}

If $\tilde{\varphi}$ is composed by elements in $K$ and PIL formulas connected with $\oplus$ and $\otimes$ operators only, then it is called also a weighted PIL (wPIL for short) formula over $P$ and $K$ \cite{Pa:On,Pa:We} and it will be denoted also by $\tilde{\phi}$. The binding strength, in decreasing order, of the operators in wEPIL is the following: weighted iteration, weighted shuffle, weighted concatenation, weighted conjunction, and weighted disjunction. 

For the semantics of wEPIL formulas  we consider finite words $w$ over $I(P)$ and interpret wEPIL formulas as series in $K \left \langle \left \langle I(P)^+ \right \rangle \right \rangle $. 

\begin{defi}\label{wsem-epil}
Let $\e$ be a wEPIL formula over $P$ and $K$. Then the semantics of $\e$ is a series $\left \Vert \e \right \Vert \in K \left \langle \left \langle I(P)^+ \right \rangle \right \rangle$.  For every $w \in I(P)^+$ the value $\left \Vert \e \right \Vert (w)$ is defined inductively on the structure of $\e$ as follows: 
\begin{itemize}
\item[-] $\left \Vert k \right \Vert(w) = k$,

\item[-] $\left \Vert \varphi \right \Vert(w) = \left\{
\begin{array}
[c]{rl}%
1 & \textnormal{ if }w\models \varphi\\
0 & \textnormal{ otherwise}%
\end{array}
,\right.  $

\item[-] $\left \Vert \e_1 \oplus \e_2 \right \Vert(w)=\left \Vert \e_1  \right \Vert(w)+ \left \Vert  \e_2 \right \Vert(w)$,

\item[-] $\left \Vert \e_1 \otimes \e_2 \right \Vert(w)=\left \Vert \e_1  \right \Vert(w) \cdot \left \Vert  \e_2 \right \Vert(w)$,

\item[-] $\left \Vert \e_1 \odot \e_2 \right \Vert(w)=\sum\limits_{w=w_1w_2} ( \left \Vert \e_1  \right \Vert(w_1) \cdot \left \Vert  \e_2 \right \Vert(w_2))$,

\item[-] $\left \Vert \e_1 \varpi \e_2 \right \Vert(w)=\sum\limits_{w \in w_1 \shuffle w_2} ( \left \Vert \e_1  \right \Vert(w_1) \cdot \left \Vert  \e_2 \right \Vert(w_2))$,

\item[-] $\left \Vert \e^{\oplus} \right \Vert(w)=\sum\limits_{ \nu \geq 1} \left \Vert \e \right \Vert^{\nu}$.

\end{itemize}
\end{defi}

\noindent By definition, the series $\Vert \e \Vert$ is proper for every wEPIL formula $\e$ over $P$ and $K$.
Two wEPIL formulas $\e_1,\e_2$ over $P$ and $K$ are called equivalent and we write $\e_1\equiv \e_2$ if $\Vert \e_1\Vert=\Vert \e_2 \Vert$. Next we define weighted component-based systems. For this, we introduce the notion of a weighted atomic component.

\begin{defi} A weighted atomic component over $K$ is a pair $wB=(B, wt)$ where  $B=(Q,P,q_0,R)$ is an atomic component and $wt:R\rightarrow K$ is a weight mapping.
\end{defi}

\noindent Since every port in $P$ occurs in at most one transition in $R$, we consider, in the sequel, $wt$ as a mapping $wt:P \rightarrow K$.
If a port $p\in P$ occurs in no transition, then we set $wt(p)=0$.

We call a weighted atomic component $wB$ over $K$ a \emph{weighted component}, whenever we deal with several weighted atomic components and the semiring $K$ is understood. Let $w\B = \{wB(i) \mid   i \in [n] \} $ be a set of weighted components where $wB(i)=(B(i), wt(i))$ with $B(i)=(Q(i),P(i),q_{0}(i), R(i)) $ for every $i \in [n]$. The set of ports and the set of interactions of $w\B$  are the sets $P_{\mathcal{B}}$ and $I_{\mathcal{B}}$ respectively, of the underlying set of components $\B =\{B(i) \mid i \in [n] \}$. Let $a=\{p_{j_1},\ldots,p_{j_m}\}$ be an interaction in $I_{\B}$ such that $p_{j_l}\in P(j_l)$ for every $l\in [m]$. Then, the weighted monomial $\tilde{\phi}_a$ of $a$ is given by the wPIL formula 
\begin{align*}
\tilde{\phi}_a &=( wt(j_1)(p_{j_1})\otimes p_{j_1}) \otimes \ldots \otimes (wt(j_m)(p_{j_m}) \otimes p_{j_m}) \\
& \equiv (wt(j_1)(p_{j_1})\otimes   \ldots \otimes wt(j_m)(p_{j_m}))  \otimes  (p_{j_1}\otimes\ldots     \otimes p_{j_m}) \\
& \equiv (wt(j_1)(p_{j_1})\otimes  \ldots \otimes wt(j_m)(p_{j_m}))  \otimes  (p_{j_1} \wedge \ldots  \wedge  p_{j_m})
\end{align*}
where the first equivalence holds since $K$ is commutative and the second one since $p \otimes p' \equiv p\wedge p'$ for every $p,p' \in P_{\B}$.

\begin{defi}
A weighted component-based system (over $K$) is a pair $(w\B, \tilde{\varphi})$ where $w\B =\{wB(i) \mid i\in [n] \}$ is a set of weighted components and $\e$ is a wEPIL formula over $P_{\B}$ and $K$. 
\end{defi}

We should note that, as in the unweighted case,  the wEPIL formula $\e$ is defined over the set of ports $P_{\B}$ and  is interpreted as a series in $K \left \langle \left \langle I_{\B}^+ \right \rangle \right \rangle $.
 
\

Next we consider three examples of weighted component-based models whose architectures have ordered interactions encoded by wEPIL formulas. We recall from the Subsection \ref{examples_EPIL} the following macro EPIL formula. Let $P=\{p_1, \ldots , p_n\}$ be a set of ports. Then, for  $p_{i_1}, \ldots , p_{i_m} \in P$ with $m <n$ we let 
$$\#(p_{i_1} \wedge \ldots \wedge p_{i_m})::=p_{i_1}\wedge \ldots \wedge p_{i_m} \wedge \bigwedge_{p \in P \setminus \{p_{i_1}, \ldots, p_{i_m}\}}\neg p.$$ 
Let us now assume that we assign a  weight $k_{i_l} \in K$ to $p_{i_l}$ for every $l \in [m]$. We define the subsequent macro wEPIL formula 
 $$\#_w(p_{i_1} \otimes \ldots \otimes p_{i_m})::=(k_{i_1} \otimes p_{i_1}) \otimes \ldots \otimes (k_{i_m} \otimes p_{i_m}) \otimes \bigwedge_{p \in P \setminus \{p_{i_1}, \ldots, p_{i_m}\}}\neg p. $$
Then, we get \begin{align*}
\#_w(p_{i_1} \otimes \ldots \otimes p_{i_m})& \equiv (k_{i_1} \otimes \ldots \otimes k_{i_m})   \otimes (p_{i_1} \otimes \ldots \otimes p_{i_m}) \otimes \bigwedge_{p \in P \setminus \{p_{i_1}, \ldots, p_{i_m}\}}\neg p \\
& \equiv (k_{i_1} \otimes \ldots \otimes k_{i_m})   \otimes \bigg((p_{i_1} \wedge \ldots \wedge p_{i_m}) \wedge \bigwedge_{p \in P \setminus \{p_{i_1}, \ldots, p_{i_m}\}}\neg p \bigg)\\
& = (k_{i_1} \otimes \ldots \otimes k_{i_m})   \otimes \#(p_{i_1} \wedge \ldots \wedge p_{i_m}).
\end{align*}

\noindent Clearly the above macro formula $\#_w(p_{i_1} \otimes \ldots \otimes p_{i_m})$ depends on the values $k_{i_1}, \ldots, k_{i_m}$. Though, in order to simplify our wEPIL formulas, we make no special notation about this. If the macro formula is defined in a weighted component-based system, then the values $k_{i_1}, \ldots, k_{i_m}$ are unique in the whole formula. 

\begin{exa}\label{ex_b_wblackboard}\textbf{(Weighted Blackboard)}
Consider a weighted component-based system  $(w\B, \tilde{\varphi})$ with the Blackboard architecture described in Example \ref{ex_b_blackboard}. We assume the existence of three knowledge source weighted components. Therefore, we have  $w\B=\lbrace wB(1), wB(2),wB(3),wB(4), wB(5)\rbrace$ referring to blackboard,  controller, and the three source
weighted components, respectively. Figure \ref{wb_blackboard} depicts each of the weighted components in the system and a possible execution of the permissible interactions. The weight associated with each port in the system is shown at the outside of the port.

The allowed interactions range over $I_{\B}$, i.e., they are defined as in the corresponding (unweighted) component-based system $\B$ with Blackboard architecture. Then, the wEPIL formula $\e$ for the weighted Blackboard architecture with three source weighted components is 
\begin{multline*}\e=\Bigg(\#_w(p_d\otimes p_r) \odot \bigg(\#_w(p_d\otimes p_{n_1})\varpi \#_w(p_d\otimes p_{n_2})\varpi \#_w(p_d\otimes p_{n_3}) \bigg) \odot \\ \bigg(\e_1\oplus \e_2\oplus \e_3\oplus(\e_1 \varpi \e_2) \oplus (\e_1\varpi \e_3)\oplus (\e_2\varpi \e_3)\oplus (\e_1\varpi\e_2\varpi\e_3)\bigg)^{\oplus}\Bigg)^{\oplus}
\end{multline*}
where 
$$\e_i= \#_w(p_l\otimes p_{t_i}) \odot \#_w(p_e\otimes p_{w_i}\otimes p_a)$$
for $i=1,2,3$. The first wPIL subformula expresses the cost for the connection of blackboard and controller. The wEPIL subformula between the two weighted concatenation operators represents the cost of the connection of the three  sources to blackboard in order to be informed for existing data. The last part of $\e$ captures the cost of applying the connection of some of the three sources with controller and blackboard for the triggering and writing process. The weighted iteration operators describe the cost of executing recursive interactions in $\e$. Let $w_1,w_2 \in I^{+}_{\B}$ encode two distinct sequences of interactions permitted in the given weighted Blackboard architecture. The values $\left \Vert \e \right \Vert(w_1)$ and $\left \Vert \e \right \Vert(w_2)$ represent the cost for executing these interactions with the order encoded by $w_1$ and $w_2$, respectively. Then, $\left \Vert \e \right \Vert(w_1)+\left \Vert \e \right \Vert(w_2)$  is the `total' cost for implementing $w_1$ and $w_2$. For example, if we consider
the max-plus semiring, then the value $\max \lbrace \left \Vert \e \right \Vert(w_1),\left \Vert \e \right \Vert(w_2)\rbrace$ gives information for
the communication with the maximum cost. Such information would be important in systems with restricted resources such as battery capacity for instance, in order for the systems to opt for implementing the least `expensive' sequence of interactions.

\definecolor{harlequin}{rgb}{0.25, 1.0, 0.0}
\definecolor{ao}{rgb}{0.0, 0.0, 1.0}

\definecolor{darkorange}{rgb}{1.0, 0.55, 0.0}
\definecolor{harlequin}{rgb}{0.25, 1.0, 0.0}
\definecolor{ao}{rgb}{0.0, 0.0, 1.0}

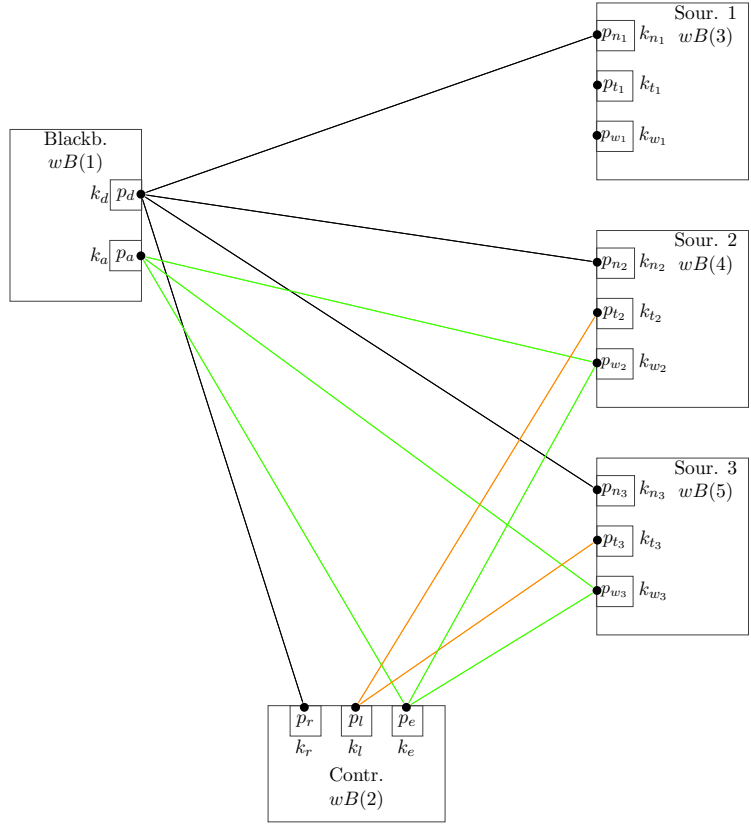
\begin{figure}[h]
\centering
\resizebox{0.65\linewidth}{!}{
\begin{tikzpicture}[>=stealth',shorten >=1pt,auto,node distance=1cm,baseline=(current bounding box.north)]
\tikzstyle{component}=[rectangle,ultra thin,draw=black!75,align=center,inner sep=9pt,minimum size=1.5cm,minimum height=3.4
cm,minimum width=2.6cm]
\tikzstyle{port}=[rectangle,ultra thin,draw=black!75,minimum size=6mm]
\tikzstyle{bubble} = [fill,shape=circle,minimum size=5pt,inner sep=0pt]
\tikzstyle{type} = [draw=none,fill=none] 

\node [component,align=center] (a1)  {};
\node [port] (a2) [above right= -1.6cm and -0.62cm of a1]  {$p_d$};
\node[bubble] (a3) [above right=-1.35cm and -0.08cm of a1]   {};

\node[]       (k1) [left= 0.4cm of a3]{$k_d$};

\node [port] (a4) [above right= -2.8cm and -0.63cm of a1]  {$p_a$};
\node[bubble] (a5) [above right=-2.57cm and -0.08cm of a1]   {};

\node[]       (k2) [left= 0.4cm of a5]{$k_a$};

\node[type] (a6) [above=-0.45cm of a1]{Blackb.};
\node            [below=-0.1cm of a6]{$wB(1)$};

\tikzstyle{control}=[rectangle,ultra thin,draw=black!75,align=center,inner sep=9pt,minimum size=1.5cm,minimum height=2.3cm,minimum width=3.5cm]

\node [control,align=center] (b1)  [below right=8cm and 2.5cm of a1]{};
\node [port] (b2) [above left=-0.605cm  and -1.05 of b1]  {$p_r$};
\node[bubble] (b3) [above left=-0.1cm and -0.35cm of b2]   {};

\node[]       (k3) [below= 0.4cm of b3]{$k_r$};

\node [port] (b4) [right=0.4 of b2]  {$p_l$};
\node[bubble] (b5) [above left=-0.1cm and -0.35cm of b4]   {};

\node[]       (k4) [below= 0.4cm of b5]{$k_l$};

\node [port] (b6) [right=0.4 of b4]  {$p_e$};
\node[bubble] (b7) [above left=-0.1cm and -0.35cm of b6]   {};

\node[]       (k5) [below= 0.4cm of b7]{$k_e$};

\node[type] (b8) [below=-1.2cm of b1]{Contr.};
\node[]      [below=-0.1cm of b8]{$wB(2)$};

\tikzstyle{source}=[rectangle,ultra thin,draw=black!75,align=center,inner sep=9pt,minimum size=1.5cm,minimum height=3.5cm,minimum width=3cm]

\tikzstyle{port}=[rectangle,ultra thin,draw=black!75,minimum size=6mm,inner sep=3pt]

\node [source,align=center] (c1) [above right=-1cm and 9cm of a1] {};
\node [port] (c2) [above left= -0.95cm and -0.75cm of c1]  {$p_{n_1}$};
\node[bubble] (c3) [above left=-0.35cm and -0.08cm of c2]   {};

\node[] (k6) [right= 0.6cm of c3]{$k_{n_1}$};

\tikzstyle{port}=[rectangle,ultra thin,draw=black!75,minimum size=6mm,inner sep=3.5pt]
\node [port] (c4) [above left= -1.6cm and -0.72cm of c2]  {$p_{t_1}$};
\node[bubble] (c5) [above left=-0.35cm and -0.08cm of c4]   {};

\node[] (k7) [right= 0.6cm of c5]{$k_{t_1}$};

\tikzstyle{port}=[rectangle,ultra thin,draw=black!75,minimum size=6mm,inner sep=2.7pt]
\node [port] (c6) [above left= -1.6cm and -0.72cm of c4]  {{\small $p_{w_1}$}};
\node[bubble] (c7) [above left=-0.35cm and -0.08cm of c6]   {};

\node[] (k8) [right= 0.6cm of c7]{$k_{w_1}$};

\node[type,align=center] (c8) [above right=-0.45cm and -1.6cm of c1]{Sour. $1$};
\node[] [below=-0.1cm of c8]{$wB(3)$};

\tikzstyle{port}=[rectangle,ultra thin,draw=black!75,minimum size=6mm,inner sep=3pt]
\node [source,align=center] (d1) [below= 1cm of c1] {};
\node [port] (d2) [above left= -0.95cm and -0.75cm of d1]  {$p_{n_2}$};
\node[bubble] (d3) [above left=-0.35cm and -0.08cm of d2]   {};

\node[] (k9) [right= 0.6cm of d3]{$k_{n_2}$};

\tikzstyle{port}=[rectangle,ultra thin,draw=black!75,minimum size=6mm,inner sep=3.5pt]
\node [port] (d4) [above left= -1.6cm and -0.72cm of d2]  {$p_{t_2}$};
\node[bubble] (d5) [above left=-0.35cm and -0.08cm of d4]   {};

\node[] (k10) [right= 0.6cm of d5]{$k_{t_2}$};

\tikzstyle{port}=[rectangle,ultra thin,draw=black!75,minimum size=6mm,inner sep=2.7pt]
\node [port] (d6) [above left= -1.6cm and -0.72cm of d4]  {{\small $p_{w_2}$}};
\node[bubble] (d7) [above left=-0.35cm and -0.08cm of d6]   {};

\node[] (k11) [right= 0.6cm of d7]{$k_{w_2}$};

\node[type] (d8) [above right=-0.45cm and -1.6cm of d1]{Sour. $2$};
\node[] [below=-0.1cm of d8]{$wB(4)$};

\tikzstyle{port}=[rectangle,ultra thin,draw=black!75,minimum size=6mm,inner sep=3pt]
\node [source,align=center] (e1) [below= 1cm of d1] {};
\node [port] (e2) [above left= -0.95cm and -0.75cm of e1]  {$p_{n_3}$};
\node[bubble] (e3) [above left=-0.35cm and -0.08cm of e2]   {};

\node[] (k12) [right= 0.6cm of e3]{$k_{n_3}$};

\tikzstyle{port}=[rectangle,ultra thin,draw=black!75,minimum size=6mm,inner sep=3.5pt]
\node [port] (e4) [above left= -1.6cm and -0.72cm of e2]  {$p_{t_3}$};
\node[bubble] (e5) [above left=-0.35cm and -0.08cm of e4]   {};

\node[] (k13) [right= 0.6cm of e5]{$k_{t_3}$};

\tikzstyle{port}=[rectangle,ultra thin,draw=black!75,minimum size=6mm,inner sep=2.7pt]
\node [port] (e6) [above left= -1.6cm and -0.72cm of e4]  {{\small $p_{w_3}$}};
\node[bubble] (e7) [above left=-0.35cm and -0.08cm of e6]   {};

\node[] (k14) [right= 0.6cm of e7]{$k_{w_3}$};

\node[type] (e8) [above right=-0.45cm and -1.6cm of e1]{Sour. $3$};
\node[] [below=-0.1cm of e8]{$wB(5)$};
 
 
\path[-]          (a3)  edge                  node {} (c3);
 \path[-]          (a3)  edge                  node {} (d3);
 \path[-]          (a3)  edge                  node {} (e3);

 \path[-]          (a3)  edge                  node {} (b3);
  \path[-]          (b3)  edge                  node {} (a3);
 
 
 \path[-]          (c3)  edge                  node {} (a3);
 \path[-]          (d3)  edge                  node {} (a3);
 \path[-]          (e3)  edge                  node {} (a3);


 \path[-]          (a5)  edge  [harlequin]                node {} (d7);
 \path[-]          (a5)  edge  [harlequin]                node {} (e7);

 \path[-]          (a5)  edge  [harlequin]                node {} (b7);
 

 \path[-]          (d7)  edge   [harlequin]               node {} (a5);
 \path[-]          (e7)  edge   [harlequin]               node {} (a5);

 \path[-]          (b7)  edge   [harlequin]              node {} (a5);
 
 
  \path[-]          (b5)  edge  [darkorange]                node {} (d5);
   \path[-]          (b5)  edge [darkorange]                node {} (e5);

  \path[-]          (d5)  edge  [darkorange]               node {} (b5);
  \path[-]          (e5)  edge  [darkorange]               node {} (b5);

  \path[-]          (b7)  edge  [harlequin]               node {} (d7);
 \path[-]          (d7)  edge  [harlequin]               node {} (b7);
 
    \path[-]          (b7)  edge  [harlequin]               node {} (e7);
 \path[-]          (e7)  edge  [harlequin]               node {} (b7);
   
\end{tikzpicture}}
\caption{A possible execution of the interactions in a weighted Blackboard architecture.}
\label{wb_blackboard}
\end{figure}
\end{exa}

\begin{exa} \textbf{(Weighted Request/Response)}
\label{w-b-r-r}
Let $(w\B,\widetilde{\varphi})$ be a weighted component-based system with the Request/Response architecture presented in Example \ref{b-r-r}. Our weighted system, shown in Figure \ref{w_b_re_re}, consists of seven weighted components, and specifically, the service registry, two services  with their associated coordinators, and two clients. Therefore, we have that $w\B=\lbrace wB(1),\ldots, wB(7)\rbrace$ referring to each of the aforementioned weighted components, respectively. The allowed interactions range over $I_{\B}$, and the wEPIL formula $\e$ describing the weighted Request/Response architecture is
\begin{multline*}
\e=\big(\#_w(p_e\otimes p_{r_1})\varpi \#_w(p_e\otimes p_{r_2})\big) \odot \big(\x_1 \varpi \x_2\big) \odot \\ \Bigg( \bigg(\e_{11}  \oplus \e_{21} \oplus (\e_{11} \odot \e_{21}) \oplus (\e_{21} \odot \e_{11})\bigg)^{\oplus} \bigoplus  \bigg(\e_{12}  \oplus \e_{22} \oplus (\e_{12} \odot \e_{22}) \oplus (\e_{22} \odot \e_{12}) \bigg)^{\oplus} \bigoplus \\ \bigg(\big(\e_{11}  \oplus \e_{21} \oplus (\e_{11} \odot \e_{21}) \oplus (\e_{21} \odot \e_{11})\big)^{\oplus} \varpi 
 \\  \big(\e_{12}  \oplus \e_{22} \oplus (\e_{12} \odot \e_{22}) \oplus (\e_{22} \odot \e_{12}) \big)^{\oplus}\bigg)^{\oplus}\Bigg)^{\oplus}
\end{multline*} 
where
\begin{itemize}
 \item[-]$\x_1=\#_w(p_{l_1}\otimes p_u) \odot \#_w(p_{o_1}\otimes p_t)$
 \item[-]$\x_2=\#_w(p_{l_2}\otimes p_u) \odot \#_w(p_{o_2}\otimes p_t)$
\end{itemize}
capture the cost of the interactions of the two clients with the service registry and
\begin{itemize} 
\item[-]$\e_{11}=\#_w(p_{n_1}\otimes p_{m_1}) \odot \#_w(p_{q_1}\otimes p_{a_1}\otimes p_{g_1})\odot \#_w(p_{c_1}\otimes p_{d_1}\otimes p_{s_1})$
\item[-] $\e_{12}=\#_w(p_{n_1}\otimes p_{m_2}) \odot \#_w(p_{q_1}\otimes p_{a_2}\otimes p_{g_2})\odot \#_w(p_{c_1}\otimes p_{d_2}\otimes p_{s_2})$
\item[-]$\e_{21}=\#_w(p_{n_2}\otimes p_{m_1}) \odot \#_w(p_{q_2}\otimes p_{a_1}\otimes p_{g_1})\odot \#_w(p_{c_2}\otimes p_{d_1}\otimes p_{s_1})$
\item[-] $\e_{22}=\#_w(p_{n_2}\otimes p_{m_2}) \odot \#_w(p_{q_2}\otimes p_{a_2}\otimes p_{g_2}) \odot \#_w(p_{c_2}\otimes p_{d_2}\otimes p_{s_2})$
\end{itemize}
\noindent describe the connection's cost of each of the two clients with the two services through their coordinators.

\noindent The two wEPIL subformulas at the left of the first two weighted concatenation operators encode the cost for the connections of the two services and the two clients with registry, respectively. Then, each of the three wEPIL subformulas connected with $\bigoplus$ represent the cost for the connection of either one of the two clients or both of them (one at each time) with the first service only, the second service only, or both of the services, respectively. Also, the weighted iteration operators in $\e$ express the cost of repeating the corresponding permissible interactions in the architecture.

Let $w_1,w_2 \in I^{+}_{\B}$ encode two distinct sequences of interactions in the presented weighted Request/Response architecture. Then, the value $\min \lbrace \left \Vert \e \right \Vert(w_1),\left \Vert \e \right \Vert(w_2)\rbrace$ expresses in min-plus semiring for instance,
the communication with the minimum cost. Furthermore, if we interpret the cost as the energy consumption required for implementing the architecture in a network, then we would be able to derive which pattern of interactions, $w_1$ or $w_2$, requires the minimum energy.

\definecolor{darkorange}{rgb}{1.0, 0.55, 0.0}
\definecolor{harlequin}{rgb}{0.25, 1.0, 0.0}
\definecolor{ao}{rgb}{0.0, 0.0, 1.0}

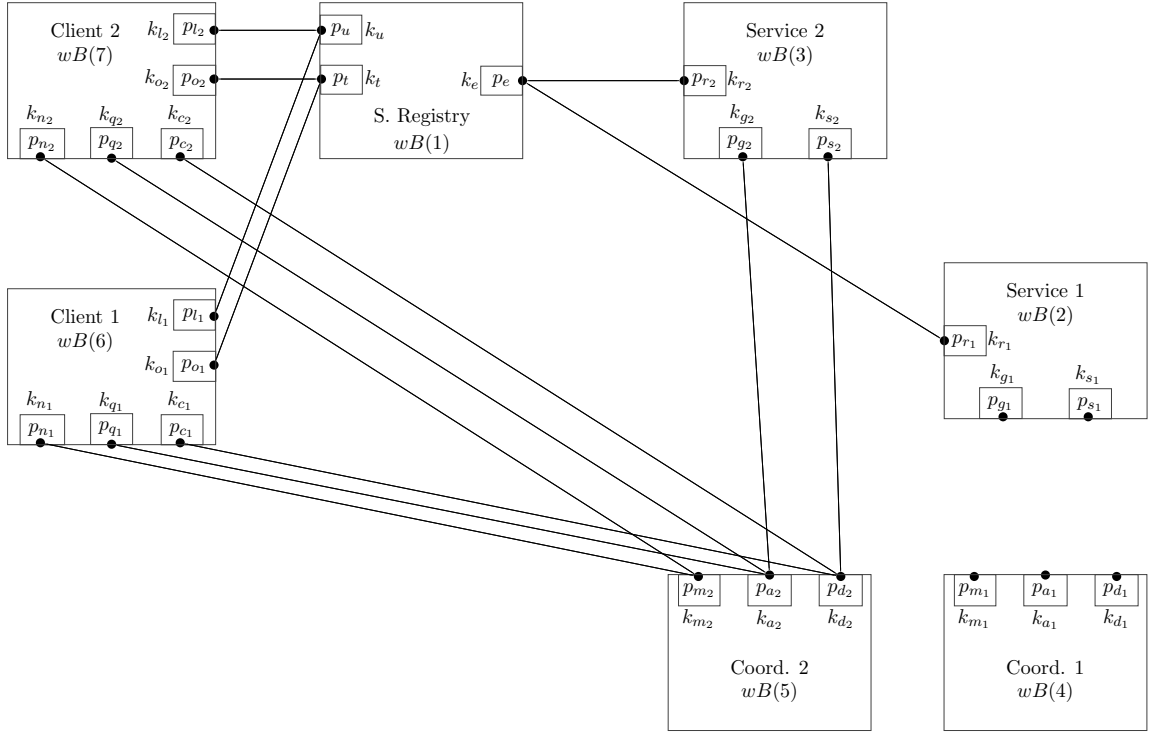
\begin{figure}[h]
\centering
\resizebox{1.0\linewidth}{!}{
\begin{tikzpicture}[>=stealth',shorten >=1pt,auto,node distance=3cm,baseline=(current bounding box.north)]
\tikzstyle{component}=[rectangle,ultra thin,draw=black!75,align=center,inner sep=9pt,minimum size=2.5cm,minimum height=3cm, minimum width=3.9cm]
\tikzstyle{port}=[rectangle,ultra thin,draw=black!75,minimum size=6mm]
\tikzstyle{bubble} = [fill,shape=circle,minimum size=5pt,inner sep=0pt]
\tikzstyle{type} = [draw=none,fill=none]

\node [component] (a1) {};

\tikzstyle{port}=[rectangle,ultra thin,draw=black!75,minimum size=6mm,inner sep=2.5pt]
\node[bubble] (a2)  [above left=-0.105cm and -0.65cm of a1]   {};   
\node [port]  (a3)  [above left=-0.605cm and -1.0cm of a1]  {$p_{m_1}$};  

\node []  (k1)  [below=0.4cm of a2]  {$k_{m_1}$}; 

\tikzstyle{port}=[rectangle,ultra thin,draw=black!75,minimum size=6mm,inner sep=4.5pt]
\node[bubble] (a4)  [above=-0.105cm of a1]   {};   
\node [port]  (a5)  [above=-0.605cm and -2.0cm of a1]  {$p_{a_1}$};  

\node []  (k2)  [below=0.45cm of a4]  {$k_{a_1}$}; 

\node[bubble] (a6)  [above right=-0.105cm and -0.65cm of a1]   {};   
\node [port]  (a7)  [above right=-0.605cm and -1.0cm of a1]  {$p_{d_1}$};  

\node []  (k3)  [below=0.4cm of a6]  {$k_{d_1}$}; 

\node[type]   (a8)      [below=-1.5cm of a1]{Coord. $1$};
\node                   [below=-0.1cm of a8]{$wB(4)$};

\node [component] (h1) [left=1.4 of a1]{};

\tikzstyle{port}=[rectangle,ultra thin,draw=black!75,minimum size=6mm,inner sep=2.5pt]
\node[bubble] (h2)  [above left=-0.105cm and -0.65cm of h1]   {};   
\node [port]  (h3)  [above left=-0.605cm and -1.0cm of h1]  {$p_{m_2}$};  

\node []  (k4)  [below=0.4cm of h2]  {$k_{m_2}$}; 

\tikzstyle{port}=[rectangle,ultra thin,draw=black!75,minimum size=6mm,inner sep=4.5pt]
\node[bubble] (h4)  [above=-0.105cm of h1]   {};   
\node [port]  (h5)  [above=-0.605cm and -2.0cm of h1]  {$p_{a_2}$};  

\node []  (k4)  [below=0.45cm of h4]  {$k_{a_2}$}; 

\node[bubble] (h6)  [above right=-0.105cm and -0.65cm of h1]   {};   
\node [port]  (h7)  [above right=-0.605cm and -1.0cm of h1]  {$p_{d_2}$};  

\node []  (k5)  [below=0.4cm of h6]  {$k_{d_2}$}; 

\node[type]   (h8)      [below=-1.5cm of h1]{Coord. $2$};
\node                   [below=-0.1cm of h8]{$wB(5)$};

\tikzstyle{client}=[rectangle,ultra thin,draw=black!75,align=center,inner sep=9pt,minimum size=2.5cm,minimum height=3.0cm, minimum width=4cm]

\tikzstyle{port}=[rectangle,ultra thin,draw=black!75,minimum size=6mm,inner sep=4.5pt,minimum height=0.4pt]
\node [client] (b1) [above left=8.0cm and 14.0cm of a1]{};

\node[bubble]     (b2) [below left=-0.105cm and -0.7cm of b1]   {};   
\node [port]      (b3) [below left=-0.58cm and -1.1cm of b1]  {$p_{n_2}$}; 
 
\node []  (k6)  [above=0.4cm of b2]  {$k_{n_2}$};

\tikzstyle{port}=[rectangle,ultra thin,draw=black!75,minimum size=6mm,inner sep=4.5pt, minimum height=2pt, minimum width= 0.4cm]
\node[bubble]     (b4) [below=-0.105cm of b1]   {};  
\node [port]      (b5) [below=-0.60cm and -2.0cm of b1]  {$p_{q_2}$}; 

\node []  (k7)  [above=0.4cm of b4]  {$k_{q_2}$};

\tikzstyle{port}=[rectangle,ultra thin,draw=black!75,minimum size=6mm,inner sep=4.5pt, minimum height=1pt, minimum width= 0.4cm]
\node[bubble]     (b6) [below right=-0.105cm and -0.75cm of b1]   {};   
\node [port]      (b7) [below right=-0.58cm and -1.05cm of b1]  {$p_{c_2}$};  
 
 \node []  (k8)  [above=0.4cm of b6]  {$k_{c_2}$};

\tikzstyle{port}=[rectangle,ultra thin,draw=black!75,minimum size=6mm,inner sep=4.5pt, minimum height=2.4pt, minimum width=0.8cm]
\node[bubble] (b8)  [above right=-0.6cm and -0.1cm of b1]   {};   
\node [port]  (b9)  [above right=-0.8cm and -.81cm of b1]  {$p_{l_2}$};  

\node []  (k9)  [left=0.6cm of b8]  {$k_{l_2}$};

\node[bubble] (b10)  [below right=-1.6cm and -0.1cm of b1]   {};   
\node [port]  (b11)  [below right=-1.8 cm and -.82cm of b1]  {$p_{o_2}$};

\node []  (k10)  [left=0.6cm of b10]  {$k_{o_2}$};

\node[type]    (b12)        [above left=-0.8cm and -2.3cm of b1]{Client $2$};
\node                  [below=-0.1cm of b12]{$wB(7)$};

\node [component] (g1) [right=2.0 of b1]{};

\tikzstyle{port}=[rectangle,ultra thin,draw=black!75,minimum size=6mm,inner sep=4.5pt, minimum height=2pt, minimum width=0.8cm]
\node[bubble] (g2)  [right=-0.1cm of g1]   {};   
\node [port]  (g3)  [right=-.81cm of g1]  {$p_{e}$};  

\node []  (k11)  [left=0.6cm of g2]  {$k_{e}$};

\node[bubble] (g4)  [above left=-0.6 and -0.1cm of g1]   {};   
\node [port]  (g5)  [above left=-0.8cm and -.81cm of g1]  {$p_{u}$};  

\node []  (k12)  [right=0.6cm of g4]  {$k_{u}$};

\node[bubble] (g6)  [below left=-1.6cm and -0.1cm of g1]   {};   
\node [port]  (g7)  [below left=-1.8 cm and -.82cm of g1]  {$p_{t}$};  

\node []  (k13)  [right=0.6cm of g6]  {$k_{t}$};

\node[type]   (g8)      [above right=-2.5cm and -3cm of g1]{S. Registry};
\node                   [below=-0.1cm of g8]{$wB(1)$};

\tikzstyle{port}=[rectangle,ultra thin,draw=black!75,minimum size=6mm,inner sep=4.5pt, minimum height=1pt, minimum width= 0.4cm]
\node [client] (c1) [below=2.5cm of b1]{};

\node[bubble]     (c2) [below left=-0.105cm and -0.7cm of c1]   {};   
\node [port]      (c3) [below left=-0.58cm and -1.1cm of c1]  {$p_{n_1}$}; 
 
\node []  (k14)  [above=0.4cm of c2]  {$k_{n_1}$}; 

\tikzstyle{port}=[rectangle,ultra thin,draw=black!75,minimum size=6mm,inner sep=4.5pt, minimum height=2pt, minimum width= 0.4cm]
\node[bubble]     (c4) [below=-0.105cm of c1]   {};  
\node [port]      (c5) [below=-0.60cm and -2.0cm of c1]  {$p_{q_1}$}; 

\node []  (k15)  [above=0.4cm of c4]  {$k_{q_1}$}; 

\tikzstyle{port}=[rectangle,ultra thin,draw=black!75,minimum size=6mm,inner sep=4.5pt, minimum height=1pt, minimum width= 0.4cm]
\node[bubble]     (c6) [below right=-0.105cm and -0.75cm of c1]   {};   
\node [port]      (c7) [below right=-0.58cm and -1.05cm of c1]  {$p_{c_1}$};  

\node []  (k16)  [above=0.4cm of c6]  {$k_{c_1}$}; 

\tikzstyle{port}=[rectangle,ultra thin,draw=black!75,minimum size=6mm,inner sep=4.5pt, minimum height=2.4pt, minimum width=0.8cm]
\node[bubble] (c8)  [above right=-0.6 and -0.1cm of c1]   {};   
\node [port]  (c9)  [above right=-0.8cm and -.81cm of c1]  {$p_{l_1}$};

\node []  (k17)  [left=0.6cm of c8]  {$k_{l_1}$};

\node[bubble] (c10)  [below right=-1.6cm and -0.1cm of c1]   {};   
\node [port]  (c11)  [below right=-1.8 cm and -.82cm of c1]  {$p_{o_1}$};
 
\node []  (k18)  [left=0.6cm of c10]  {$k_{o_1}$};

\node[type]    (c12)        [above left=-0.8cm and -2.3cm of c1]{Client $1$};
\node                       [below=-0.1cm of c12]{$wB(6)$};

\tikzstyle{port}=[rectangle,ultra thin,draw=black!75,minimum size=6mm,inner sep=4.5pt, minimum height=1pt, minimum width= 0.4cm]
\node [component] (d1) [right=9cm of b1]{};
\node[bubble] (d2) [below left=-0.105cm and -1.2cm of d1]   {};   
\node [port] (d3) [below left=-0.60cm and -1.5cm of d1]  {$p_{g_2}$};  

\node []  (k19)  [above=0.4cm of d2]  {$k_{g_2}$};

\node[bubble] (d4) [below right=-0.105cm and -1.2cm of d1]   {};   
\node [port] (d5) [below right=-0.58cm and -1.5cm of d1]  {$p_{s_2}$};  

\node []  (k20)  [above=0.4cm of d4]  {$k_{s_2}$};

\node[bubble] (d6) [left=-0.1cm of d1]   {};   
\node [port] (d7) [left=-.81cm of d1]  {$p_{r_2}$}; 

\node []  (k20)  [right=0.6cm of d6]  {$k_{r_2}$};

\node[type] (d8) [above=-0.8cm of d1]{Service $2$};
\node            [below=-0.1cm of d8]{$wB(3)$};

\node [component] (e1) [below right=2cm and 1.1cm of d1]{};
\node[bubble] (e2) [below left=-0.105cm and -1.2cm of e1]   {};   
\node [port] (e3) [below left=-0.60cm and -1.5cm of e1]  {$p_{g_1}$};  

\node []  (k21)  [above=0.4cm of e2]  {$k_{g_1}$};

\node[bubble] (e4) [below right=-0.105cm and -1.2cm of e1]   {};   
\node [port] (e5) [below right=-0.58cm and -1.5cm of e1]  {$p_{s_1}$}; 

\node []  (k22)  [above=0.4cm of e4]  {$k_{s_1}$};

\node[bubble] (e6) [left=-0.1cm of e1]   {};   
\node [port] (e7) [left=-.81cm of e1]  {$p_{r_1}$};  

\node []  (k23)  [right=0.6cm of e6]  {$k_{r_1}$};

\node[type] (e8) [above=-0.8cm of e1]{Service $1$};
\node        [below=-0.1cm of e8]{$wB(2)$};

\path[-]          (g2)  edge                  node {} (d6);
\path[-]          (d6)  edge                  node {} (g2);

\path[-]          (g2)  edge                  node {} (e6);
\path[-]          (e6)  edge                 node {} (g2);
\path[-]          (g4)  edge [darkorange]                 node {} (c8);
\path[-]          (c8)  edge [darkorange]                 node {} (g4);

\path[-]          (g6)  edge   [darkorange]                node {} (c10);
\path[-]          (c10) edge    [darkorange]               node {} (g6);

\path[-]          (c2)  edge                  node {} (h2);
\path[-]          (c4)  edge                  node {} (h4);
\path[-]          (c6)  edge                  node {} (h6);

\path[-]          (h2)  edge  [harlequin]                 node {} (c2);
\path[-]          (h4)  edge  [harlequin]                node {} (c4);
\path[-]          (h6)  edge  [harlequin]                node {} (c6);


\path[-]          (g4)  edge  [darkorange]                node {} (b8);
\path[-]          (b8)  edge  [darkorange]                node {} (g4);

\path[-]          (g6)  edge   [darkorange]               node {} (b10);
\path[-]          (b10)  edge  [darkorange]                  node {} (g6);

\path[-]          (b2)  edge  [harlequin]                 node {} (h2);
\path[-]          (h2)  edge  [harlequin]                node {} (b2);

\path[-]          (b4)  edge  [harlequin]                 node {} (h4);
\path[-]          (h4)  edge  [harlequin]                node {} (b4);

\path[-]          (b6)  edge  [harlequin]                node {} (h6);
\path[-]          (h6)  edge  [harlequin]                 node {} (b6);

\path[-]          (d2)  edge [harlequin]                 node {} (h4);
\path[-]          (h4)  edge [harlequin]                 node {} (d2);

\path[-]          (d4)  edge [harlequin]                 node {} (h6);
\path[-]          (h6)  edge [harlequin]                 node {} (d4);

\end{tikzpicture}}
\caption{A possible execution of the interactions in a weighted Request/Response architecture.}
\label{w_b_re_re}
\end{figure}
\end{exa}

\begin{exa} \textbf{(Weighted Publish/Subscribe)}
\label{w-b-pu-su} Let $(w\B,\widetilde{\varphi})$ be a weighted component-based system with the Publish/Subscribe architecture described in Example \ref{b-pu-su}. Our weighted system, shown in Figure \ref{w-b-p-s}, is comprised of two publisher, two topic, and three subscriber weighted components, hence  $w\B=\lbrace wB(1),\ldots, wB(7)\rbrace$ referring to these components, respectively. The permissible interactions range over $I_{\B}$, and the wEPIL formula $\e$ for the weighted Publish/Subscribe architecture is $\e=(\e_1\oplus \e_2\oplus (\e_1\varpi \e_2))^{\oplus}$ with
\begin{multline*}
\e_1=\bigg(\big( \x_1 \odot \e_{11}\big)\oplus \big( \x_1 \odot \e_{12}\big)\oplus \big(  \x_1 \odot \e_{13}\big)\oplus \\
\big(  \x_1 \odot (\e_{11}\varpi \e_{12})\big)\oplus \big( \x_1 \odot (\e_{11}\varpi \e_{13})\big)\oplus \\ \big( \x_1 \odot (\e_{12}\varpi \e_{13})\big)\oplus \big( \x_1 \odot (\e_{11}\varpi\e_{12}\varpi \e_{13})\big)\bigg)
\end{multline*}
and 
\begin{multline*}
\e_2=\bigg(\big(  \x_2 \odot \e_{21}\big)\oplus \big( \x_2 \odot \e_{22}\big)\oplus \big(  \x_2 \odot \e_{23}\big)\oplus \\
\big(  \x_2 \odot (\e_{21}\varpi \e_{22})\big)\oplus \big( \x_2 \odot (\e_{21}\varpi \e_{23})\big)\oplus \\ \big( \x_2 \odot (\e_{22}\varpi \e_{23})\big)\oplus \big(  \x_2 \odot (\e_{21}\varpi \e_{22}\varpi \e_{23})\big)\bigg)
\end{multline*}
\noindent where  the following auxiliary subformulas: 
\begin{itemize}
    \item[-] $\x_1=\x_{11}\oplus \x_{12}\oplus (\x_{11}\varpi \x_{12})$
    \item[-] $\x_2=\x_{21}\oplus \x_{22}\oplus (\x_{21}\varpi \x_{22})$
\end{itemize}
\noindent represent the cost for the connection of each of the two topics with the first publisher $wB(1)$, or the second publisher $wB(2)$ or with both of them, and
\begin{itemize}
\item[-] $\x_{11}=\#_w(p_{n_1}\otimes p_{a_1}) \odot \#_w(p_{r_1}\otimes p_{t_1})$
\item[-] $\x_{12}=\#_w(p_{n_1}\otimes p_{a_2}) \odot \#_w(p_{r_1}\otimes p_{t_2})$
\item[-] $\x_{21}=\#_w(p_{n_2}\otimes p_{a_1}) \odot \#_w(p_{r_2}\otimes p_{t_1})$
\item[-] $\x_{22}=\#_w(p_{n_2}\otimes p_{a_2}) \odot \#_w(p_{r_2}\otimes p_{t_2})$
\end{itemize}
\noindent describe the cost of the interactions of the two topics with each of the two publishers, and 
\begin{itemize}
\item[-] $\e_{11}= \#_w(p_{c_1}\otimes p_{e_1}) \odot \#_w(p_{s_1}\otimes p_{g_1}) \odot \#_w(p_{f_1}\otimes p_{d_1})$
\item[-] $\e_{12}= \#_w(p_{c_1}\otimes p_{e_2}) \odot \#_w(p_{s_1}\otimes p_{g_2}) \odot \#_w(p_{f_1}\otimes p_{d_2})$
\item[-] $\e_{13}= \#_w(p_{c_1}\otimes p_{e_3})\odot \#_w(p_{s_1}\otimes p_{g_3}) \odot \#_w(p_{f_1}\otimes p_{d_3})$
\item[-] $\e_{21}= \#_w(p_{c_2}\otimes p_{e_1}) \odot \#_w(p_{s_2}\otimes p_{g_1}) \odot \#_w(p_{f_2}\otimes p_{d_1})$
\item[-] $\e_{22}= \#_w(p_{c_2}\otimes p_{e_2})\odot \#_w(p_{s_2}\otimes p_{g_2}) \odot \#_w(p_{f_2}\otimes p_{d_2})$
\item[-] $\e_{23}= \#_w(p_{c_2}\otimes p_{e_3})\odot \#_w(p_{s_2}\otimes p_{g_3}) \odot \#_w(p_{f_2}\otimes p_{d_3})$
\end{itemize}
\noindent describe the cost of the connections of the two topics with each of the three subscribers.

\noindent Each of $\e_1$ and $\e_2$ capture the cost for implementing the  interactions of topics $wB(3)$ and $wB(4)$, respectively, with some of the publisher
and subscriber components.  Then, $\e_1\oplus \e_2\oplus (\e_1\varpi \e_2)$ expresses the cost for the participation of some of the topics
in the architecture, and in turn the use of the weight iteration operator in $\e$ allows to encode the overall cost of applying these interactions with recursion. Consider for instance the words $w_1,w_2 \in I^{+}_{\B}$ encoding two interactions in the given weighted Publish/Subscribe architecture. Then, in
the Viterbi semiring the value $\max \lbrace \left \Vert \e \right \Vert(w_1),\left \Vert \e \right \Vert(w_2)\rbrace$ shows the sequence of interactions executed with the maximum probability. 

\definecolor{darkorange}{rgb}{1.0, 0.55, 0.0}
\definecolor{harlequin}{rgb}{0.25, 1.0, 0.0}
\definecolor{ao}{rgb}{0.0, 0.0, 1.0}

\begin{figure}[h]
\centering
\resizebox{0.7\linewidth}{!}{
\begin{tikzpicture}[>=stealth',shorten >=1pt,auto,node distance=1cm,baseline=(current bounding box.north)]
\tikzstyle{component}=[rectangle,ultra thin,draw=black!75,align=center,inner sep=9pt,minimum size=1.5cm,minimum height=3.9cm,minimum width=3.3cm]
\tikzstyle{port}=[rectangle,ultra thin,draw=black!75,minimum size=7mm]
\tikzstyle{bubble} = [fill,shape=circle,minimum size=5pt,inner sep=0pt]
\tikzstyle{type} = [draw=none,fill=none]

\node [component,align=center] (a1)  {};
\node [port] (a2) [above right=-1.66 and -0.78cm of a1]  {$p_{a_1}$};
\node[bubble] (a3) [below right=-0.40 and -0.10cm of a2]   {};

\node []  (k1)  [left=0.5cm of a3]  {$k_{a_1}$}; 

\node [port] (a4) [below right=0.86cm and -0.73cm of a2]  {$p_{t_1}$};
\node[bubble] (a5) [below right=-0.37 and -0.10cm of a4]   {};

\node []  (k2)  [left=0.5cm of a5]  {$k_{t_1}$};

\node[type] (a6) [above=-0.45cm of a1]{Publ. $1$};
\node[type]  [below=-0.05cm of a6]{$wB(1)$};

\node [component] (b1) [right=3cm of a1] {};
\node [port] (b2) [below right=-1.07cm and -0.77cm of b1]  {$p_{f_1}$}; 
\node[bubble] (b3) [below right=-0.40cm and -0.10cm of b2]   {};   

\node []  (k3)  [left=0.5cm of b3]  {$k_{f_1}$}; 

\node [port] (b4) [above right=0.25cm and -0.75cm of b2]  {$p_{s_1}$}; 
\node[bubble] (b5) [above right=-0.40cm and -0.10cm of b4]   {};   

\node []  (k4)  [left=0.5cm of b5]  {$k_{s_1}$}; 

\node [port] (b6) [above =0.3cm of b4]  {$p_{c_1}$}; 
\node[bubble] (b7) [above right=-0.40cm and -0.10cm of b6]   {};   

\node []  (k5)  [left=0.5cm of b7]  {$k_{c_1}$}; 

\node [port] (b8) [above left= -1.7cm and -0.79cm of b1]  {$p_{n_1}$};
\node[bubble] (b9) [above left=-0.4cm and -0.10cm of b8]   {};

\node []  (k6)  [right=0.6cm of b9]  {$k_{n_1}$}; 

\node [port] (b10) [below left=0.85 and -0.76cm of b8]  {$p_{r_1}$};
\node[bubble] (b11) [below left=-0.40 and -0.10cm of b10]   {};
 
 \node []  (k7)  [right=0.6cm of b11]  {$k_{r_1}$}; 
 
\node[type] (b12) [above left=-0.55cm and -2.4cm of b1]{Topic $1$};
\node[type]  [below=-0.1cm of b12]{$wB(3)$};

\node [component] (c1) [above right= -1.6cm and 3cm of b1] {};
\node [port] (c2) [below left=-1.07cm and -0.77cm of c1]  {$p_{d_1}$}; 
\node[bubble] (c3) [below left=-0.40cm and -0.10cm of c2]   {};  

 \node []  (k8)  [right=0.6cm of c3]  {$k_{d_1}$}; 

\node [port] (c4) [above=0.25cm and -0.75cm of c2]  {$p_{g_1}$}; 
\node[bubble] (c5) [above left=-0.40cm and -0.10cm of c4]   {};   

 \node []  (k9)  [right=0.6cm of c5]  {$k_{g_1}$}; 

\node [port] (c6) [above=0.3cm of c4]  {$p_{e_1}$}; 
\node[bubble] (c7) [above left=-0.40cm and -0.10cm of c6]   {};  

 \node []  (k10)  [right=0.6cm of c7]  {$k_{e_1}$}; 
 
\node[type] (c8) [above right=-0.55cm and -2.1cm of c1]{Subs. $1$};
\node[type]  [below=-0.1cm of c8]{$wB(5)$};

\node [component] (d1) [below right=-0.8cm and 3cm of b1] {};
\node [port] (d2) [below left=-1.07cm and -0.77cm of d1]  {$p_{d_2}$}; 
\node[bubble] (d3) [below left=-0.40cm and -0.10cm of d2]   {};  

 \node []  (k11)  [right=0.6cm of d3]  {$k_{d_2}$}; 

\node [port] (d4) [above=0.25cm and -0.75cm of d2]  {$p_{g_2}$}; 
\node[bubble] (d5) [above left=-0.40cm and -0.10cm of d4]   {};   

\node []  (k12)  [right=0.6cm of d5]  {$k_{g_2}$}; 

\node [port] (d6) [above=0.3cm of d4]  {$p_{e_2}$}; 
\node[bubble] (d7) [above left=-0.40cm and -0.10cm of d6]   {};   

 \node []  (k13)  [right=0.6cm of d7]  {$k_{e_2}$}; 

\node[type] (d8) [above right=-0.55cm and -2.1cm of d1]{Subs. $2$};
\node[type]  [below=-0.1cm of d8]{$wB(6)$};


\node [component,align=center] (e1) [below=4.5cm of a1] {};
\node [port] (e2) [above right=-1.66 and -0.78cm of e1]  {$p_{a_2}$};
\node[bubble] (e3) [below right=-0.40 and -0.10cm of e2]   {};

 \node []  (k14)  [left=0.55cm of e3]  {$k_{a_2}$}; 

\node [port] (e4) [below right=0.86cm and -0.73cm of e2]  {$p_{t_2}$};
\node[bubble] (e5) [below right=-0.37 and -0.10cm of e4]   {};

 \node []  (k15)  [left=0.5cm of e5]  {$k_{t_2}$}; 

\node[type] (e6) [above=-0.45cm of e1]{Publ. $2$};
\node[type]  [below=-0.05cm of e6]{$wB(2)$};

2nd topic

\node [component] (f1) [right=3cm of e1] {};
\node [port] (f2) [below right=-1.07cm and -0.77cm of f1]  {$p_{f_2}$}; 
\node[bubble] (f3) [below right=-0.40cm and -0.10cm of f2]   {};   

 \node []  (k16)  [left=0.5cm of f3]  {$k_{f_2}$}; 

\node [port] (f4) [above right=0.25cm and -0.75cm of f2]  {$p_{s_2}$}; 
\node[bubble] (f5) [above right=-0.40cm and -0.10cm of f4]   {};   

 \node []  (k17)  [left=0.5cm of f5]  {$k_{s_2}$}; 

\node [port] (f6) [above =0.3cm of f4]  {$p_{c_2}$}; 
\node[bubble] (f7) [above right=-0.40cm and -0.10cm of f6]   {};   

 \node []  (k18)  [left=0.5cm of f7]  {$k_{c_2}$}; 

\node [port] (f8) [above left= -1.7cm and -0.79cm of f1]  {$p_{n_2}$};
\node[bubble] (f9) [above left=-0.4cm and -0.10cm of f8]   {};

 \node []  (k19)  [right=0.6cm of f9]  {$k_{n_2}$}; 

\node [port] (f10) [below left= 0.85cm and -0.76cm of f8]  {$p_{r_2}$};
\node[bubble] (f11) [below left=-0.40 and -0.10cm of f10]   {};
 
  \node []  (k20)  [right=0.6cm of f11]  {$k_{r_2}$}; 
 
\node[type] (f12) [above left=-0.55cm and -2.4cm of f1]{Topic $2$};
\node[type]  [below=-0.1cm of f12]{$wB(4)$};

\node [component] (g1) [right= 3cm of f1] {};
\node [port] (g2) [below left=-1.07cm and -0.77cm of g1]  {$p_{d_3}$}; 
\node[bubble] (g3) [below left=-0.40cm and -0.10cm of g2]   {};  

 \node []  (k21)  [right=0.6cm of g3]  {$k_{d_3}$}; 

\node [port] (g4) [above left=0.25cm and -0.75cm of g2]  {$p_{g_3}$}; 
\node[bubble] (g5) [above left=-0.40cm and -0.10cm of g4]   {};   

 \node []  (k22)  [right=0.6cm of g5]  {$k_{g_3}$}; 

\node [port] (g6) [above=0.3cm of g4]  {$p_{e_3}$}; 
\node[bubble] (g7) [above left=-0.40cm and -0.10cm of g6]   {};  

 \node []  (k23)  [right=0.6cm of g7]  {$k_{e_3}$}; 
 
\node[type] (g8) [above right=-0.55cm and -2.1cm of g1]{Subs. $3$};
\node[type]  [below=-0.1cm of g8]{$wB(7)$};

\path[-]          (a3)  edge                  node {} (b9);
\path[-]          (b9)  edge                  node {} (a3);

\path[-]          (a5)  edge                  node {} (b11);
\path[-]          (b11)  edge                  node {} (a5);

\path[-]          (b7)  edge  [darkorange]                node {} (c7);
\path[-]          (c7)  edge  [darkorange]                 node {} (b7);

\path[-]          (b7)  edge  [darkorange]                 node {} (g7);
\path[-]          (g7)  edge  [darkorange]                 node {} (b7);

\path[-]          (b5)  edge   [darkorange]                node {} (c5);
\path[-]          (c5)  edge  [darkorange]                 node {} (b5);

\path[-]          (b5)  edge  [darkorange]                 node {} (g5);
\path[-]          (g5)  edge  [darkorange]                 node {} (b5);

\path[-]          (b3)  edge  [darkorange]                 node {} (c3);
\path[-]          (c3)  edge [darkorange]                  node {} (b3);

\path[-]          (b3)  edge   [darkorange]                node {} (g3);
\path[-]          (g3)  edge   [darkorange]                node {} (b3);


\path[-]          (e3)  edge                  node {} (f9);
\path[-]          (f9)  edge                  node {} (e3);

\path[-]          (e5)  edge                  node {} (f11);
\path[-]          (f11)  edge                  node {} (e5);

\path[-]          (a3)  edge                  node {} (f9);
\path[-]          (f9)  edge                  node {} (a3);

\path[-]          (a5)  edge                  node {} (f11);
\path[-]          (f11)  edge                  node {} (a5);


\path[-]          (f7)  edge  [darkorange]                 node {} (d7);
\path[-]          (d7)  edge   [darkorange]                node {} (f7);

\path[-]          (f5)  edge   [darkorange]                 node {} (d5);
\path[-]          (d5)  edge   [darkorange]                 node {} (f5);

\path[-]          (f3)  edge  [darkorange]                  node {} (d3);
\path[-]          (d3)  edge  [darkorange]                  node {} (f3);

\end{tikzpicture}}
\caption{A possible execution of the interactions in a weighted Publish/Subscribe architecture.}
\label{w-b-p-s}
\end{figure}
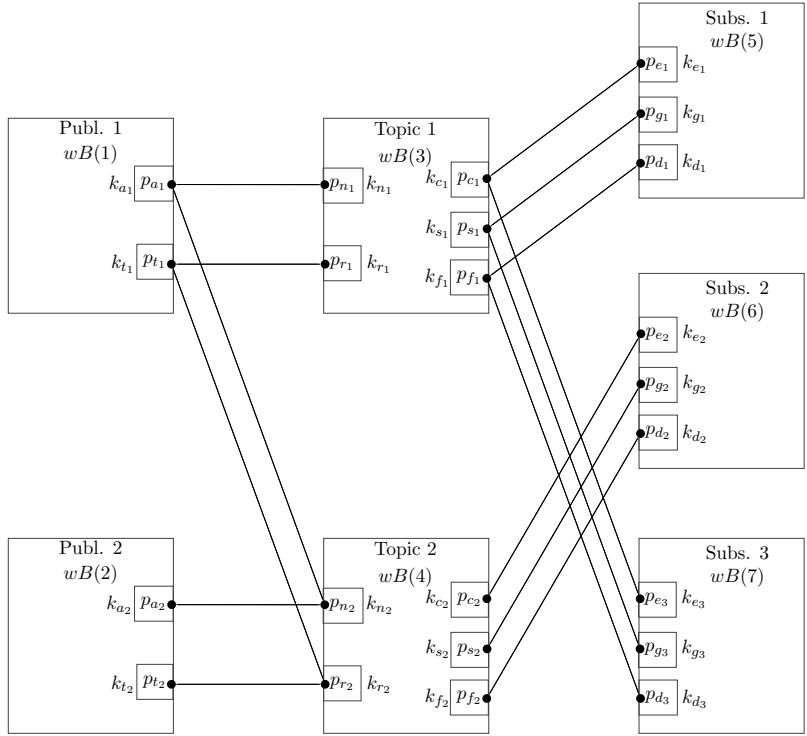
\end{exa}

\subsection{Parametric weighted component-based systems}

In this subsection we deal with the parametric extension of weighted component-based systems, i.e., with systems comprised of a finite number of \emph{weighted component types} where the cardinality 
of the \emph{instances} of each type is a parameter for the system. 

Let $w\B=\{wB(i) \mid i \in [n]  \}$ be a set of weighted component types. For every $i \in [n] $ and $j \geq 1$ we consider a weighted component   $wB(i,j)=(B(i,j), wt(i))$, where $B(i,j)=(Q(i,j),P(i,j),q_{0}(i,j),R(i,j))$ is the $j$-th instance of $B(i)$, and it is called a \emph{parametric weighted component} or a \emph{weighted component instance}. We set $pw\mathcal{B}= \{wB(i,j) \mid i \in [n],  j \geq 1 \} $ and call it a set of \emph{parametric weighted components}. We impose on $pw\B$ the same  assumptions as for $p\B$. Abusing notations, we denote by $wt(i)$, $i \in [n]$, the weight mapping of $wB(i,j)$, $j\geq 1$, meaning that it assigns the value $wt(i)(p)$ to every port $p(j)\in P(i,j)$. 

Now we can introduce the  weighted first-order extended interaction logic as a modelling language for describing the weight of the interactions in parametric weighted component-based systems.

As in FOEIL we equip wEPIL formulas with variables. Let $pw\mathcal{B}= \{wB(i,j) \mid i \in [n], j \geq 1 \} $ be a set of parametric weighted components. We consider pairwise disjoint countable sets of first-order variables  $\mathcal{X}^{(1)}, \ldots , \mathcal{X}^{(n)}$ referring to  instances of weighted component types $wB(1), \ldots, wB(n)$, respectively. 

\begin{defi} \label{def_wfOEIL}
Let $pw\mathcal{B}= \{wB(i,j) \mid i \in [n], j \geq 1 \} $ be a set of parametric weighted components. Then the syntax of weighted first-order extended interaction logic (wFOEIL for short) formulas $\ps$ over $pw\mathcal{B}$ and $K$ is given by the grammar
\begin{multline*}
\ps  ::= k \mid \psi   \mid  \ps \oplus \ps \mid \ps \otimes \ps \mid   \ps \odot \ps   \mid \ps\varpi \ps \mid \ps^{\oplus}\mid  {\textstyle\sum x^{(i)}}. \ps \mid  {\textstyle\prod x^{(i)}} . \ps \mid  \\ {\textstyle\sum\nolimits{^{\odot}}} x^{(i)}. \ps \mid {\textstyle\prod\nolimits{^{\odot}}x^{(i)}}.\ps \mid {\textstyle\sum\nolimits{^{\varpi}} x^{(i)}}.\ps \mid {\textstyle\prod\nolimits{^{\varpi}}x^{(i)}}.\ps  
\end{multline*}
where $k \in K$, $\psi$ is a FOEIL formula over $p\B$,  $x^{(i)}, y^{(i)}$ are first-order variables in $\mathcal{X}^{(i)}$, ${\textstyle\sum}$ (resp. ${\textstyle\prod}$) denotes the weighted existential (resp. universal) quantifier, ${\textstyle\sum\nolimits{^{\odot}}}$ (resp. $ {\textstyle\prod\nolimits{^{\odot}}}$) denotes the weighted existential (resp. universal) concatenation quantifier, and ${\textstyle\sum\nolimits{^{\varpi}}}$ (resp. ${\textstyle\prod\nolimits{^{\varpi}}}$) the weighted existential (resp. universal) shuffle quantifier. Furthermore, we assume that when $\ps$ contains a subformula of the form $ {\textstyle\sum\nolimits{^{\odot}}} x^{(i)}. \ps'$ or $ {\textstyle\sum\nolimits{^{\varpi}} x^{(i)}}.\ps'$, and $\ps'$ contains a FOEIL formula $\psi$, then the application of negation in $\psi$ is permitted only in PIL formulas, and formulas of the form $x^{(j)}=y^{(j)}$.
\end{defi}

Let $\ps$ be a wFOEIL formula over $pw\B$ and $r:[n] \rightarrow \mathbb{N}$ a mapping. As for (unweighted) parametric systems the value $r(i)$, for every $i \in [n]$, intends to represent the finite number of instances of the weighted component type $wB(i)$ in the parametric system. 
We let $pw\B(r)= \{wB(i,j) \mid i \in [n], j \in [r(i)] \} $ and call it  the \emph{instantiation of} $pw\B$ w.r.t. $r$. The set of ports and the set of interactions of $pw\B(r)$ are the same as the corresponding ones in $p\B(r)$, hence we use for simplicity
the same symbols $P_{p\B(r)}$ and $I_{p\B(r)}$, respectively.

We interpret wFOEIL formulas $\ps$ as series $\Vert \ps  \Vert$ over triples consisting of a mapping $r:[n] \rightarrow \mathbb{N}$, a $(\mathcal{V},r)$-assignment $\sigma$, and a word $w \in I_{p\B(r)}^+$, and $K$. Intuitively, the use of weighted existential and universal concatenation (resp. shuffle) quantifiers ${\textstyle\sum\nolimits{^{\odot}}} x^{(i)}. \ps$ and ${\textstyle\prod\nolimits{^{\odot}}x^{(i)}}.\ps$ (resp. ${\textstyle\sum\nolimits{^{\varpi}} x^{(i)}}.\ps$ and ${\textstyle\prod\nolimits{^{\varpi}}x^{(i)}}.\ps $) serves to compute the weight of the partial and whole participation of the weighted component instances, determined by the application of the assignment $\sigma$ to $x^{(i)}$, in sequential (resp. interleaving) interactions.

\begin{defi}
Let $\ps$ be a wFOEIL formula over a set $pw\mathcal{B}= \{wB(i,j) \mid i \in [n], j \geq 1 \} $ of parametric weighted components and $K$, and $\mathcal{V} \subseteq \mathcal{X}$ a finite set containing $\mathrm{free}(\psi)$. Then for every $r:[n] \rightarrow \mathbb{N}$, $(\mathcal{V},r)$-assignment $\sigma$, and $w  \in I_{p\B(r)}^+$ we define the value   $\Vert\ps\Vert(r,\sigma,w)$, inductively on the structure of $\ps$ as follows:

\begin{itemize}

     \item[-] $\Vert k \Vert(r,\sigma,w) = k$, 
     
     \item[-] $\Vert\psi\Vert(r,\sigma,w)\left\{
\begin{array}
[c]{rl}%
1 & \textnormal{ if }(r, \sigma,w) \models \psi\\
0 & \textnormal{ otherwise}%
\end{array}
,\right.  $

     \item[-]  $\Vert \ps_1 \oplus \ps_2 \Vert(r,\sigma, w) =\Vert \ps_1 \Vert(r,\sigma, w) + \Vert  \ps_2 \Vert(r,\sigma, w)$, 
 
\item[-]  $\Vert \ps_1 \otimes \ps_2 \Vert(r,\sigma, w) =\Vert \ps_1 \Vert(r,\sigma, w) \cdot \Vert  \ps_2 \Vert(r,\sigma, w)$,

 \item[-]  $\Vert \ps_1 \odot \ps_2 \Vert(r,\sigma, w) =\sum\nolimits_{w=w_1w_2}(\Vert\ps_1\Vert(r,\sigma,w_1) \cdot \Vert\ps_2\Vert(r,\sigma, w_2)) $,

\item[-]  $\Vert \ps_1 \varpi \ps_2 \Vert(r,\sigma, w) =\sum\nolimits_{w\in w_1\shuffle w_2}(\Vert\ps_1\Vert(r,\sigma,w_1) \cdot \Vert\ps_2\Vert(r, \sigma,w_2))$,

\item[-]  $\Vert \ps^{\oplus} \Vert(r,\sigma, w) =\sum\nolimits_{\nu \geq 1}\Vert\ps\Vert^{\nu}(r,\sigma,w) $,

     \item[-]   $\left\Vert {\textstyle\sum x^{(i)}}. \ps \right \Vert(r,\sigma, w) = \sum\limits_{j \in [r(i)]}  \Vert \ps   \Vert (r, \sigma[x^{(i)}\rightarrow j],w)$,

     \item[-]   $\left\Vert {\textstyle\prod x^{(i)}}. \ps \right \Vert(r,\sigma, w) = \prod\limits_{j \in [r(i)]}  \Vert \ps   \Vert (r, \sigma[x^{(i)}\rightarrow j],w)$,

 \item[-]   $\left\Vert{\textstyle\sum\nolimits{^{\odot}}} x^{(i)}. \ps\right\Vert(r,\sigma,w) = \sum\limits_{1 \leq t \leq r(i)}\sum\limits_{1 \leq l_1 < \ldots <l_t \leq r(i)}\sum\limits_{w=w_{l_1}\ldots w_{l_t}}\prod\limits_{j=l_1, \ldots,l_t}  \Vert \ps   \Vert (r, \sigma[x^{(i)}\rightarrow j],w_j)$,

 \item[-]   $\left\Vert{\textstyle\prod\nolimits{^{\odot}}x^{(i)}}. \ps\right\Vert(r,\sigma,w) = \sum\limits_{w=w_1\ldots w_{r(i)}}\prod\limits_{1\leq j \leq r(i)}  \Vert \ps   \Vert (r, \sigma[x^{(i)}\rightarrow j],w_j)$,

\item[-]   $\left\Vert{\textstyle\sum\nolimits{^{\varpi}}} x^{(i)}. \ps\right\Vert(r,\sigma,w) = \sum\limits_{1 \leq t \leq r(i)}\sum\limits_{1 \leq l_1 < \ldots <l_t \leq r(i)}\sum\limits_{w\in w_{l_1} \shuffle \ldots \shuffle w_{l_t}}\prod\limits_{j=l_1, \ldots,l_t}  \Vert \ps   \Vert (r, \sigma[x^{(i)}\rightarrow j],w_j)$,

\item[-]   $\left\Vert{\textstyle\prod\nolimits{^{\varpi}}} x^{(i)}. \ps\right\Vert(r,\sigma,w) = \sum\limits_{w\in w_1 \shuffle \ldots \shuffle w_{r(i)}}\prod\limits_{1 \leq j \leq r(i)}  \Vert \ps   \Vert (r, \sigma[x^{(i)}\rightarrow j],w_j)$.    
\end{itemize}
\end{defi}

\noindent Similarly to the unweighted setup, all instances of each weighted component type in parametric weighted component-based systems are identical. Hence the order specified above in the semantics of  $\textstyle\sum\nolimits{^{\odot}},\textstyle\prod\nolimits{^{\odot}}, {\textstyle\sum\nolimits{^{\varpi}}},\textstyle\prod\nolimits{^{\varpi}}$ weighted quantifiers causes no restriction in the resulting architecture.

 If $\ps$ is a wFOEIL sentence over $pw\B$ and $K$, then we simply write $\Vert\ps\Vert(r, w)$. Let also $\ps'$ be a wFOEIL sentence over $pw\mathcal{B}$ and $K$. Then,  $\ps$ and $\ps'$ are called \emph{equivalent w.r.t.} $r$ whenever $\Vert\ps\Vert(r, w)=\Vert\ps'\Vert(r, w)$, for every $w  \in I_{p\B(r)}^+$.

Now we are ready to formally define the concept of a parametric weighted component-based system.

\begin{defi}
A parametric weighted component-based system over $K$ is a pair $(pw\B, \ps)$ where $pw\B=\{wB(i,j) \mid i\in [n], j \geq 1 \}$ is a set of parametric weighted components and $\ps$ is a wFOEIL sentence over $pw\B$ and $K$.  
\end{defi}

In the sequel, for simplicity we refer to parametric weighted component-based systems by parametric weighted systems, and we often omit the term ``weighted" when we refer to instances.

For our examples in the sequel, we need the following macro wFOEIL formula. Let $pw\B=\{wB(i,j) \mid i\in [n], j \geq 1\}$ and $1 \leq i_1, \ldots, i_m \leq n$ be pairwise different indices. Let $p_{i_1} \in P(i_1), \ldots, p_{i_m} \in P(i_m)$ and $k_{i_1}, \ldots, k_{i_m}$ denote the weights in $K$ assigned to $p_{i_1}, \ldots, p_{i_m}$, respectively, i.e., $k_{i_1} =wt(i_1)(p_{i_1}), \ldots,  k_{i_m}=wt(i_m)(p_{i_m})$. We set
\begin{multline*}
\#_w\big(p_{i_1}(x^{(i_1)}) \otimes \ldots \otimes p_{i_m}(x^{(i_m)})\big)::= \big ((k_{i_1}\otimes p_{i_1}(x^{(i_1)})) \otimes  \ldots \otimes (k_{i_m}\otimes p_{i_m}(x^{(i_m)}))\big) \otimes \\ 
\bigg(\bigg(\bigwedge_{j =i_1, \ldots, i_m } \bigwedge_{p \in P(j)\setminus \{ p_j\}} \neg p(x^{(j)}) \bigg) \wedge 
\bigg(\bigwedge_{j=i_1, \ldots, i_m} \forall y^{(j)}(y^{(j)} \neq x^{(j)}).\bigwedge_{p \in P(j)} \neg p(y^{(j)}) \bigg ) \wedge \\
 \bigg( \bigwedge_{\tau \in [n]\setminus \{i_1, \ldots, i_m\} }\bigwedge_{p \in P(\tau)}\forall x^{(\tau)} .   \neg p(x^{(\tau)}) \bigg )\bigg) .
\end{multline*}

The weighted conjunctions in the right-hand side of the first line, in the above formula,   express that the ports appearing in the argument of $\#_w$ participate in the interaction with their corresponding weights. In the second line, the   
  double indexed conjunctions in the first pair of big parentheses disable all the other ports of the participating instances of weighted components of type $i_1, \ldots, i_m$ described by variables $x^{(i_1)}, \ldots, x^{(i_m)}$, respectively; conjunctions in the second pair of parentheses disable all ports of remaining instances of weighted component types $i_1, \ldots, i_m$. Finally, the last conjunct in the third line ensures that no ports in instances of remaining weighted component types
participate in the interaction. Then we get 
\begin{multline*}
\#_w\big(p_{i_1}(x^{(i_1)}) \otimes \ldots \otimes p_{i_m}(x^{(i_m)})\big)\equiv \big (k_{i_1}\otimes \ldots \otimes k_{i_m}\big) \otimes \bigg(\big(p_{i_1}(x^{(i_1)}) \wedge  \ldots \wedge p_{i_m}(x^{(i_m)})\big) \\ \wedge 
\bigg(\bigwedge_{j =i_1, \ldots, i_m } \bigwedge_{p \in P(j)\setminus \{ p_j\}} \neg p(x^{(j)}) \bigg) \wedge 
\bigg(\bigwedge_{j=i_1, \ldots, i_m} \forall y^{(j)}(y^{(j)} \neq x^{(j)}).\bigwedge_{p \in P(j)} \neg p(y^{(j)}) \bigg ) \wedge \\
 \bigg( \bigwedge_{\tau \in [n]\setminus \{i_1, \ldots, i_m\} }\bigwedge_{p \in P(\tau)}\forall x^{(\tau)} .   \neg p(x^{(\tau)}) \bigg )\bigg) .
\end{multline*} 

Next we present three examples of wFOEIL sentences describing concrete parametric architectures with ordered interactions in the weighted setup. We note that as for the examples of FOEIL sentences,
 whenever is defined a unique instance for a weighted component type we may also consider the corresponding set of variables as a singleton.

\begin{exa}
\label{wblackboard}
\textbf{(Parametric weighted Blackboard)} We consider weighted Blackboard architecture, described in Example \ref{ex_b_wblackboard}, in the parametric setting. Therefore, blackboard's instance interacts with controller's instance and all the source instances for presenting the current state of the problem. Then, some of the source instances are triggered and logged in the controller in arbitrary order, and in turn these triggered source instances add the new information on blackboard through the controller, again in any order. Hence we need to describe the overall weight for any possible scenario of the aforementioned interactions among the several instances. 
We consider three sets of variables, namely $\mathcal{X}^{(1)}, \mathcal{X}^{(2)}, \mathcal{X}^{(3)}$ for the instances of blackboard, controller, and knowledge source weighted components, respectively. Therefore,  the wFOEIL sentence $\ps$ that encodes the cost of the interactions of parametric weighted Blackboard architecture is
\begin{multline*}
\ps=\bigg({\textstyle\sum x^{(1)}} {\textstyle\sum x^{(2)}}.\bigg( \#_w (p_d(x^{(1)})\otimes p_r(x^{(2)})) \odot \bigg({\textstyle\prod^{\varpi} x^{(3)}}.\#_w (p_d(x^{(1)})\otimes p_n(x^{(3)}))\bigg) \odot \\ \bigg({\textstyle\sum^{\varpi} y^{(3)}}. \big(\#_w(p_l(x^{(2)})\otimes p_t(y^{(3)})) \odot  \#_w (p_e(x^{(2)})\otimes p_w(y^{(3)})\otimes p_a(x^{(1)}))\big) \bigg)^{\oplus}\bigg)\bigg)^{\oplus}.
\end{multline*}
The weighted subformula $\#_w (p_d(x^{(1)})\otimes p_r(x^{(2)}))$ with the associated quantifiers ${\textstyle\sum x^{(1)}}, {\textstyle\sum x^{(2)}}$  expresses the weight for the connection of blackboard with the controller in order for the latter to be informed for the available information. The subformula with the weighted  universal shuffle quantifier ${\textstyle\prod^{\varpi} x^{(3)}}$ captures the weight for the respective connections of blackboard with each of the source instances, in arbitrary order. Then, the last subformula with the weighted existential shuffle quantifier ${\textstyle\sum^{\varpi} y^{(3)}}$ encodes the weight for the triggering process of some source instances, and in turn for updating the information in blackboard through the controller. These interactions may be applied in any order for some instances of the source weighted component, which justifies the use of ${\textstyle\sum^{\varpi}}$ for the variables $y^{(3)}$. The application of the weighted iteration operators allows computing the cost of the repetition of the respective interactions in the architecture. An instantiation of the parametric weighted architecture for three sources is presented in Figure \ref{wb_blackboard} of Example \ref{ex_b_wblackboard}.
\end{exa}

\begin{exa}
\label{w-re-re}
\textbf{(Parametric weighted Request/Response)} Next we consider the weighted Request/Response architecture, described in Example \ref{w-b-r-r}, in the parametric setting. 
We recall that all the service instances interact with registry through interleaving for their enrollment, and in turn all the client instances interact with registry in any order for taking the address of the service instances. Then, for some of the service instances there exist some client instances that interact with each of them through their associated coordinator instance. The interactions of distinct client instances take place sequentially for the same service instance and with interleaving, otherwise. Next we present a wFOEIL sentence $\ps$ encoding the weight of any possible combination of the permissible interactions in parametric weighted Request/Response architecture described above.
We consider the variable sets $\mathcal{X}^{(1)},\mathcal{X}^{(2)},\mathcal{X}^{(3)}$, and $\mathcal{X}^{(4)}$  referring to weighted component instances of service registry, service, client, and coordinator component, respectively. Therefore,
\begin{multline*}
\ps= \bigg( {\textstyle\sum x^{(1)}}.\bigg(\big({\textstyle\prod^{\varpi} x^{(2)}}. \#_w (p_e(x^{(1)})\otimes p_r(x^{(2)}))\big) \odot \\ \big({\textstyle\prod^{\varpi} x^{(3)}}. (\#_w (p_l(x^{(3)})\otimes p_u(x^{(1)})) \odot \#_w(p_o(x^{(3)})\otimes p_t(x^{(1)}))) \big )\bigg)\bigg) \odot \\ \bigg({\textstyle\sum^{\varpi} y^{(2)}}{\textstyle\sum x^{(4)}}{\textstyle\sum^{\odot}y^{(3)}}.\x \otimes \bigg(\forall y^{(4)}\forall z^{(3)}\forall z^{(2)}.\big( \theta \vee \big(\forall t^{(3)}\forall t^{(2)}(z^{(2)}\neq t^{(2)}).\theta'\big)\big)\bigg)\bigg)^{\oplus}
\end{multline*}
\noindent where the wEPIL formula $\x$ is  given by:

\

$ \x= \#_w (p_n(y^{(3)})\otimes p_m(x^{(4)})) \odot \#_w (p_q(y^{(3)})\otimes p_a(x^{(4)})\otimes p_g (y^{(2)})) \odot  \\ \text{\qquad \qquad \qquad \qquad \qquad \qquad \qquad \qquad \qquad\qquad \qquad } \#_w (p_c(y^{(3)})  \otimes p_d(x^{(4)}) \otimes p_s(y^{(2)}))$,

\

\noindent and the EPIL formulas $\theta$ and $\theta'$ are given respectively, by:

$\theta=\neg ( (p_q(z^{(3)})\wedge p_a(y^{(4)})\wedge p_g (z^{(2)}))\shuffle \mathrm{true})$,

\

\noindent and

\

$\theta' =( \# (p_q(z^{(3)})\wedge p_a(y^{(4)})\wedge p_g (z^{(2)}))\shuffle\mathrm{true}) \wedge \\
 \text{ \qquad \qquad \qquad \qquad \qquad \qquad \qquad \ \ \ \ } \neg(  (p_q(t^{(3)})\wedge p_a(y^{(4)})\wedge p_g (t^{(2)}))\shuffle\mathrm{true}).$

\

\noindent The wFOEIL subformula at the first line of $\ps$ with the weighted universal shuffle quantifier $ {\textstyle\prod^{\varpi} x^{(2)}}$ encodes the weight of the interleaving interactions between registry instance ($ {\textstyle\sum x^{(1)}}$) and each of the service instances for their enrollment. The wFOEIL subformula in the second line of $\ps$ captures the weight of the interactions between registry and each of the client instances in order for the latter to connect and take the services' address from registry. These interactions take place in arbitrary order for each of the distinct client instances, hence their weight should be computed accordingly, which is ensured by the use of the weighted universal shuffle quantifier $ {\textstyle\prod^{\varpi} x^{(3)}}$. The wFOEIL  subformula in the third line of $\ps$ expresses the weight of the connections among client and service instances through their coordinator instance, applying the quantifiers ${\textstyle\sum^{\varpi} y^{(2)}}$,${\textstyle\sum  x^{(4)}}$, and ${\textstyle\sum^{\odot} y^{(3)}}$. The use of quantifier ${\textstyle\sum^{\odot} y^{(3)}}$ is justified by the fact that only one client instance should interact with each service instance, and hence the respective weight is computed analogously. On the other hand, for different instances of services interleaving among several client instances is permitted, and hence we derive the respective weight by the wFOEIL subformula quantified by ${\textstyle\sum^{\varpi} y^{(2)}}$. The EPIL subformula $\forall y^{(4)}\forall z^{(3)} \forall z^{(2)} .\big( \theta \vee \big(\forall t^{(3)}\forall t^{(2)}(z^{(2)}\neq t^{(2)}).\theta'\big)\big)$ in $\ps$ serves as a constraint to ensure that a unique coordinator instance is assigned to each service instance. Finally, the weighted iteration operator returns the cost of implementing the corresponding interactions with recursion. An instantiation of the parametric weighted architecture for two clients and services is presented in Figure \ref{w_b_re_re} of Example \ref{w-b-r-r}.   
\end{exa}

\begin{exa}\textbf{(Parametric weighted Publish/Subscribe)}
\label{w-pu-su}
We consider weighted Publish/Subscribe architecture, described in Example \ref{w-b-pu-su}, in the parametric setting. We have that for some of the topic instances there are some publisher instances that advertise and transmit their messages. In turn, the same topic instances perform three consecutive interactions with some of the subscriber instances, in order for the latter to express their interest in some messages, receive the requested messages, and disconnect from the topic instances. In each case the interactions among the distinct instances are executed with interleaving. In the subsequent wFOEIL sentence $\ps$ we encode the weight of all the possible cases for the aforementioned permissible interactions. We let $\mathcal{X}^{(1)},\mathcal{X}^{(2)},\mathcal{X}^{(3)}$ denote the variable sets that correspond to publisher, topic, and subscriber weighted component instances, respectively.
\begin{multline*}
\ps= \Bigg({\textstyle\sum^{\varpi} x^{(2)}}. \Bigg( \bigg( {\textstyle\sum^{\varpi} x^{(1)}}.\big(\#_w (p_a(x^{(1)})\otimes p_n (x^{(2)})) \odot \#_w (p_t(x^{(1)})\otimes p_r(x^{(2)}))\big) \bigg)  
\odot \\ \bigg({\textstyle\sum^{\varpi} x^{(3)}}. \big( \#_w (p_e(x^{(3)})\otimes p_c(x^{(2)})) \odot \#_w(p_g(x^{(3)})\otimes p_s (x^{(2)})) \odot  \#_w(p_d(x^{(3)})\otimes p_f (x^{(2)}))\big)\bigg)\Bigg)\Bigg)^{\oplus}.
\end{multline*}
\end{exa}

We interpret the wFOEIL sentence $\ps$ as follows. The wFOEIL subformula in the first line encodes the weight for the interactions among some of the topic instances (weighted existential shuffle quantifier ${\textstyle\sum^{\varpi} x^{(2)}}$) and some publisher instances (weighted existential shuffle quantifier ${\textstyle\sum^{\varpi} x^{(1)}}$) in order for the latter to advertise and in turn transmit their messages to the former. These interactions may take place in any order for the distinct instances, hence the associated weight needs to be computed accordingly. Then, in the second line the subformula with the weighted existential shuffle quantifier ${\textstyle\sum^{\varpi} x^{(3)}}$ captures the weight of three sequential interactions between each of the topic instances, specified in the first line, and some of the subscriber instances. Since there is no restriction for the execution order of these interactions with respect to the distinct instances, we describe their weight with the corresponding weighted shuffle quantifiers. Finally, the use of the weighted iteration operator serves for computing the cost of the ongoing implementation of the architecture. An instantiation of the parametric weighted architecture is shown in Figure \ref{w-b-p-s} of Example \ref{w-b-pu-su}.

Parametric architectures of the above examples were considered in \cite{Bo:St,Ma:Co}, in the qualitative setting, and in the weighted  setup in \cite{Pa:On}. Though, neither the execution order nor the recursion of interactions was assumed in the work of \cite{Bo:St,Ma:Co,Pa:On}.  

Note that similarly to FOEIL, its weighted counterpart, wFOEIL, can also be applied for the quantitative modelling of parametric architectures without order restrictions, i.e., for Examples \ref{ma-sl}-\ref{repo}. Next we present the wFOEIL sentences describing the parametric weighted Master/Slave and Repository, while the rest of the architectures can be described in the weighted setup analogously.

\begin{exa}
\label{w-ma-sl}
\textbf{(Parametric weighted Master/Slave)} We consider the parametric Master/Slave architecture, described in Example \ref{ma-sl}, in the weighted setting. We let $\mathcal{X}^{(1)}, \mathcal{X}^{(2)}$ denote the sets of variables of master and slave weighted component instances, respectively.  Then, the wFOEIL sentence $\ps$ representing parametric weighted Master/Slave architecture is
$$\ps={\textstyle\prod\nolimits{^{\odot}}x^{(2)}} {\textstyle\sum x^{(1)}}. \#_w( p_m(x^{(1)}) \otimes p_s(x^{(2)})). $$
\noindent In the above sentence the weighted universal concatenation quantifier
 accompanied with the existential one encodes the weight of the interactions between every slave instance and a master instance. These connections  
are applied consecutively for the distinct slave instances, hence their weight is computed accordingly.
 \noindent Consider the instantiation of the architecture with two master and two slave weighted component instances. We let $w_1$, $w_2$, $w_3$, and $w_4$ correspond to the 
four possible connections for the components in the system, defined as in the unweighted case and shown in Figure \ref{m-s}. Then, the values
$\Vert \ps \Vert (r,w_1)$, $\Vert \ps \Vert (r,w_2)$, $\Vert \ps \Vert (r,w_3)$,
 and $\Vert \ps \Vert (r,w_4)$
return the cost of the implementation of each of the four possible connections in the architecture, according to the underlying semiring.
In turn, the `sum' $\Vert \ps \Vert (r,w_1)+\Vert \ps \Vert (r,w_2)+\Vert \ps \Vert (r,w_3)
 +\Vert \ps \Vert (r,w_4)$
equals in the semiring of natural numbers for instance, to the total cost for executing
these connections in the architecture. 
\end{exa}

\begin{exa}
\label{w-repo}
\textbf{(Parametric weighted Repository)} The subsequent wFOEIL sentence $\ps$ characterizes the parametric Repository architecture, presented in Example \ref{repo}, with weighted features. We let $\mathcal{X}^{(1)}$ and $\mathcal{X}^{(2)}$ denote the variable sets that refer to instances of repository and data accessor weighted components, respectively. Then, 
$$ \ps= {\textstyle\sum x^{(1)}} {\textstyle\prod\nolimits{^{\odot}} x^{(2)}}.  \#_w(p_r(x^{(1)}) \otimes
p_d(x^{(2)})). $$
\noindent Let $a_1$, $a_2$, $a_3$, $a_4$ represent each of the four interactions for the architecture instantiation of Figure \ref{rep}, in the weighted setup. Then, the value
 $\Vert \ps \Vert (r,w)$ for $w=a_1a_2a_3a_4$ is the `total' cost for implementing the interactions in the system.
\end{exa}

\subsection{Decidability results for wFOEIL}\label{sec_dec_weighted}
In this subsection, we state an effective translation of wFOEIL sentences to weighted automata. For this, we use the corresponding result of Theorem \ref{sent_to_aut}, namely for every FOEIL sentence we can effectively construct, in exponential time, an expressively equivalent finite automaton. Then, we show that the equivalence of wFOEIL sentences over specific semirings is decidable. Firstly, we briefly recall basic notions and results on weighted automata. 

Let $A$ be an alphabet. A (nondeterministic) weighted finite automaton (WFA for short) over $A$ and $K$ is a quadruple $\mathcal{A}=(Q, in, wt, ter)$ where $Q$ is the finite state set, $in: Q \rightarrow K$ is the initial distribution, $wt:Q\times A \times Q \rightarrow K$ is the mapping assigning weights to transitions of $\mathcal{A}$, and $ter:Q \rightarrow K$ is the terminal distribution.  

Let $w=a_1 \ldots a_n \in A^*$. A path $P_w$ of $\mathcal{A}$ over $w$ is a sequence of transitions $P_w=((q_{i-1}, a_i, q_i))_{1 \leq i \leq n}$. The weight of $P_w$ is given by 
$weight(P_w)=in(q_0)\cdot\prod_{1 \leq i \leq n}wt(q_{i-1},a_i,q_i) \cdot ter(q_n)$. The behavior of $\mathcal{A}$ is the series $\Vert \mathcal{A} \Vert:A^* \rightarrow K$ which is determined by $\Vert \mathcal{A} \Vert(w)= \sum_{P_w}weight(P_w)$. 

Two WFA $\mathcal{A}$ and $\mathcal{A}'$ over $A$ and $K$ are called equivalent if $\Vert\mathcal{A}\Vert=\Vert\mathcal{A}'\Vert$.  For our translation algorithm, of wFOEIL formulas to WFA, we shall need folklore results in WFA theory. We collect them in the following proposition (cf. for instance \cite{Dr:Ha, Sa:El}). 

\begin{prop}\label{w-prop-rec} Let $\mathcal{A}_1=(Q_1, in_1, wt_1, ter_1)$, $\mathcal{A}_2=(Q_2, in_2, wt_2, ter_2)$ and $\mathcal{A}=(Q, in, wt, ter)$ be three \emph{WFA's} over $A$ and $K$. Then, we can construct in polynomial time \emph{WFA's} $\mathcal{B}, \mathcal{C}, \mathcal{D}, \mathcal{E}$ over $A$ and $K$ accepting the sum, and the Hadamard, Cauchy and shuffle product of $\Vert \mathcal{A}_1\Vert$ and  $\Vert \mathcal{A}_2\Vert$, respectively. Moreover, if $\Vert\mathcal{A}\Vert$ is proper, then we can construct in polynomial time \emph{WFA} $\mathcal{A}'$ over $A$ and $K$ accepting the iteration of $\Vert\mathcal{A}\Vert$. \end{prop}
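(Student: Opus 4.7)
The plan is to extend the unweighted constructions of Proposition~\ref{prop-rec} to the weighted setting and verify via path decomposition that the resulting WFAs compute the intended series, using commutativity of $K$ where needed. All constructions have state-size linear or quadratic in those of the input automata, which immediately yields the polynomial complexity bound.

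For $\mathcal{B}$ computing $\Vert\mathcal{A}_1\Vert\oplus\Vert\mathcal{A}_2\Vert$, I would take the disjoint union: state set $Q_1\sqcup Q_2$, with $in$, $wt$, $ter$ inherited on each component and extended by $0$ across components. Since every path lies entirely in one of the components, $\Vert\mathcal{B}\Vert(w)=\Vert\mathcal{A}_1\Vert(w)+\Vert\mathcal{A}_2\Vert(w)$. For $\mathcal{C}$ computing the Hadamard product, I would use the standard product construction on $Q_1\times Q_2$ with componentwise product $in$, $wt$ and $ter$; synchronous pairs of paths over the same word in $\mathcal{A}_1$ and $\mathcal{A}_2$ correspond to single paths of $\mathcal{C}$, and commutativity of $K$ lets their weights factor correctly.

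For $\mathcal{D}$ computing the Cauchy product, I would take state set $Q_1\sqcup Q_2$ with initial distribution inherited from $\mathcal{A}_1$ on $Q_1$ and $0$ on $Q_2$, and terminal distribution $0$ on $Q_1$ and inherited from $\mathcal{A}_2$ on $Q_2$; internal transitions carry $wt_1$ and $wt_2$ within each component, while a ``switching'' cross-transition $(p,a,q)$ from $Q_1$ to $Q_2$ is given weight $\sum_{p'\in Q_1,q'\in Q_2}wt_1(p,a,p')\cdot ter_1(p')\cdot in_2(q')\cdot[q'\to q]$ at the switching step (a standard device; the contribution of empty prefixes and suffixes is absorbed into a small adjustment of $in$ and $ter$). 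Any path in $\mathcal{D}$ reading $w$ then factors canonically into a prefix in $Q_1$ reading $u$ and a suffix in $Q_2$ reading $v$ with $w=uv$, yielding $\Vert\mathcal{D}\Vert(w)=\sum_{w=uv}\Vert\mathcal{A}_1\Vert(u)\cdot\Vert\mathcal{A}_2\Vert(v)$. For $\mathcal{E}$ computing the shuffle product, I would use state set $Q_1\times Q_2$ with product $in$ and $ter$, and transitions $((q_1,q_2),a,(q'_1,q_2))$ of weight $wt_1(q_1,a,q'_1)$ together with $((q_1,q_2),a,(q_1,q'_2))$ of weight $wt_2(q_2,a,q'_2)$; paths of $\mathcal{E}$ over $w$ are then in bijection with interleavings of a path of $\mathcal{A}_1$ over $u$ and a path of $\mathcal{A}_2$ over $v$ with $w\in u\shuffle v$, and summation gives the shuffle product.

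The delicate case is the construction of $\mathcal{A}'$ for $\Vert\mathcal{A}\Vert^{+}$. Properness of $\Vert\mathcal{A}\Vert$ is crucial here: by the remark following Definition of the iteration in Subsection~2.2, it guarantees $\Vert\mathcal{A}\Vert^{\nu}(w)=0$ whenever $\nu>|w|$, so $\Vert\mathcal{A}\Vert^{+}(w)=\sum_{1\leq\nu\leq|w|}\Vert\mathcal{A}\Vert^{\nu}(w)$ is a finite sum in $K$ and the target series is well-defined. I would then use the classical Kleene-plus construction on state set $Q$, keeping $in$ and $ter$, and replacing $wt$ by $wt'(p,a,q):=wt(p,a,q)+ter(p)\cdot\sum_{q_0\in Q}in(q_0)\cdot wt(q_0,a,q)$; the correction models re-entering the automaton after a successful finalization, while properness rules out spurious empty-word loops. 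A path of $\mathcal{A}'$ over $w$ decomposes uniquely into a concatenation of $\nu\geq 1$ paths of $\mathcal{A}$ over non-empty factors $w_1,\ldots,w_\nu$ with $w=w_1\ldots w_\nu$; its weight equals $\Vert\mathcal{A}\Vert^{\nu}(w)$, and summing over $\nu$ gives $\Vert\mathcal{A}'\Vert(w)=\Vert\mathcal{A}\Vert^{+}(w)$. The main obstacle I anticipate is making the iteration construction completely rigorous, in particular bookkeeping the $\varepsilon$-prefix/suffix contributions in the Cauchy and iteration cases so that no $0$-length factor is inadvertently counted, which properness of $\Vert\mathcal{A}\Vert$ (and, for Cauchy, explicit handling of $in\cdot ter$ adjustments) is exactly designed to resolve.
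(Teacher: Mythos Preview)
The paper does not prove this proposition at all: it is stated as a folklore result with a parenthetical citation ``(cf.\ for instance \cite{Dr:Ha, Sa:El})'' and no further argument. Your sketch therefore goes well beyond what the paper provides, and the constructions you give are indeed the standard ones found in those references, so in that sense your approach and the paper's (implicit) approach coincide.

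Two small remarks on precision. In the iteration case, a path of $\mathcal{A}'$ over $w$ does \emph{not} decompose uniquely into a concatenation of $\nu$ paths of $\mathcal{A}$: the modified weight $wt'(p,a,q)=wt(p,a,q)+ter(p)\cdot\sum_{q_0}in(q_0)\,wt(q_0,a,q)$ is a sum, so expanding the product of $wt'$ along a path yields a sum over all possible choices of ``re-entry'' positions, and it is this expansion (not a unique decomposition) that produces $\sum_{\nu\geq 1}\Vert\mathcal{A}\Vert^{\nu}(w)$. In the Cauchy case, your parenthetical about adjusting $in$ and $ter$ to absorb empty-factor contributions is doing real work, since the paper's Cauchy product is over $A^{*}$ and allows $w_1=\varepsilon$ or $w_2=\varepsilon$; the usual fix is to add $\Vert\mathcal{A}_1\Vert(\varepsilon)\cdot in_2$ to the initial distribution on $Q_2$ and $ter_1\cdot\Vert\mathcal{A}_2\Vert(\varepsilon)$ to the terminal distribution on $Q_1$. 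Neither point affects correctness of the overall claim.
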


Next we present the translation algorithm of  wFOEIL formulas to WFA's. Our algorithm requires a doubly exponential time  at its worst case. Specifically, we prove the following theorem.

\begin{thm}\label{wsent_to_waut}
Let $pw\B=\{wB(i,j) \mid i\in [n], j\geq 1\}$ be a set of parametric weighted components over a commutative semiring $K$,  and $r:[n] \rightarrow \mathbb{N}$. Then, for every  \emph{wFOEIL} sentence $\ps$ over $pw\B$ and $K$ we can effectively construct a \emph{WFA} $\mathcal{A}_{\ps,r}$ over $I_{p\B(r)}$ and $K$ such that $\Vert \ps \Vert (r,w)=\Vert\mathcal{A}_{\ps,r}\Vert(w)$ for every  $w \in I_{p\B(r)}^+$. The worst case run time for the translation algorithm is doubly exponential and the best case is exponential.    
\end{thm}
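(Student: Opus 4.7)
The plan is to proceed by structural induction on the wFOEIL sentence $\ps$, in complete parallel with the proof of Proposition \ref{formula-aut} for FOEIL, but replacing the unweighted automata operations (disjoint union, product, Cauchy product, shuffle product, iteration) with their weighted counterparts supplied by Proposition \ref{w-prop-rec}. For the base case $\ps = k$, I would build a one-state WFA with self-loops of weight $1$ over every letter of $I_{p\B(r)}$ and terminal weight $k$; its behavior is the constant series $\tilde{k}$ restricted to $I_{p\B(r)}^+$. For a FOEIL subformula $\ps = \psi$, I would invoke Theorem \ref{sent_to_aut} to obtain an NFA $\mathcal{B}_{\psi,r}$ accepting $\{w \in I_{p\B(r)}^+ : (r,w) \models \psi\}$, then determinize it via Proposition \ref{prop-rec}(1), and finally regard the resulting DFA as a WFA by attributing weight $1$ to every transition, weight $1$ to the initial state, and terminal weight $1$ on final states (and $0$ elsewhere). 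Because the DFA is deterministic, the behavior of this WFA equals the characteristic series $1_{L(\mathcal{B}_{\psi,r})}$, which coincides with $\Vert \psi \Vert(r,\cdot)$ on $I_{p\B(r)}^+$.

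For the inductive step on the weighted operators $\oplus, \otimes, \odot, \varpi$, and $^{+}$, I would directly apply Proposition \ref{w-prop-rec}. The iteration case is legitimate because $\Vert \ps \Vert$ is proper: the semantics of wFOEIL only considers non-empty words in $I_{p\B(r)}^+$, so $\Vert \ps \Vert(\varepsilon) = 0$ and $\Vert \ps \Vert^+ = \sum_{\nu \geq 1} \Vert \ps \Vert^\nu$ is well-defined. For the quantifier cases, I would mimic steps (x)--(xv) in the proof of Proposition \ref{formula-aut}. Concretely, letting $\mathcal{A}_{\ps',r}^{(j)}$ denote the WFA obtained from $\mathcal{A}_{\ps',r}$ by replacing $x^{(i)}$ with $j$ in $I_{p\B(\mathcal{V})}$:
\begin{itemize}
\item[-] for $\sum x^{(i)}.\ps'$, take the sum (in the sense of Proposition \ref{w-prop-rec}) of the WFA's $\mathcal{A}_{\ps',r}^{(j)}$ over $j \in [r(i)]$;
\item[-] for $\prod x^{(i)}.\ps'$, take the Hadamard product of the $\mathcal{A}_{\ps',r}^{(j)}$ over $j \in [r(i)]$;
\item[-] for $\sum^{\odot} x^{(i)}.\ps'$, enumerate every non-empty subset $J = \{l_1 < \ldots < l_t\} \subseteq [r(i)]$, form the Cauchy-product WFA of $\mathcal{A}_{\ps',r}^{(l_1)}, \ldots, \mathcal{A}_{\ps',r}^{(l_t)}$, and then sum over all such $J$;
\item[-] for $\prod^{\odot} x^{(i)}.\ps'$, form the Cauchy product of $\mathcal{A}_{\ps',r}^{(1)}, \ldots, \mathcal{A}_{\ps',r}^{(r(i))}$;
\item[-] for $\sum^{\varpi} x^{(i)}.\ps'$ and $\prod^{\varpi} x^{(i)}.\ps'$, proceed analogously with the shuffle product in place of the Cauchy product.
\end{itemize}
Correctness in each inductive step follows by unfolding the definition of $\Vert \cdot \Vert$ and comparing with the behavior identities stated in Proposition \ref{w-prop-rec}.

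For the complexity analysis, I would track the two independent sources of blow-up. First, a FOEIL subformula $\psi$ produces an NFA of at most exponential size (Theorem \ref{sent_to_aut}); converting its characteristic series into a WFA over an arbitrary commutative semiring forces a subsequent determinization, yielding a doubly exponential size in the worst case. Second, the weighted quantifiers $\sum^{\odot}$ and $\sum^{\varpi}$ require enumerating the $2^{r(i)}-1$ non-empty subsets of $[r(i)]$, adding another exponential factor. The remaining inductive steps--the weighted operations and the quantifiers $\sum, \prod, \prod^{\odot}, \prod^{\varpi}$--are polynomial in the sizes of the inputs by Proposition \ref{w-prop-rec}. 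Combining these observations gives a doubly exponential worst case when both blow-ups coincide, and an exponential best case when only one of them contributes.

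The main obstacle I foresee is the FOEIL-to-WFA base step. Unlike the purely qualitative setting of Theorem \ref{sent_to_aut}, where an NFA suffices because satisfaction is boolean, here one needs a WFA whose behavior is the exact characteristic series of $L(\psi)$ in $K$. Since $K$ need not be idempotent, the naive assignment of weight $1$ to every transition of a non-deterministic automaton counts accepting runs rather than indicating membership, and so determinization is unavoidable; this is precisely the step that yields the second exponential in the worst-case bound, and justifies why the overall translation is doubly exponential rather than single exponential.
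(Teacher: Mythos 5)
Your construction is essentially the paper's: the paper also proceeds by structural induction (via a formula-level Proposition~\ref{wformula-waut} parameterized by an assignment $\sigma$, the weighted analogue of Proposition~\ref{formula-aut}), uses a one-state WFA for $k$, converts a FOEIL subformula to a finite automaton, completes/determinizes it, and assigns weight $1$ everywhere, and then closes under the weighted operators and quantifiers exactly as you describe, with the subset enumeration for $\sum^{\odot}$ and $\sum^{\varpi}$. You also correctly identify the two genuine subtleties: properness of $\Vert\ps\Vert$ for the iteration case, and the fact that over a non-idempotent $K$ a weight-$1$ NFA counts accepting runs, so determinization is unavoidable. Two cosmetic points: for FOEIL subformulas with free variables you need the formula-level statement with assignments (Proposition~\ref{form-aut} in the paper) rather than Theorem~\ref{sent_to_aut}, which is stated only for sentences; and for the binary weighted operators the paper first renames variables shared between the two operands before combining the automata.

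The one genuine gap is in your complexity accounting. You argue that a FOEIL subformula yields an NFA of exponential size, that determinization then gives a doubly exponential WFA, and that the subset enumeration for $\sum^{\odot}/\sum^{\varpi}$ contributes a further exponential factor; taken literally, stacking an exponential on top of a doubly exponential object does not obviously stay within the claimed doubly exponential bound, and your phrase ``when both blow-ups coincide'' does not resolve this. The paper closes this gap using the syntactic restriction in Definition~\ref{def_wfOEIL}: inside $\sum^{\odot}x^{(i)}.\ps'$ and $\sum^{\varpi}x^{(i)}.\ps'$, negation in any FOEIL subformula of $\ps'$ is permitted only on PIL formulas and equalities, so the unweighted automaton for such a subformula is of polynomial size and only the single determinization of step (ii) costs an exponential; the exponential subset enumeration then sits on top of single-exponential WFAs, which is what keeps the overall worst case doubly exponential. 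Your argument should invoke this restriction explicitly, since without it the bound you state is not justified by the analysis you give.
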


We shall prove Theorem \ref{wsent_to_waut} using the subsequent Proposition \ref{wformula-waut}. For this, we need to slightly modify the corresponding result of Proposition \ref{formula-aut}. More precisely, we state the next proposition.
\begin{prop}\label{form-aut}
Let $\psi$ be a \emph{FOEIL} formula over a set $p\B=\{B(i,j) \mid i\in [n], j\geq 1\}$ of parametric components. Let also $\mathcal{V} \subseteq \mathcal{X}$ be a finite set of variables containing $\mathrm{free}(\psi)$, $r:[n] \rightarrow \mathbb{N}$, and $\sigma$ a  $(\mathcal{V}, r)$-assignment. Then, we can effectively construct a finite automaton $\mathcal{A}_{\psi,r, \sigma}$ over $I_{p\B(r)}$ such that $(r, \sigma, w) \models \psi$ iff $w \in L(\mathcal{A}_{\psi,r, \sigma})$ for every $w \in I_{p\B(r)}^+$. The worst case run time for the translation algorithm is exponential and the best case is polynomial. 
\end{prop}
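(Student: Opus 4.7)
The plan is to obtain $\mathcal{A}_{\psi,r,\sigma}$ as a simple post-processing of the automaton $\mathcal{A}_{\psi,r}$ that is already delivered by Proposition \ref{formula-aut}. By that proposition we can effectively construct, in the announced complexity bounds, a finite automaton $\mathcal{A}_{\psi,r}=(Q,I_{p\B(\mathcal{V})},I,\Delta,F)$ over the parametric alphabet $I_{p\B(\mathcal{V})}$ such that for every $(\mathcal{V},r)$-assignment $\sigma$ and every $w\in I_{p\B(r)}^{+}$ we have $(r,\sigma,w)\models\psi$ iff $w\in\sigma(L(\mathcal{A}_{\psi,r}))\cap I^{+}_{p\B(r)}$. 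Hence it suffices to realize the language $\sigma(L(\mathcal{A}_{\psi,r}))\cap I^{+}_{p\B(r)}$ by a finite automaton over the ground alphabet $I_{p\B(r)}$.

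For this, I would define the \emph{ground substitution} of $\mathcal{A}_{\psi,r}$ under $\sigma$ as the finite automaton $\mathcal{A}_{\psi,r,\sigma}=(Q,I_{p\B(r)},I,\Delta_{\sigma},F)$ whose transition relation is
\[
\Delta_{\sigma}=\{(q,\sigma(a),q')\mid (q,a,q')\in\Delta\text{ and }\sigma(a)\in I_{p\B(r)}\}.
\]
Intuitively, each transition label $a\in I_{p\B(\mathcal{V})}$ of $\mathcal{A}_{\psi,r}$ is replaced by its image $\sigma(a)\in I(P_{p\B(r)})$, and transitions whose image falls outside $I_{p\B(r)}$ (that is, interactions in which two distinct variables $x^{(i)},y^{(i)}$ with $\sigma(x^{(i)})=\sigma(y^{(i)})$ contribute conflicting ports to the same instance, as in the footnote of Proposition \ref{formula-aut}) are simply discarded. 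By this construction, a word $w\in I_{p\B(r)}^{+}$ is accepted by $\mathcal{A}_{\psi,r,\sigma}$ iff there exists a successful path of $\mathcal{A}_{\psi,r}$ over some $u\in I_{p\B(\mathcal{V})}^{+}$ with $\sigma(u)=w\in I^{+}_{p\B(r)}$, i.e., iff $w\in\sigma(L(\mathcal{A}_{\psi,r}))\cap I^{+}_{p\B(r)}$, which by Proposition \ref{formula-aut} is equivalent to $(r,\sigma,w)\models\psi$.

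Concerning complexity, the substitution step is carried out in a single linear scan over the transition set: for each $(q,a,q')\in\Delta$ compute $\sigma(a)$ (polynomial in the size of $a$ and $\mathcal{V}$) and test membership in $I_{p\B(r)}$, which amounts to verifying that $\sigma$ is injective on the set of variables occurring in $a$ within each sort. Therefore this post-processing contributes only a polynomial overhead, and the overall worst-case complexity remains exponential and the best-case polynomial, exactly matching the bounds of Proposition \ref{formula-aut}. I do not foresee a genuine obstacle here, since all the inductive and combinatorial work on the structure of $\psi$ has already been absorbed in Proposition \ref{formula-aut}; the only delicate point is to argue carefully that discarding transitions with $\sigma(a)\notin I_{p\B(r)}$ exactly mirrors the intersection with $I_{p\B(r)}^{+}$, and this is transparent because the accepted ground words are read letter-by-letter and each kept transition already yields a letter in $I_{p\B(r)}$.
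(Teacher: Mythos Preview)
Your approach is correct but takes a different route from the paper. The paper does not post-process the automaton of Proposition~\ref{formula-aut}; instead it re-runs that proposition's inductive construction from scratch, this time directly over the ground alphabet $I_{p\B(r)}$ with the assignment $\sigma$ fixed from the outset. Concretely, only the base cases change: for $\psi=\mathrm{true}$ one builds the one-state complete automaton over $I_{p\B(r)}$, and for $\psi=p(x^{(i)})$ one builds the two-state DFA whose accepting transitions read the letters $a\in I_{p\B(r)}$ with $p(\sigma(x^{(i)}))\in a$; all remaining induction steps (iii)--(xv) are then carried out verbatim, already over $I_{p\B(r)}$.

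Your reduction is more modular: it treats Proposition~\ref{formula-aut} as a black box and relabels transitions once at the end, correctly arguing that letter-wise discarding of labels with $\sigma(a)\notin I_{p\B(r)}$ realizes exactly the intersection with $I_{p\B(r)}^{+}$. The paper's route, by contrast, never materializes an automaton over the symbolic alphabet $I_{p\B(\mathcal{V})}$; all intermediate automata already live over the single ground alphabet, which sidesteps the somewhat delicate bookkeeping of mixed variable/instance alphabets that arises during quantifier elimination in Proposition~\ref{formula-aut}. Complexity-wise the two approaches coincide, since your substitution step is linear in the size of the transition relation.
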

\begin{proof}
We modify the proof of Proposition \ref{formula-aut} as follows.
If $\psi=\mathrm{true}$, then we consider the complete finite automaton $\mathcal{A}_{\psi,r, \sigma}=(\{q\}, I_{p\B(r)},q, \Delta, \{q\})$ with $\Delta=\{(q,a,q) \mid   a \in I_{p\B(r)} \}$. If $\psi=p(x^{(i)})$, then we consider the deterministic finite automaton  $\mathcal{A}_{\psi, r, \sigma}=(\{q_0,q_1\}, I_{p\B(r)}, q_0, \Delta, \{q_1\})$ with $\Delta=\{(q_0,a,q_1) \mid   p(\sigma(x^{(i)})) \in a\}$. Then, we follow accordingly the same induction steps. Concerning the complexity of the translation we use the same arguments (we do not take into account the trivial case $\psi = \mathrm{true}$ where the complexity of the translation is constant). 
\end{proof}

\begin{prop}\label{wformula-waut}
Let $\ps$ be a \emph{wFOEIL} formula over a set $pw\B=\{wB(i,j) \mid i\in [n], j\geq 1\}$ of parametric weighted components and $K$. Let also $\mathcal{V} \subseteq \mathcal{X}$ be a finite set of variables containing $\mathrm{free}(\psi)$, $r:[n] \rightarrow \mathbb{N}$ and $\sigma$  a $(\mathcal{V}, r)$-assignment. Then, we can effectively construct a \emph{WFA} $\mathcal{A}_{\ps,r,\sigma}$ over $I_{p\B(r)}$ and $K$ such that $\Vert \ps \Vert(r, \sigma, w) = \Vert \mathcal{A}_{\ps,r, \sigma}\Vert(w)$ for every $w \in I_{p\B(r)}^+$. The worst case run time for the translation algorithm is doubly exponential and the best case is exponential.   
\end{prop}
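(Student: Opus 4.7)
The plan is to prove the claim by induction on the structure of $\ps$, closely mirroring the inductive construction in the proof of Proposition \ref{formula-aut} but using the weighted constructions of Proposition \ref{w-prop-rec} in place of the unweighted ones. Two atomic cases are distinguished. If $\ps = k \in K$, I take the one-state WFA with initial weight $1$, terminal weight $k$, and transition weight $1$ on every self-loop over $I_{p\B(r)}$; its behavior is the constant series $\widetilde{k}$ restricted to $I_{p\B(r)}^{+}$. If $\ps = \psi$ is a FOEIL formula, I first invoke Proposition \ref{form-aut} to obtain an NFA $\mathcal{N}$ accepting $\{w \in I_{p\B(r)}^{+} \mid (r,\sigma,w) \models \psi\}$, then apply Proposition \ref{prop-rec}(1) to determinize it, and finally turn the resulting complete DFA into a WFA $\mathcal{A}_{\ps,r,\sigma}$ by assigning weight $1$ to the initial state, to every transition, and to every final state, and $0$ elsewhere. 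The determinization step is essential, since over a general commutative semiring a nondeterministic WFA built naively from $\mathcal{N}$ would sum path multiplicities in $K$ and fail to yield the characteristic series of $L(\mathcal{N})$.

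For the compound cases $\ps = \ps_1 \oplus \ps_2$, $\ps_1 \otimes \ps_2$, $\ps_1 \odot \ps_2$, $\ps_1 \varpi \ps_2$, and $\ps'^{+}$, I apply directly the corresponding constructions of Proposition \ref{w-prop-rec}; the iteration case is legitimate since every series $\Vert \mathcal{A}_{\ps',r,\sigma}\Vert$ is proper by construction on $I_{p\B(r)}^{+}$. The weighted quantifiers follow the FOEIL cases (x)--(xv) of Proposition \ref{formula-aut} with weighted replacements: ${\textstyle\sum} x^{(i)}.\ps'$ is realized by the weighted sum over the family $\{\mathcal{A}_{\ps',r,\sigma[x^{(i)} \to j]}\}_{j \in [r(i)]}$; ${\textstyle\prod} x^{(i)}.\ps'$ by the weighted Hadamard product over the same family; ${\textstyle\sum\nolimits{^{\odot}}} x^{(i)}.\ps'$ enumerates all nonempty subsets $\{l_1 < \cdots < l_k\} \subseteq [r(i)]$ and sums the weighted Cauchy products of the corresponding subautomata; ${\textstyle\prod\nolimits{^{\odot}}} x^{(i)}.\ps'$ is the weighted Cauchy product of the whole family; and the shuffle quantifiers ${\textstyle\sum\nolimits{^{\varpi}}}$ and ${\textstyle\prod\nolimits{^{\varpi}}}$ are handled analogously with the weighted shuffle construction. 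Correctness in every case is a routine verification from the defining identities of the operations in $K\langle\langle I_{p\B(r)}^{+}\rangle\rangle$ together with the semantics of the respective wFOEIL constructs.

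The main obstacle is the complexity analysis. The worst-case doubly exponential bound first appears in the atomic case $\ps = \psi$: Proposition \ref{form-aut} may yield an NFA of size exponential in $|\psi|$, and the subsequent determinization produces a WFA of size doubly exponential in $|\psi|$. All remaining steps respect this bound: the binary and unary combinators from Proposition \ref{w-prop-rec} are polynomial in the sizes of their inputs, the universal quantifiers ${\textstyle\prod}$, ${\textstyle\prod\nolimits{^{\odot}}}$, ${\textstyle\prod\nolimits{^{\varpi}}}$ combine at most $r(i)$ copies of subautomata via Hadamard, Cauchy, or shuffle products, and the existential quantifiers ${\textstyle\sum\nolimits{^{\odot}}}$, ${\textstyle\sum\nolimits{^{\varpi}}}$ introduce an additional factor $2^{r(i)}-1$ from the subset enumeration; each such factor is absorbed into the outer doubly exponential bound. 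For the best case, if no FOEIL subformula of $\ps$ triggers steps (v), (xii), or (xiv) of Proposition \ref{form-aut}, and $\ps$ itself contains no ${\textstyle\sum\nolimits{^{\odot}}}$ or ${\textstyle\sum\nolimits{^{\varpi}}}$ quantifier, then the FOEIL translations are polynomial and the only remaining exponential steps are the determinization of each polynomial-size NFA together with the universal quantifiers that produce up to $r(i)$ copies of subautomata; these contribute a single exponential, yielding an overall exponential running time.
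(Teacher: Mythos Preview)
Your proposal is correct and follows essentially the same inductive construction as the paper: the atomic FOEIL case is handled via Proposition~\ref{form-aut} followed by determinization and the obvious $0/1$-weighted embedding, the constant case by a one-state automaton (you swap the placement of $k$ between $in$ and $ter$, which is immaterial), and all compound cases via Proposition~\ref{w-prop-rec} with the subset enumeration for ${\textstyle\sum^{\odot}}$ and ${\textstyle\sum^{\varpi}}$. Your complexity analysis matches the paper's, and your explicit remark that determinization is needed to obtain the characteristic series over a general commutative semiring is a welcome clarification.
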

\begin{proof}
We prove our claim by  induction on the structure of the wFOEIL formula $\ps$. 
\begin{itemize}
\item[i)] If $\ps=k$, then we consider the WFA $\mathcal{A}_{\ps,r,\sigma}=(\{q\}, in, wt, ter)$ over $I_{p\B(r)}$ and $K$ with $in(q)=k$, $wt(q,a,q)=1$ for every $a \in I_{p\B(r)}$, and $ter(q)=1$. 
\item[ii)] If $\ps=\psi$, then we consider the finite automaton $\mathcal{A}_{\psi,r,\sigma}$ derived in Proposition \ref{form-aut}. Next, we construct, in exponential time, an equivalent complete finite automaton $\mathcal{A}'_{\psi,r,\sigma}=(Q,I_{p\B(r)},q_0,\Delta,F)$. Then, we construct, in linear time,  the WFA $\mathcal{A}_{\ps,r,\sigma}=(Q,in,wt,ter)$ where $in(q)=1$ if $q=q_0$ and $in(q)=0$ otherwise, for every $q \in Q$, $wt(q,a,q')=1$ if $(q,a,q') \in \Delta$ and $wt(q,a,q')=0$ otherwise, for every $(q,a,q') \in Q\times A \times Q$, and $ter(q)=1$ if $q \in F$ and $ter(q)=0$ otherwise, for every $q\in Q$.
\item[iii)] If $\ps=\ps_1 \oplus \ps_2$ or $\ps=\ps_1 \otimes \ps_2$ or $\ps=\ps_1 \odot \ps_2$ or $\ps=\ps_1 \varpi \ps_2$, then we rename firstly variables in $\mathrm{free}(\ps_1) \cap \mathrm{free}(\ps_2)$ as well variables which are free in $\ps_1$ (resp. $\ps_2$) and bounded (i.e., not free) in $\ps_2$ (resp. in $\ps_1$). Then, we extend $\sigma$ on $\mathrm{free}(\ps_1) \cup \mathrm{free}(\ps_2)$ in the obvious way, and construct   $\mathcal{A}_{\ps,r,\sigma}$ from   $\mathcal{A}_{\ps_1,r,\sigma}$ and $\mathcal{A}_{\ps_2,r,\sigma}$  by applying Proposition \ref{w-prop-rec}. 
\item[iv)] If $\ps=\ps'{^{\oplus}}$, then we get $\mathcal{A}_{\ps,r,\sigma}$ as the WFA for the iteration of the series $\Vert \mathcal{A}_{\ps',r,\sigma}\Vert$ by applying Proposition \ref{w-prop-rec}.
\item[v)] If $\ps={\textstyle\sum x^{(i)}}. \ps' $, then we get $\mathcal{A}_{\ps,r,\sigma}$ as the WFA for the sum of the series $\Vert\mathcal{A}_{\ps',r,\sigma[x^{(i)}\rightarrow j]}\Vert$, $j\in [r(i)]$ (Proposition \ref{w-prop-rec}).
\item[vi)] If $\ps={\textstyle\prod x^{(i)}}. \ps'$, then we get $\mathcal{A}_{\ps,r,\sigma}$ as the WFA for the Hadamard product of the series $\Vert\mathcal{A}_{\ps',r,\sigma[x^{(i)}\rightarrow j]}\Vert$, $j\in [r(i)]$ (Proposition \ref{w-prop-rec}).
\item[vii)] If $\ps={\textstyle\sum ^{\odot} x^{(i)}}. \ps' $, then we compute firstly all nonempty  subsets $J$ of $[r(i)]$. For every such subset $J=\{l_1, \ldots, l_t\}$, with $1 \leq t \leq r(i)$ and $1 \leq l_1 <  \ldots <  l_t \leq r(i)$, we consider the WFA $\mathcal{A}_{\ps,r,\sigma}^{(J)}$ accepting the Cauchy product of the series $\Vert\mathcal{A}_{\ps',r,\sigma[x^{(i)}\rightarrow l_1]}\Vert,\ldots, \Vert\mathcal{A}_{\ps',r,\sigma[x^{(i)} \rightarrow l_t]}\Vert$. Then, we get $\mathcal{A}_{\ps,r,\sigma}$ as the WFA for the sum of the series  $\Vert\mathcal{A}_{\ps,r,\sigma}^{(J)}\Vert$  with $\emptyset \neq J \subseteq [r(i)]$ (Proposition \ref{w-prop-rec}). 
\item[viii)] If $\ps={\textstyle\prod ^{\odot} x^{(i)}}. \ps' $, then we get $\mathcal{A}_{\ps,r,\sigma}$ as the WFA for the Cauchy product of the series $\Vert\mathcal{A}_{\ps',r,\sigma[x^{(i)}\rightarrow j]}\Vert$, $j\in [r(i)]$ (Proposition \ref{w-prop-rec}).
\item[ix)] If  $\ps={\textstyle\sum ^{\varpi} x^{(i)}}. \ps' $, then we compute firstly all nonempty  subsets $J$ of $[r(i)]$. For every such subset $J=\{l_1, \ldots, l_t\}$, with $1 \leq t \leq r(i)$ and $1 \leq l_1 <  \ldots <  l_t \leq r(i)$, we consider the WFA $\mathcal{A}_{\ps,r,\sigma}^{(J)}$ accepting the shuffle product of the series $\Vert\mathcal{A}_{\ps',r,\sigma[x^{(i)}\rightarrow l_1]}\Vert,\ldots, \Vert\mathcal{A}_{\ps',r,\sigma[x^{(i)} \rightarrow l_t]}\Vert$. Then, we get $\mathcal{A}_{\ps,r,\sigma}$ as the WFA for the sum of the series  $\Vert\mathcal{A}_{\ps,r,\sigma}^{(J)}\Vert$  with $\emptyset \neq J \subseteq [r(i)]$ (Proposition \ref{w-prop-rec}). 
\item[x)] If $\ps={\textstyle\prod ^{\varpi} x^{(i)}}. \ps' $, then we get $\mathcal{A}_{\ps,r,\sigma}$ as the WFA for the shuffle product of the series $\Vert\mathcal{A}_{\ps',r,\sigma[x^{(i)}\rightarrow j]}\Vert$, $j\in [r(i)]$ (Proposition \ref{w-prop-rec}).
\end{itemize}
By our constructions above, we immediately get  $\Vert \ps\Vert(r, \sigma, w) = \Vert \mathcal{A}_{\ps,r,\sigma} \Vert(w)$ for every $w \in I_{p\B(r)}^+$ . Hence, it remains to deal with the time complexity of our translation algorithm. 
  
Taking into account the above induction steps, we show that the worst case run time for our translation algorithm is doubly exponential. Indeed, if $\ps'=\psi$ is a FOEIL formula, then our claim holds by (ii) and Proposition \ref{form-aut}. Then the constructions in steps (iii)-(vi), (viii) and (x) require a polynomial time (cf. Proposition \ref{w-prop-rec}). Finally, the translations in steps (vii) and (ix) require at most a doubly exponential run time because of the following reasons. Firstly, we need to compute all nonempty subsets of $[r(i)]$ which requires an exponential time. Then, due to our restrictions for $\ps'$ in $\ps={\textstyle\sum ^{\odot} x^{(i)}}. \ps' $ and $\ps={\textstyle\sum ^{\varpi} x^{(i)}}. \ps' $, and Proposition \ref{form-aut} (recall also the proof of Proposition \ref{formula-aut}), if a FOEIL subformula $\psi$ occurs in $\ps'$, then we need a polynomial time to translate it to a finite automaton  and by (ii) an exponential time to translate it to a WFA. We should note that if $\ps'$ contains a subformula of the form $\exists ^* x^{(i')}.\psi''$ or $\exists ^{\shuffle} x^{(i')}.\psi''$ or ${\textstyle\sum ^{\odot} x^{(i')}}. \ps'' $ or ${\textstyle\sum ^{\varpi} x^{(i')}}. \ps'' $, then the computation of the subsets of $[r(i')]$ is independent of the computation of the subsets of $[r(i)]$. 
On the other hand, the best case run time of the algorithm is exponential. Indeed, if in step (ii) we get $\mathcal{A}_{\psi, r, \sigma}$ in polynomial time (cf. Proposition \ref{form-aut}) and we need  no translations of steps (vii) and (ix), then the required time is exponential.
\end{proof}

\

Now we are ready to state the proof of Theorem \ref{wsent_to_waut}.

\begin{proof}[Proof of Theorem \ref{wsent_to_waut}] We apply Proposition \ref{wformula-waut}. Since $\ps$ is a weighted sentence it contains no free variables. Hence, we get a WFA $\mathcal{A}_{\ps,r}$ over $I_{p\B(r)}$ and $K$ such that $\Vert\ps\Vert(r,w)=\Vert\mathcal{A}_{\ps,r}\Vert(w)$ for every $w \in I^+_{p\B(r)}$, and this concludes our proof. The worst case run time for the translation algorithm is doubly exponential and the best case   is exponential.
\end{proof}

\

Next we prove the decidability of the equivalence of wFOEIL sentences over (subsemirings of) skew fields. It is worth noting that the complexity remains the same with the one for the decidability of equivalence for FOEIL sentences (see Section \ref{sec_dec}).

\begin{thm}
Let $K$ be  a (subsemiring of a) skew field, $pw\mathcal{B}= \{wB(i,j) \mid i \in [n], j \geq 1 \} $  a set of parametric weighted components over   $K$,  and  $r:[n] \rightarrow \mathbb{N}$ a mapping. Then, the equivalence  problem for \emph{wFOEIL} sentences over $pw\B$ and $K$ w.r.t. $r$ is decidable in doubly exponential time. 
\end{thm}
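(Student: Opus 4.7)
The plan is to reduce the equivalence problem for wFOEIL sentences to the equivalence problem for weighted finite automata (WFA) over $K$, and then invoke the classical decidability result for WFA equivalence over (subsemirings of) skew fields. More precisely, given two wFOEIL sentences $\ps_1, \ps_2$ over $pw\B$ and $K$, I would first apply Theorem \ref{wsent_to_waut} to construct, in doubly exponential time, two WFA $\mathcal{A}_{\ps_1, r}$ and $\mathcal{A}_{\ps_2, r}$ over $I_{p\B(r)}$ and $K$ whose behaviors coincide with $\Vert \ps_1 \Vert(r, \cdot)$ and $\Vert \ps_2 \Vert(r, \cdot)$ on $I_{p\B(r)}^{+}$, respectively. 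Thus $\ps_1$ and $\ps_2$ are equivalent w.r.t.\ $r$ if and only if $\Vert \mathcal{A}_{\ps_1, r}\Vert = \Vert \mathcal{A}_{\ps_2, r}\Vert$.

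Next I would invoke the classical result (due to Sch\"utzenberger, with the algorithmic version given by Tzeng and later refined) that the equivalence of two WFA over a skew field (and hence over any subsemiring thereof, since the two behaviors coincide on $I_{p\B(r)}^{+}$ iff they coincide when interpreted in the ambient skew field) is decidable in time polynomial in the size of the input automata. The standard argument proceeds by forming the difference $\Vert \mathcal{A}_{\ps_1, r}\Vert \ominus \Vert \mathcal{A}_{\ps_2, r}\Vert$ (which is realizable by a WFA of linear size over a skew field, since additive inverses exist) and checking whether the resulting behavior is the zero series, which reduces to a reachability check in a linear subspace computed by standard Gaussian elimination.

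Since the size of $\mathcal{A}_{\ps_1, r}$ and $\mathcal{A}_{\ps_2, r}$ is at most doubly exponential in the size of the input sentences (by Theorem \ref{wsent_to_waut}), and the WFA equivalence test runs in time polynomial in the size of these automata, the total running time remains doubly exponential. This matches the complexity obtained in Section~\ref{sec_dec} for the equivalence of FOEIL sentences, as claimed.

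The only subtle point in the argument, and the one I would expect to need the most care, is the treatment of the subsemiring case: the behaviors of $\mathcal{A}_{\ps_1, r}$ and $\mathcal{A}_{\ps_2, r}$ are series with coefficients in the subsemiring $K$, but the equivalence algorithm needs additive inverses. This is handled by embedding $K$ into the ambient skew field $F$ and observing that equivalence over $K$ coincides with equivalence over $F$, since the coefficients of the two behaviors are elements of $K \subseteq F$ and equal in $F$ iff they are equal in $K$. Once this observation is in place, the doubly exponential bound follows immediately by combining Theorem \ref{wsent_to_waut} with the polynomial-time WFA equivalence algorithm over skew fields.
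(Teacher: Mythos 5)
Your proposal is correct and follows essentially the same route as the paper: apply Theorem \ref{wsent_to_waut} to obtain, in doubly exponential time, weighted automata realizing the semantics of the two sentences, and then invoke the classical polynomial-time (cubic) decidability of WFA equivalence over (subsemirings of) skew fields, yielding the doubly exponential bound overall. Your additional remarks on the difference construction and the embedding of the subsemiring into the ambient skew field are just an unpacking of the cited classical result, not a deviation from the paper's argument.
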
 
\begin{proof}
It is well known that the equivalence problem for weighted automata, with weights taken in  (a subsemiring of) a skew field, is decidable in cubic time (cf. Theorem 4.10 in \cite{Sa:El}, \cite{Sa:Ra}). Hence, we conclude our result by Theorem \ref{wsent_to_waut}.
\end{proof}

\begin{cor}
Let $pw\mathcal{B}= \{wB(i,j) \mid i \in [n], j \geq 1 \} $ be a set of parametric weighted components over $\mathbb{Q}$ and  $r:[n] \rightarrow \mathbb{N}$ a mapping. Then, the equivalence  problem for \emph{wFOEIL} sentences over $pw\B$ and $\mathbb{Q}$ w.r.t. $r$ is decidable in doubly exponential time. 

\end{cor}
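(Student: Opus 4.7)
The plan is to apply the immediately preceding theorem by observing that the semiring $(\mathbb{Q}, +, \cdot, 0, 1)$ is a field, and every field is in particular a skew field. This was already noted in the preliminaries of the paper, so the verification is immediate: $(\mathbb{Q}, +, 0)$ is a commutative group, and $(\mathbb{Q} \setminus \{0\}, \cdot, 1)$ is a commutative group, hence the defining conditions of a skew field are satisfied. Consequently, $\mathbb{Q}$ falls within the class of semirings for which the preceding theorem asserts decidability of the equivalence problem for wFOEIL sentences in doubly exponential time.

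Given two wFOEIL sentences $\ps_1$ and $\ps_2$ over $pw\B$ and $\mathbb{Q}$, the decision procedure would proceed as in the proof of the previous theorem: first invoke Theorem \ref{wsent_to_waut} to effectively construct, in at most doubly exponential time, weighted finite automata $\mathcal{A}_{\ps_1,r}$ and $\mathcal{A}_{\ps_2,r}$ over $I_{p\B(r)}$ and $\mathbb{Q}$ with $\Vert \ps_i \Vert(r,w) = \Vert \mathcal{A}_{\ps_i,r} \Vert(w)$ for every $w \in I^+_{p\B(r)}$ and $i=1,2$; then decide equivalence of $\mathcal{A}_{\ps_1,r}$ and $\mathcal{A}_{\ps_2,r}$ in cubic time using the classical equivalence algorithm for weighted automata over skew fields (cf. \cite{Sa:El,Sa:Ra}). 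The total complexity remains doubly exponential, dominated by the translation step. There is no real obstacle here; the corollary is a straightforward specialization of the theorem, and the sole point to record is the membership of $\mathbb{Q}$ in the class of skew fields.
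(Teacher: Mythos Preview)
Your proposal is correct and follows the same approach as the paper: the corollary is an immediate specialization of the preceding theorem once one notes that $\mathbb{Q}$ is a (skew) field, a fact already recorded in the preliminaries. The paper in fact gives no explicit proof of the corollary, treating it as self-evident from the theorem; your elaboration of the decision procedure is faithful to the argument but more detailed than necessary.
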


\section{Conclusion}
Efficient modelling of architectures plays a key role in component-based systems in order to be well-defined.
In this paper we propose the formal study of architectures for parametric component-based systems that
consist of a finite number of component types with an unknown number of instances. 
Specifically, we introduce a propositional logic, EPIL, which augments PIL from \cite{Ma:Co} with
a concatenation, a shuffle, and an iteration operator. We then interpret EPIL formulas on finite words over the set of
interactions defined for a given set of ports.
We also study FOEIL, the first-order level of EPIL, as a modelling language for the architectures of parametric systems. 
EPIL and FOEIL are proved expressive enough to
return the permissible interactions and the order restrictions of architectures, as well as to encode 
recursive interactions.
Several examples are presented for modelling parametric architectures with or without ordered interactions by FOEIL sentences.
Also, we show that the equivalence and validity problems for FOEIL sentences are decidable in doubly exponential time,
and the satisfiability problem for FOEIL sentences is decidable in exponential time. Moreover, we show the robustness
of our theory by extending our results 
for the quantitative modelling of parametric architectures. For this, we introduce and study wEPIL and wFOEIL over a commutative semiring. Our weighted logics maintain the qualitative attributes of EPIL and FOEIL, and also model the quantitative properties of architectures, such as the `total' cost or the maximum probability of implementing concrete interactions. We show
 that the equivalence problem for wFOEIL sentences over (a subsemiring
of) a skew field is decidable in doubly exponential time, hence the complexity remains the same with the one for the decidability of FOEIL sentences. Furthermore, we apply wFOEIL
for describing the quantitative aspects of several parametric
architectures.

Work in progress involves the study of the second-order level of EPIL and wEPIL in order to model parametric architectures in the qualitative and weighted setup, respectively,  that cannot be formalized by first-order logics such as Ring, Linear, and  Grid \cite{De:Pa, Ma:Co}. Future work is also investigating the verification problem of parametric systems against formal properties \cite{Ab:Pa,Bo:St}, and
specifically the application of architectures modelled by our logics for studying the behavior and proving properties (such as deadlock-freedom)
in parametric systems. Moreover, it would be interesting to extend our logic-based framework for 
the investigation of dynamic and reconfigurable architectures \cite{Bo:Mo,Ci:Fo,Ma:Sp} as well as for addressing
the architecture composition problem \cite{At:Ge,Bl:Ve}. Another research direction is the 
study of our logics over alternative weight structures, found in applications, like for
instance valuation monoids \cite{Dr:Re,Ka:We}. Formal approaches for architectures are often extended
with tools or graphical languages  for supporting architectures' specification (cf. \cite{Am:RE,Ki:An,Ko:Pa,Ma:Ar,Me:Cl}). 
Therefore, in addition to theoretical directions, future work includes also the development of a tool and a language
in order to facilitate the architecture modelling and identification process of parametric (weighted) systems. 
Finally, in a subsequent work we investigate parametric systems and their architectures in the fuzzy framework
in order to address uncertainty and imprecision resulting from the components' communication.

\section*{Acknowledgment}
  \noindent We are deeply grateful to Simon Bliudze for valuable discussions on a previous version of the paper. Also, we thank anonymous referees for their constructive comments and suggestions that brought the paper in its current form.


\bibliographystyle{alpha}
\bibliography{P_R_special_issue_lmcs_FOEIL_accepted_clean}

\newcommand{\etalchar}[1]{$^{#1}$}
\begin{thebibliography}{MBBS16b}

\bibitem[ABB{\etalchar{+}}16]{At:Ge}
P.~Attie, E.~Baranov, S.~Bliudze, M.~Jaber, and J.~Sifakis.
\newblock A general framework for architecture composability.
\newblock {\em Formal Aspects of Computing}, 28:207--231, 2016.

\bibitem[AD16]{Ab:Pa}
P.~A. Abdulla and G.~Delzanno.
\newblock Parameterized verification.
\newblock {\em Int. J. Softw. Tools Technol. Transfer}, 18(5):469--473, 2016.

\bibitem[ADFL19]{An:Pa}
{\'{E}}.~Andr{\'{e}}, B.~Delahaye, P.~Fournier, and D.~Lime.
\newblock Parametric timed broadcast protocols.
\newblock In C.~Enea and R.~Piskac, editors, {\em Verification, Model Checking,
  and Abstract Interpretation - 20th International Conference}, volume 11388 of
  {\em LNCS}, pages 491--512, 2019.

\bibitem[AHU74]{Ah:Th}
A.~V. Aho, J.~E. Hopcroft, and J.~D. Ullman.
\newblock {\em The Design and Analysis of Computer Algorithms}.
\newblock Addison-Wesley, 1974.

\bibitem[AI07]{Ac:Sa}
L.~Aceto and A.~Ing{\'{o}}lfsd{\'{o}}ttir.
\newblock The saga of the axiomatization of parallel composition.
\newblock In L.~Caires and V.~T. Vasconcelos, editors, {\em CONCUR 2007},
  volume 4703 of {\em LNCS}, pages 2--16, 2007.

\bibitem[AKR{\etalchar{+}}18]{Am:Pa}
B.~Aminof, T.~Kotek, S.~Rubin, F.~Spegni, and H.~Veith.
\newblock Parameterized model checking of rendezvous systems.
\newblock {\em Distrib. Comput.}, 31(3):187--222, 2018.

\bibitem[AP11]{Al:Th}
V.~S. Alagar and K.~Periyasamy.
\newblock {\em The B-Method}, chapter~19, pages 577--633.
\newblock In: Specification of Software Systems. Texts in Computer Science.
  Springer, 2011.

\bibitem[APR06]{Am:RE}
S.~Amaro, E.~Pimentel, and A.~M. Roldan.
\newblock {REO} based interaction model.
\newblock {\em Electron. Notes Theor. Comput. Sci.}, 160:3--14, 2006.

\bibitem[BF13]{Be:Pa}
N.~Bertrand and P.~Fournier.
\newblock Parameterized verification of many identical probabilistic timed
  processes.
\newblock In A.~Seth and N.~K. Vishnoi, editors, {\em {IARCS} Annual Conference
  on {FSTTCS}}, volume~24 of {\em LIPIcs}, pages 501--513, 2013.

\bibitem[BHM19]{Bl:Ve}
S.~Bliudze, L.~Henrio, and E.~Madelaine.
\newblock Verification of concurrent design patterns with data.
\newblock In H.~Riis Nielson and E.~Tuosto, editors, {\em COORDINATION 2019},
  volume 11533 of {\em LNCS}, pages 161--181, 2019.

\bibitem[BIS19a]{Bo:Ch}
M.~Bozga, R.~Iosif, and J.~Sifakis.
\newblock Checking deadlock-freedom of parametric component-based systems.
\newblock In T.~Vojnar and L.~Zhang, editors, {\em TACAS 2019, Part II}, volume
  11428 of {\em LNCS}, pages 3--20, 2019.

\bibitem[BIS19b]{Bo:St}
M.~Bozga, R.~Iosif, and J.~Sifakis.
\newblock Structural invariants for parametric verification of systems with
  almost linear architectures, 2019.
\newblock Availalble at: \url{https://arxiv.org/pdf/1902.02696.pdf}.

\bibitem[BJK{\etalchar{+}}16]{Bl:De}
R.~Bloem, S.~Jacobs, A.~Khalimov, I.~Konnov, S.~Rubin, H.~Veith, and J.~Widder.
\newblock Decidability in parameterized verification.
\newblock {\em SIGACT News}, 47(2):53--64, 2016.

\bibitem[BJMS12]{Bo:Mo}
M.~Bozga, M.~Jaber, N.~Maris, and J.~Sifakis.
\newblock Modeling dynamic architectures using dy-bip.
\newblock In T.~Gschwind, F.~De Paoli, V.~Gruhn, and M.~Book, editors, {\em
  SC@TOOLS 2012}, volume 7306 of {\em LNCS}, pages 1--16, 2012.

\bibitem[BMR{\etalchar{+}}96]{Bu:Pa}
F.~Buschmann, R.~Meunier, H.~Rohnert, P.~Sommerlad, and M.~Stal.
\newblock {\em Pattern-Oriented Software Architecture: A System of Patterns}.
\newblock Wiley Software Patterns Series, 1996.

\bibitem[BS08]{Bl:Al}
S.~Bliudze and J.~Sifakis.
\newblock The algebra of connectors - structuring interaction in {BIP}.
\newblock {\em IEEE Trans. Computers}, 57(10):1315--1330, 2008.

\bibitem[CDA16]{Ch:Pa}
M.~Charalambides, P.~Dinges, and G.~Agha.
\newblock Parameterized, concurrent session types for asynchronous multi-actor
  interactions.
\newblock {\em Science of Computer Programming}, 115-116:100--126, 2016.

\bibitem[CGB{\etalchar{+}}10]{Cl:Do}
P.~Clements, D.~Garlan, L.~Bass, J.~Stafford, R.~Nord, J.~Ivers, and R.~Little.
\newblock {\em Documenting Software Architectures: Views and Beyond}.
\newblock Addison-Wesley Professional, 2010.

\bibitem[Cor91]{Co:Bl}
D.~D. Corkill.
\newblock Blackboard systems.
\newblock {\em AI Expert}, 6(9):40--47, 1991.

\bibitem[CST18]{Ci:Fo}
A.~Cimatti, I.~Stojic, and S.~Tonetta.
\newblock Formal specification and verification of dynamic parametrized
  architectures.
\newblock In K.~Havelund, J.~Peleska, B.~Roscoe, and E.~P. de~Vink, editors,
  {\em {FM} 2018}, volume 10951 of {\em LNCS}, pages 625--644, 2018.

\bibitem[Dai12]{Da:Se}
R.~Daigneau.
\newblock {\em Service Design Patterns: Fundamental Design Solutions for
  SOAP/WSDL and RESTful web services}.
\newblock Addison–Wesley, 2012.

\bibitem[DKV09]{Dr:Ha}
M.~Droste, W.~Kuich, and H.~Vogler, editors.
\newblock {\em Handbook of Weighted Automata}.
\newblock EATCS Monographs in Theoretical Computer Science. Springer-Verlag,
  Berlin-Heidelberg, 2009.

\bibitem[DM11]{Dr:Re}
M.~Droste and I.~Meinecke.
\newblock Weighted automata and regular expressions over valuation monoids.
\newblock {\em Internat. J. Found. Comput. Sci.}, 22(8):1829--1844, 2011.

\bibitem[DYBH12]{De:Pa}
P.-M. Deni{\'{e}}lou, N.~Yoshida, A.~Bejleri, and R.~Hu.
\newblock Parameterised multiparty session types.
\newblock {\em Log. Methods Comput. Sci.}, 8(4:6):1--46, 2012.

\bibitem[EFGK03]{Eg:Pu}
P.~Eugster, P.~Felber, R.~Guerraoui, and A.-M. Kermarrec.
\newblock The many faces of {P}ublish/{S}ubscribe.
\newblock {\em ACM Computing Surveys}, 35(2):114--131, 2003.

\bibitem[EGLM16]{Es:Po}
J.~Esparza, P.~Ganty, J.~Leroux, and R.~Majumdar.
\newblock Model checking population protocols.
\newblock In A.~Lal, S.~Akshay, S.~Saurabh, and S.~Sen, editors, {\em 36th
  {IARCS} Annual Conference on {FSTTCS}}, volume~65 of {\em LIPIcs}, pages
  27:1--27:14, 2016.

\bibitem[FAI17]{Fr:Mo}
A.~Francalanza, L.~Aceto, and A.~Ingolfsdottir.
\newblock Monitorability for {H}ennessy-{M}ilner logic with recursion.
\newblock {\em Form. Methods Syst. Des.}, 51:87--116, 2017.

\bibitem[Fou15]{Fo:Th}
P.~Fournier.
\newblock {\em Parameterized Verification of Networks of Many Identical
  Processes}.
\newblock PhD thesis, University of Rennes 1, 2015.
\newblock Available at: \url{https://www.theses.fr/2015REN1S170.pdf}.

\bibitem[GS93]{Ga:An}
D.~Garlan and M.~Shaw.
\newblock An introduction to software architecture.
\newblock In V.~Ambriola and G.~Tortora, editors, {\em Advances in Software
  Engineering and Knowledge Engineering}, volume~2, pages 1--39. World
  Scientific Publishing Company, 1993.

\bibitem[GS11]{Gi:Re}
C.~Di Giusto and J.-B. Stefani.
\newblock Revising glue expressiveness in component-based systems.
\newblock In W.~De Meuter and G.-C. Ronan, editors, {\em COORDINATION 2011},
  volume 6721 of {\em LNCS}, pages 16--30, 2011.

\bibitem[GT19]{Gu:Re}
R.~Guanciale and E.~Tuosto.
\newblock Realisability of pomsets.
\newblock {\em J. Log. Algebr. Methods Program.}, 108:69--89, 2019.

\bibitem[Hea16]{Hu:Fo}
H.~H{\"{u}}ttel and et~al.
\newblock Foundations of session types and behavioural contracts.
\newblock {\em ACM Computing Surveys}, 49(1):3:1--3:36, 2016.

\bibitem[HM85]{He:Al}
M.~Hennessy and R.~Milner.
\newblock Algebraic laws for nondeterminism and concurrency.
\newblock {\em J. ACM}, 32(1):137--161, 1985.

\bibitem[KG06]{Ki:An}
J.~S. Kim and D.~Garlan.
\newblock Analyzing architectural styles with alloy.
\newblock In R.~M. Hierons and H.~Muccini, editors, {\em {ROSATEA} 2006}, pages
  70--80. {ACM}, 2006.

\bibitem[KKW{\etalchar{+}}16]{Ko:Pa}
I.~V. Konnov, T.~Kotek, Q.~Wang, H.~Veith, S.~Bliudze, and J.~Sifakis.
\newblock Parameterized systems in {BIP}: design and model checking.
\newblock In J.~Desharnais and R.~Jagadeesan, editors, {\em CONCUR 2016},
  volume~59 of {\em LIPIcs}, pages 30:1--30:16. Schloss Dagstuhl -
  Leibniz-Zentrum fuer Informatik, 2016.

\bibitem[KN01]{Kh:Au}
B.~Khoussainov and A.~Nerode.
\newblock {\em Automata Theory and its Applications}.
\newblock Birkh\"{a}user Basel, 2001.

\bibitem[KP20]{Ka:We}
V.~Karyoti and P.~Paraponiari.
\newblock Weighted {PCL} over product valuation monoids.
\newblock In S.~Bliudze and L.~Bocchi, editors, {\em COORDINATION 2020}, volume
  12134 of {\em LNCS}, pages 301--319, 2020.

\bibitem[MBBS16a]{Ma:Ar}
A.~Mavridou, E.~Baranov, S.~Bliudze, and J.~Sifakis.
\newblock Architecture diagrams: {A} graphical language for architecture style
  specification.
\newblock In M.~Bartoletti, L.~Henrio, S.~Knight, and H.~T. Vieira, editors,
  {\em {ICE} 2016}, volume 223 of {\em {EPTCS}}, pages 83--97, 2016.

\bibitem[MBBS16b]{Ma:Co}
A.~Mavridou, E.~Baranov, S.~Bliudze, and J.~Sifakis.
\newblock Configuration logics: Modelling architecture styles.
\newblock {\em J. Log. Algebr. Methods Program.}, 86:2--29, 2016.

\bibitem[MG16]{Ma:Sp}
D.~Marmsoler and M.~Gleirscher.
\newblock Specifying properties of dynamic architectures using configuration
  traces.
\newblock In A.~Sampaio and F.~Wang, editors, {\em {ICTAC} 2016}, volume 9965
  of {\em LNCS}, pages 235--254, 2016.

\bibitem[MLS{\etalchar{+}}20]{Ma:Ho}
I.~Malavolta, G.~Lewis, B.~Schmerl, P.~Lago, and D.~Garlan.
\newblock How do you architect your robots? {S}tate of the practice and
  guidelines for {ROS}-based systems.
\newblock In G.~Rothermel and D.{-}H. Bae, editors, {\em ICSE-CEIP '20}, pages
  31--40. ACM, 2020.

\bibitem[MT00]{Me:Cl}
N.~Medvidovic and R.~N. Taylor.
\newblock A classification and comparison framework for software architecture
  description languages.
\newblock {\em {IEEE} Trans. Software Eng.}, 26(1):70--93, 2000.

\bibitem[Nii86]{Ni:Bl}
H.~P. Nii.
\newblock Blackboard systems, part one: The blackboard model of problem solving
  and the evolution of blackboard architectures.
\newblock {\em {AI} Mag.}, 7(2):38--53, 1986.

\bibitem[ORM15]{Ol:AP}
A.~Olivieri, G.~Rizzo, and F.~Morand.
\newblock A {P}ublish-{S}ubscribe approach to {I}o{T} integration: the smart
  office use case.
\newblock In L.~Baroli, M.~Takizawa, F.~Xhafa, T.~Enokido, and J.~H. Park,
  editors, {\em 29th International Conference on Advanced Information
  Networking and Applications Workshops}, pages 644--651. IEEE, 2015.

\bibitem[PJMT17]{Pa:Pu}
S.~Patel, S.~Jardosh, A.~Makwana, and A.~Thakkar.
\newblock {P}ublish/{S}ubscribe mechanism for {I}o{T}: A survey of event
  matching algorithms and open research challenges.
\newblock In N.~Modi, P.~Verma, and B.~Trivedi, editors, {\em ComNet 2016},
  volume 508 of {\em AISC}, pages 287--294. Springer, 2017.

\bibitem[PR17]{Pa:On}
P.~Paraponiari and G.~Rahonis.
\newblock On weighted configuration logics.
\newblock In J.~Proen\c{c}a and M.~Lumpe, editors, {\em FACS 2017}, volume
  10487 of {\em LNCS}, pages 98--116, 2017.

\bibitem[PR20a]{Pa:We}
P.~Paraponiari and G.~Rahonis.
\newblock Weighted propositional configuration logics: A specification language
  for architectures with quantitative features.
\newblock {\em Inform. Comput.}, 2020.

\bibitem[PR20b]{Pi:Ar}
M.~Pittou and G.~Rahonis.
\newblock Architecture modelling of parametric component-based systems.
\newblock In S.~Bliudze and L.~Bocchi, editors, {\em COORDINATION 2020}, volume
  12134 of {\em LNCS}, pages 281--300, 2020.

\bibitem[Res15]{Re:Sh}
A.~Restivo.
\newblock The shuffle product: New research directions.
\newblock In A.{-}H. Dediu, E.~Formenti, C.~Mart{\'{\i}}n{-}Vide, and
  B.~Truthe, editors, {\em LATA 2015}, volume 8977 of {\em LNCS}, pages 70--81,
  2015.

\bibitem[Sak09a]{Sa:El}
J.~Sakarovitch.
\newblock {\em Elements of Automata Theory}.
\newblock Cambridge University Press, 2009.

\bibitem[Sak09b]{Sa:Ra}
J.~Sakarovitch.
\newblock {\em Rational and recognisable power series}, chapter 4 in
  \cite{Dr:Ha}, pages 105--174.
\newblock Springer, 2009.

\bibitem[Sip13]{Si:In}
M.~Sipser.
\newblock {\em Introduction to the Theory of Computation, 3rd edition}.
\newblock Cengage Learning, 2013.

\bibitem[SR14]{St:Cr}
J.~Straub and H.~Reza.
\newblock The use of the blackboard architecture for a decision making system
  for the control of craft with various actuator and movement capabilities.
\newblock In S.~Latifi, editor, {\em ITNG 2014}, pages 514--519. IEEE, 2014.

\bibitem[TG18]{Tu:Se}
E.~Tuosto and R.~Guanciale.
\newblock Semantics of global view of choreographies.
\newblock {\em J. Log. Algebr. Methods Program.}, 95:17--40, 2018.

\bibitem[YZJ{\etalchar{+}}17]{Ya:Pr}
K.~Yang, K.~Zhang, X.~Jia, M.~A. Hasan, and X.~Shen.
\newblock Privacy-preserving attribute-keyword based data publish-subscribe
  service on cloud platforms.
\newblock {\em Inform. Sci.}, 387:116--131, 2017.

\bibitem[ZMJ12]{Zh:To}
K.~Zhang, V.~Muthusamy, and H.-A. Jacobsen.
\newblock Total order in content-based {P}ublish/{S}ubscribe systems.
\newblock In {\em 2012 32nd IEEE International Conference on Distributed
  Computing Systems}, pages 335--344. IEEE, 2012.

\end{thebibliography}

\end{document}